\newcommand{\ignore}[1]{}
\newcommand{\bool}{\mathrm{\bf Bool}}
 \newcommand{\wfcon}{\true}
  \newcommand{\pointt}{\mathbf{Point}}
\newcommand{\sett}{\mathbf{Set}}
\newcommand{\true}{\mathrm{\bf True}} \newcommand{\false}{\mathrm{\bf
    False}} \newcommand{\type}{\mathbf{Type}}
\newcommand{\class}{\mathbf{Class}}
\newcommand{\setof}{\mathbf{SetOf}}
\newcommand{\classof}{\mathbf{ClassOf}}
\newcommand{\carrier}{\mathbf{Carrier}}
 \newcommand{\mem}{\mathbf{Mem}}
\newcommand{\double}[1]{\left\llbracket #1 \right\rrbracket}
\newcommand{\semvalue}[1]{{\mathcal V}_{\Gamma}\left\llbracket #1
  \right\rrbracket}  \newcommand{\subvalue}[2]{{\mathcal
    V}_{#1}\left\llbracket #2 \right\rrbracket}
\newcommand{\deppair}{_{\intype{x\;}{\;\sigma}}\;\tau[x]}
\newcommand{\subxvalue}[1]{{\mathcal
    V}_{\Gamma;\;\intype{x\;}{\;\sigma}}\double{#1}}
\newcommand{\convalue}[1]{{\mathcal V}\left\llbracket #1 \right\rrbracket}
\newcommand{\tempvalue}[1]{\tilde{\mathcal V}_{\Gamma}\double{#1}}
\newcommand{\subtempvalue}[1]{\tilde{\mathcal V}_{\Gamma;\;\intype{x\;}{\;\sigma}}\left\llbracket #1  \right\rrbracket}
\newcommand{\intype}[2]{#1\!:\!#2}
\newcommand{\incat}[2]{#1\!\in\!#2}
\def\unnamed#1#2{
\parbox{2.0in}{
\begin{tabbing}
\hspace{1em}\= #1 \\ \> \parbox{.75in}{\noindent \hrule ~} #2
\end{tabbing}
}}
\def\ant#1{\\ \> $#1$}
 \newcommand{\leftop}{\mathbf{Left}}
\newcommand{\rightop}{\mathbf{Right}} 
\newcommand{\pointify}{\mathbf{Pnt}}
\newcommand{\subpoint}{\mathbf{SubPnt}}
\newcommand{\domop}{\mathbf{Dom}}
\newcommand{\inv}{{-1}}
\newcommand{\rdash}{\vdash}
\begin{document}
\title{Isomorphism within Set-Theoretic Type Theory}
\author{David McAllester}
\affiliation{
  \institution{TTI-Chicago}            
  }
\begin{abstract}
  We provide a treatment of isomorphism within a set-theoretic formulation of dependent type theory.
  Type expressions are assigned their natural set-theoretic compositional meaning.
  Types are divided into small and large types --- sets and proper classes respectively.
  Each proper class, such as ``group'' or ``topological space'', has an associated notion of isomorphism in correspondence with standard definitions.
  Isomorphism is handled by defining a groupoid structure on the space of all definable values.
  The values are simultaneously objects (oids) and morphism --- they are ``morphoids''.
  Soundness is proved for simple and natural inference rules
  for deriving isomorphisms and for the substitution of isomorphics.

\end{abstract}

\maketitle



\section{Introduction}

Unlike classical set theory, a type theoretic foundation for mathematics imposes strict grammatical constraints on the well formed expressions.
These grammatical constraints are central to the concept of isomorphism.
Isomorphism is related to the notion of
an application programming interface (API) in computer software.  An
API specifies what information and behavior an object provides.  Two
different implementations can produce identical behavior when
interaction is restricted to that allowed by the API.  For example
textbooks on real analysis typically start from axioms involving
multiplication, addition, and ordering.  Addition, multiplication and
ordering define an abstract interface --- the well-formed statements
about real numbers are limited to those that can be defined in terms
of the operations of the interface.  The real numbers can be implemented as either Dedekind cuts or as Cauchy sequences.  However,
these implementations provide the same behavior as viewed through the allowed interface --- the two different implementations (or representations) are
isomorphic as ordered fields.
Grammatical well-formedness restricts access
to a particular interface.

{\bf Isomorphism and Dependent Pair Types.} The general notion of isomorphism is best illustrated by
considering dependent pair types.  A dependent pair type is typically written as
$\Sigma_{\intype{x\;}{\;\sigma}}\;\tau[x]$ where the instances of this type are the pairs $(x,y)$
where $x$ is an instance of the type $\sigma$ and $y$ is an instance of the type $\tau[x]$.  A more
transparent notation for this type might be
$\mathbf{PairOf}(\intype{x}{\sigma},\;\intype{y}{\tau[x]})$.  But we will stay with the conventional
notation $\Sigma_{\intype{x\;}{\;\sigma}}\;\tau[x]$.  The type of directed graphs can be written as
$\Sigma_{\intype{\mathcal N\;}{\;\mathbf{Set}}}\; ({\mathcal N} \times {\mathcal N}) \rightarrow \bool$.  The
instances of this type are the pairs $({\mathcal N}, P)$ where ${\mathcal N}$ is the set of nodes of the
graph and $P$ is a binary predicate on the nodes giving the edge relation.  Two directed graphs
$({\mathcal N},P)$ and $({\mathcal M},Q)$ are isomorphic if there exists a bijection from ${\mathcal N}$ to
${\mathcal M}$ that carries $P$ to $Q$.  Some bijections will carry $P$ to $Q$ while others will not.
As discussed in section~\ref{sec:semantics1}, the type ``group'' and the type ``topological space''
can each be written as a subtype of a dependent pair type.

{\bf Observational Equivalence.}  Intuitively, isomorphic objects are
``observationally equivalent'' --- they have the same observable
properties when access to the objects is restricted to those
operations allowed by the type system.  In the type theory developed here this observational
equivalence is expressed by the following inference rule for the substitution of isomorphics.

\centerline{ \unnamed {\ant{\Gamma \rdash
    \intype{\sigma,\tau}{\class}} \ant{\Gamma;\intype{x}{\sigma}
    \rdash \intype{e[x]}{\tau}} \ant{\Gamma \rdash u =_\sigma w}} {
  \ant{\Gamma \rdash e[u] =_\tau e[w]}}}

Here $=_\sigma$ is the isomorphism relation associated with the type
$\sigma$ and similarly for $=_\tau$.  As an example suppose we have
$\intype{G}{\mathbf{Graph}} \rdash \intype{\Phi[G]}{\bool}$.
Intuitively this states that $\Phi[G]$ is a graph-theoretic
property. For $G =_{\mathbf{Graph}} H$ we then have $\Phi[G] \Leftrightarrow \Phi[H]$.
We do not require that $\Phi[G]$ is a first
order formula. For example, $\Phi[G]$ might state that the spectrum of
the graph Laplacian of $G$ contains a gap of size $\lambda$.  As
another example we might have $\intype{G}{\mathbf{Topology}} \rdash
\intype{e[G]}{\mathbf{Group}}$ where $e[G]$ might be an expression
denoting the fundamental group of a topological space.  The
substitution rule then says that isomorphic topological spaces have
isomorphic fundamental groups.

{\bf Voldemort's Theorem.}  There are many situations in mathematics
where a type can be shown to be inhabited (have a member) even though
there is no natural or canonical member of that type.  A minimal
example is that there is no natural or canonical member of an abstract
two element set.  Another intuitive example is that there is no
natural or canonical point on a geometric circle.  A vector space has
no natural or canonical basis (coordinate system).  For every finite
dimensional vector space $V$ there is an isomorphism (a linear
bijection) between $V$ and its dual $V^*$.  However, there is no
natural or canonical isomorphism --- different choices of coordinates
lead to different isomorphisms.  Voldemort's theorem states that if no
natural or canonical element of a type exists then no well-typed
expression can name an element.

{\bf Cryptomorphism.}  Two types $\sigma$ and $\tau$ are cryptomorphic
in the sense of Birkoff and Rota \cite{Rota} if they ``present the
same data''.  For example a group can be defined as a four-tuple of a
set, a group operation, an identity element and an inverse operation
satisfying certain equations.  Alternatively, a group can be defined
as a pair of a set and a group operation such that an identity element
and an inverse elements exist.  These are different types with
different elements (four-tuples vs.$\!$ pairs).  However, these two
types present the same data.  Rota was fond of pointing out the large
number of different ways one can formulate the concept of a
matroid. Any type theoretic foundation for mathematics should account
formally for this phenomenon.  Here we take two types $\sigma$ and
$\tau$ to be crytomorphic if there exist natural functions
$\intype{f}{\sigma \rightarrow \tau}$ and $\intype{g}{\tau \rightarrow
  \sigma}$ such that $f \circ g$ and $g \circ f$ are the identity
functions on $\tau$ and $\sigma$ respectively and where a ``natural function''
from $\sigma$ to $\tau$ is one defined by an expression $e[x]$ where we have
$\Gamma;\;\intype{x}{\sigma} \vdash \intype{e[x]}{\tau}$.

{\bf Naive Semantics vs. MLTT and HoTT.} Martin L\"{o}f type theory (MLTT) \cite{MLTT1971} is a constructive type theory
continuing the intuitionist tradition of Brouwer \cite{Brouwer}.  As is often pointed out, one can add axioms to constructive type theory allowing
classical reasoning.  However, we argue here that an up-front commitment to classical set-theoretic foundations
greatly simplifies the semantics of type theory.

Homotopy type theory (HoTT) \cite{HOTT,SimpSets}, and the earlier groupoid model \cite{GRPD}, give models of MLTT treating isomorphism.
Both models are based on functorial rather than compositional semantics.
In functorial semantics the sequent $\Gamma \vdash \intype{e}{\sigma}$ is interpreted
as a ``dependent functor'' from the groupoid of
interpretations of $\Gamma$ to the family of groupoids that are the
different interpretations of $\sigma$ under different objects (variable interpretations) in the
groupoid denoted by $\Gamma$.
In contrast, naive semantics simply associates each context $\Gamma$ with a set of variable interpretations
satisfying the type declarations and assumptions in $\Gamma$.  Each type expression simply denotes a collection of values.
The meaning of a sequent $\Gamma \vdash \intype{e}{\sigma}$ is
simply that for any variable interpretation $\rho$ satisfying $\Gamma$ we have
$\semvalue{e}\rho \in  \semvalue{\sigma}\rho$ where $\semvalue{e}\rho$ and $\semvalue{\sigma}\rho$ are the naive values
of $e$ and $\sigma$ respectively under the naive variable interpretation $\rho$.

Another difference is that functorial semantics requires that each value be assigned to a specific category.
In contrast, naive semantics allows the same value to be in multiple classes.
An Abelian group is both an Abelian group and a group. A permutation group is both a permutation group and a group.

Naive type theory is designed to make formal mathematics as similar to naturual (naive) mathematics as possible.

\ignore{
{\bf Minimalism and Extensions.}  The version of type theory presented in
this paper is limited to a minimal set of features adequate to
explicate isomorphism in terms of compositional semantics and
value-based groupoid structure.  As briefly discussed in section~\ref{sec:Conclusions},
various important extensions are
possible.  However, explicating them here would distract from the central
ideas.
}

\section{Syntax and Semantics}
\label{sec:semantics1}

Figure~\ref{fig:constructs} lists the constructs of the type theory and figure~\ref{fig:semantics} gives their formal (but naive) semantics.
The
constructs listed in figure~\ref{fig:constructs} correspond to
those of MLTT with the exception that the naive system uses Boolean propositions rather
than propositions as types.  As in MLTT, we write $\sigma \rightarrow \tau$ for the type $\Pi_{\intype{x\;}{\;\sigma}}\;\tau$ where $x$ does not occur free in $\tau$
and we wreite $\sigma \times \tau$ for the type $\Sigma_{\intype{x\;}{\;\sigma}}\;\tau$ with $x$ not free in $\tau$.
We include a third dependent type construct $S_{\intype{x\;}{\;\sigma}}\;\Phi[x]$ which denotes the type of values $x$ in $\sigma$
such that $\Phi[x]$ is true. Absolute equalities (judgemental equalities) $x \doteq y$ have truth values but are not Boolean expressions.
Isomorphism equalities (propositional equalities) $x =_\sigma y$ are Boolean expresions.  This distinction is needed because absolute equalities
do not allow the substitution of isomorphics as is required for the observational equivalence rule stated in the introduction.  This is discussed in more detail below.

\begin{figure}
\begin{framed}
{\small
  $$
  \begin{array}{lccccccc}
    \mbox{pairs} & (e_1,e_2) & \pi_1(e) & \pi_2(e) \\ \\
    \mbox{variables,\;functions} & x & \lambda \intype{x}{\sigma}\;e[x] & f(e) \\ \\
    \mbox{dependent types} & \Sigma_{\intype{x\;}{\;\sigma}}\;\tau[x] & \Pi_{\intype{x\;}{\;\sigma}}\;\tau[x] &  S_{\intype{x\;}{\;\sigma}}\;\Phi[x] \\ \\
    \mbox{Boolean expressions} & P(e) & e_1 =_\sigma e_2 & \neg \Phi \\ \\
      &  \Phi_1 \vee \Phi_2  & \forall \intype{x}{\sigma}\; \Phi[x]  \\ \\
    \mbox{type constants} & \bool & \sett & \class \\ \\ \mbox{contexts} &
    \epsilon & \Gamma;\intype{x}{\tau} & \Gamma;\Phi \\ \\
    \mbox{sequents} & \Gamma \rdash \intype{e}{\sigma} & \Gamma \vdash e_1 \doteq e_2 & \Gamma\rdash \Phi
  \end{array}
  $$ }

\caption{{\bf Constructs of Naive Type Theory}}
  \label{fig:constructs}

\end{framed}
\end{figure}
\begin{figure}
\begin{framed}

{\small
    
\parbox{3.0 in}{(1) $\epsilon$. We define $\convalue{\epsilon}$ to be the
set containing the empty variable interpretation.}

\bigskip
\parbox{3.0in}{ (2) $\bool$. For $\convalue{\Gamma}$ defined and $\rho \in
\convalue{\Gamma}$ we define $\semvalue{\bool}\rho$ to be the set
containing the two values $\true$ and $\false$.}

\bigskip
\parbox{3.0in}{ (3) $\sett$ and $\class$. For $\convalue{\Gamma}$ defined
and $\rho \in \convalue{\Gamma}$ we define $\semvalue{\sett}\rho$ and
$\semvalue{\class}\rho$ to be the collection of all sets and
the collection of all classes respectively (see section~\ref{sec:semantics2}).}

\bigskip
\parbox{3.0in}{ (4) $\Gamma \models \intype{e}{\sigma}$.
If $\convalue{\Gamma}$, $\semvalue{e}$ and $\semvalue{\sigma}$ are defined then
we define $\Gamma \models \intype{e}{\sigma}$ to mean that for $\rho \in \convalue{\Gamma}$ we have
$\semvalue{e}\rho \in \semvalue{\sigma}\rho$.}

\bigskip
\parbox{3.0in}{ (5) $\Gamma;\intype{x}{\tau}$. For $\Gamma \models
\intype{\tau}{\class}$ and $x$ not declared in $\Gamma$ we define
$\convalue{\Gamma;\intype{x}{\tau}}$ to be the set of variable
interpretations of the form $\rho[x \leftarrow v]$ for $\rho \in
\convalue{\Gamma}$ and $v \in \semvalue{\tau}\rho$.}

\bigskip
\parbox{3.0in}{ (6) $\Gamma;\Phi$. For $\Gamma \models \intype{\Phi}{\bool}$
we define $\convalue{\Gamma;\Phi}$ to be the set of all $\rho \in
\convalue{\Gamma}$ such that $\semvalue{\Phi}\rho = \true$.}

\bigskip
\parbox{3.0in}{ (7) $\Gamma \models \Phi$.  For $\Gamma \models \intype{\Phi}{\bool}$
we define $\Gamma \models \Phi$ to mean that for $\rho \in \convalue{\Gamma}$ we have $\semvalue{\Phi}\rho = \true$.}

\bigskip
\parbox{3.0in}{ (8) $x.$ For $x$ declared in $\Gamma$ we define
$\semvalue{x}\rho$ to be $\rho(x)$.}

\bigskip
\parbox{3.0in}{ (9) $\lambda \intype{x}{\sigma}\;e[x]$.  For $\Gamma \models \intype{\sigma}{\sett}$
and $\Gamma;\intype{x}{\sigma} \models \intype{e[x]}{\sett}$ we define
$\semvalue{\lambda \intype{x}{\sigma}\;e[x]}\rho$ to be the function
(set of pairs) mapping $v \in \semvalue{\sigma}\rho$ to
$\subvalue{\Gamma;\intype{x\;}{\;\sigma}}{e[x]}\rho[x \leftarrow v]$.}

\bigskip
\parbox{3.0in}{ (10) If $\semvalue{f}$ and $\semvalue{e}$ are
defined, and for $\rho \in \convalue{\Gamma}$ we have
$\semvalue{f}\rho$ is a function (set of pairs) with $\semvalue{e}\rho$ in its
domain, then $\semvalue{f(e)}\rho$ is defined to be
$\semvalue{f}\rho$ applied to $\semvalue{e}\rho$.}

\bigskip
\parbox{3.0in}{ (11) $(u,w)$. For $\semvalue{u}$ and $\semvalue{w}$ defined,
we define $\semvalue{(u,w)}\rho$ to be
$(\semvalue{u}\rho,\;\semvalue{w}\rho).$}

\bigskip
\parbox{3.0in}{ (12) $\pi_i(e)$. If $\semvalue{e}$ is defined, and for all
$\rho \in \convalue{\Gamma}$ we have that $\semvalue{e}{\rho}$ is a
pair, then $\semvalue{\pi_i(e)}\rho$ is defined to be
$\pi_i(\semvalue{e}\rho).$}

\bigskip
\parbox{3.0in}{ (13) $\Gamma \models s \doteq w$. For $\convalue{\Gamma}$, $\semvalue{s}$ and $\semvalue{w}$ defined
we define $\Gamma \models s \doteq w$ to mean that for $\rho \in \convalue{\Gamma}$ we have 
$\semvalue{s}\rho = \semvalue{w}\rho$.}

\bigskip
\parbox{3.0in}{ (14) $\Phi \vee \Psi$. For $\Gamma \models
\intype{\Phi}{\bool}$ and $\Gamma \models \intype{\Psi}{\bool}$ we
define $\semvalue{\Phi \vee \Psi}\rho$ to be $\semvalue{\Phi}{\rho}
\vee \semvalue{\rho}{\rho}.$}

\bigskip
\parbox{3.0in}{ (15) $\neg \Phi$.  For $\Gamma \models \intype{\Phi}{\bool}$
we define $\semvalue{\neg \Phi}{\rho}$ to be
$\neg\semvalue{\Phi}{\rho}.$}

\bigskip
\parbox{3.0in}{ (16) $\forall\; \intype{x}{\sigma}\;\Phi[x]$. For $\Gamma \models \intype{\sigma}{\class}$ and
$\Gamma;\;\intype{y}{\sigma} \models \intype{\Phi[y]}{\bool}$ we
define $\semvalue{\forall\; \intype{x}{\sigma}\;\Phi[x]}\rho$ to be
$\true$ if for all $v \in \semvalue{\sigma}\rho$ we have
$\subvalue{\Gamma;\intype{y\;}{\;\sigma}}{\Phi[y]}\rho[y \leftarrow v]
= \true$.}

\bigskip\parbox{3.0in}{ (17) $\Sigma_{\intype{x\;}{\;\sigma}} \;\tau[x]$.
For $\Gamma \models \intype{\sigma}{\class}$ and $\Gamma;\intype{x}{\sigma} \models \intype{\tau[x]}{\class}$ we
define $\semvalue{\Sigma_{\intype{x\;}{\;\sigma}} \;\tau[x]}\rho$ to
be the collection of pairs $(v,w)$ with $v \in \semvalue{\sigma}\rho$
and $w \in \subvalue{\Gamma;\;\intype{x\;}{\;\sigma}}{\tau[x]}\rho[x
  \leftarrow v]$.
}

\bigskip
\parbox{3.0in}{ (18) $\Pi_{\intype{x\;}{\;\sigma}} \;\tau[x]$.
For $\Gamma \models \intype{\sigma}{\sett}$ and $\Gamma;\intype{x}{\sigma} \models
\intype{\tau[x]}{\sett}$ we define
$\semvalue{\Pi_{\intype{x\;}{\;\sigma}}\;\tau[x]}\rho$ to be the set
of all functions $f$ with domain $\semvalue{\sigma}\rho$ such that for
all $v \in \semvalue{\sigma}\rho$ we have $f(v) \in
\subvalue{\Gamma;\intype{x\;}{\;\sigma}}{\tau[x]}\rho[x \leftarrow
  v]$.
}

\bigskip
\parbox{3.0in}{ (19) $S_{\intype{x\;}{\;\sigma}}\;\Phi[x]$.
For $\Gamma \models \intype{\sigma}{\class}$ and $\Gamma;\;\intype{y}{\sigma} \models \intype{\Phi[y]}{\bool}$ we
define $\semvalue{S_{\intype{x\;}{\;\sigma}}\;\Phi[x]}\rho$ to be the
collection of all $v \in \semvalue{\sigma}\rho$ such that
$\subvalue{\Gamma;\;\intype{y\;}{\;\sigma}}{\Phi[y]}\rho[y \leftarrow
  v] = \true$.}

\bigskip
\parbox{3.0in}{(20) $s =_\sigma w$. For $\Gamma \models
\intype{\sigma}{\class}$ and $\Gamma \models \intype{s}{\sigma}$ and
$\Gamma \models \intype{w}{\sigma}$ we define $\semvalue{s =_\sigma
  w}\rho$ to be $\true$ if $\semvalue{s}\rho \;
=_{\semvalue{\sigma}\rho}\;\semvalue{w}\rho$ (see the bottom of figure~\ref{fig:classes}).}  }

\caption{{\bf Naive Semantics.}}
\label{fig:semantics}
\end{framed}
\end{figure}

Figure~\ref{fig:semantics} specifies both which expressions are well
formed (grammatical) as well as the meaning of well formed
expressions.  This is done by specifying a partial semantic value
function. A context $\Gamma$ is well formed if and only if
$\convalue{\Gamma}$ is defined.  Similarly, an expression $e$ is well
formed (is grammatical) in context $\Gamma$ if and only if
$\semvalue{e}$ is defined.  For a well formed context $\Gamma$ we have
that $\convalue{\Gamma}$ is the set of interpretations of the
variables declared in $\Gamma$ satisfying both the type declarations
and the Boolean assumptions in $\Gamma$. If $\semvalue{e}$ is defined then
for any $\rho \in \convalue{\Gamma}$ we have that $\semvalue{e}\rho$ is the value of the
expression $e$ under variable interpretation $\rho$.
As an example, clause (5) specifies when the
context $\Gamma;\;\intype{x}{\sigma}$ is well formed.  This context is well formed --- its set of variable interpretations is defined ---
if $\semvalue{\sigma}$ is defined, $x$ is not already declared in $\Gamma$, and for $\rho \in \convalue{\Gamma}$ we have that
$\semvalue{\sigma}\rho$ is a class (sets are also classes).

\ignore{
We take all values to be tagged
with one of five tags classifying every value as either a point (ur-element), a Boolean value ($\true$ or $\false$), a pair, a collection (type) or a function.
The tags ensure that one can always distinguish, say, a pair from a collection.  A pair is defined by its first and second components, a collection is defined by its set of members, and
a function is defined by its set of input-output pairs. When we write $x \in \sigma$, as in clause (4), we mean that $\sigma$ is a collection containing $x$.
}

The semantic definitions are recursive
but recursive calls ultimately involve smaller expressions --- the definitions
are well founded by eventual reduction of the size of the syntactic expressions
involved. The base-case meaning of the constants $\sett$ and $\class$ is not fully specified in clause (3).
The sets and classes are required to satisfy certain ``formation invariants'' defined in terms of the points (ur-elements) contained within them.
These formation invariants are specified in section~\ref{sec:semantics2}.  The appendix proves that the other set and class expressions
whose meaning is defined in figure~\ref{fig:semantics} also satisfy these formation invariants --- the properties are invariants of the process of forming new sets and classes.
The formation invariants are needed to define the isomorphism relation (the value-based groupoid) associated with each class.

Other than clauses (3) and (20) the semantics is completely naive
and rather obvious.  If we were not concerned with treating isomorphism in clause (20) then there would be no need for
formation invariants and sets and classes could be defined in the usual way.

The clauses in figure~\ref{fig:semantics} rely on context to
distinguish use from mention.  For example, we sometimes write $\Phi
\vee \Psi$ for a disjunctive expression --- this is a {\em mention} of
the symbol $\vee$.  Other times we write $\Phi \vee \Psi$ for the
truth value which is the disjunction of the truth values $\Phi$ and
$\Psi$ --- this is a {\em use} of the semantic disjunction operation.
In the equation
$$\semvalue{\Phi \vee \Psi}\rho = \semvalue{\Phi}\rho \vee
\semvalue{\Psi}\rho$$ the left hand side mentions the disjunction
symbol while the right hand side uses the semantic disjunction
operation.  Another example is the mention and use of pairing and projections in clauses (11) and (12).

We will write $\exists\intype{x}{\sigma}\;\Phi[x]$ as an abbreviation for
$\neg \forall\intype{x}{\sigma} \;\neg \Phi[x]$ and write $\Phi \wedge \Psi$, $\Phi \Rightarrow \Psi$, and $\Phi \Leftrightarrow \Psi$
as abbreviations for Boolean expressions built from disjunction and negation.
We can define the expression $\true$
to be $\forall \intype{P}{\bool} \;P \vee \neg P$.

The semantics given in figure~\ref{fig:semantics} specifies the semantic entailment relation $\models$.
Clauses (3) and (4) imply $\epsilon \models \intype{\sett}{\class}$ and clause (5) then implies that the
context $\epsilon;\;\intype{\alpha}{\sett}$ is well formed and that
$\convalue{\epsilon;\;\intype{\alpha}{\sett}}$
is the set of variable interpretations $\rho$ defined on the single
variable $\alpha$ such that $\rho(\alpha)$ is set.  Clause (4) and (8) then imply $\epsilon;\;\intype{\alpha}{\sett} \models \intype{\alpha}{\sett}$.
We can continue in this way to show
$$\intype{\alpha}{\sett};\;\intype{f}{\alpha \rightarrow \alpha};\;\intype{c}{\alpha} \;\;\models\;\;\intype{f(f(c))}{\alpha}$$
and
$$\intype{\alpha}{\sett};\;\intype{f}{\alpha \rightarrow \alpha};\;\intype{c}{\alpha};\;\forall \intype{x}{\alpha}\;f(x) =_\alpha x\;\; \models\;\; f(f(c)) =_\alpha c.$$

There are also semantic entailments that are consequences of subtle properties of the definition for $\sett$, $\class$ and $=_\sigma$
given in section~\ref{sec:semantics2}. For example we have
$$\intype{\alpha}{\sett};\;\intype{x}{\alpha};\;\intype{y}{\alpha};\;x =_\alpha y  \models x \doteq y.$$
However, this only holds for sets.  For example
$$\intype{\alpha}{\sett};\;\intype{\beta}{\sett};\;\alpha=_{\sett} \beta \not \models \alpha \doteq \beta$$
and
$$\intype{\alpha}{\sett};\;\intype{\beta}{\sett} \not \models \intype{(\alpha \doteq \beta)}{\bool}.$$
The second non-entailment is needed because $\alpha \doteq \beta$ and $\beta =_{\sett} \gamma$, which states that $\beta$ and $\gamma$ have the same cardinality,
does not imply $\alpha \doteq \gamma$.
Hence the equation $\alpha \doteq \beta$ does not allow the substitution of an isomorphic for $\beta$ as is required for Boolean expressions.
This subtlety is incorporated into the inference rules of
section~\ref{sec:rules}.

We have the
following examples of class expressions.
\begin{eqnarray*}
  \mathbf{Magma} & \equiv & \Sigma_{\intype{\alpha\;}{\;\sett}}\;(\alpha \times \alpha) \rightarrow \alpha \\
  \mathbf{Group} & \equiv & S_{\intype{G\;}{\;\mathbf{Magma}}}\;\Phi[G] \\
  \mathbf{HyperGraph} & \equiv &  \Sigma_{\intype{\alpha\;}{\;\sett}}\;(\alpha \rightarrow \bool) \rightarrow \bool \\
  \mathbf{Topology} & \equiv & S_{\intype{X\;}{\;\mathbf{HyperGraph}}}\;\Psi[X]
\end{eqnarray*}

\ignore{
  One can align certain clauses in figure~\ref{fig:semantics}
with ``formation rules'' in section~\ref{sec:rules}.  For example, we can align clause (17), which states the semantics of
dependent pair types, with the following formation rule (the second rule of the fourth row of figure~\ref{fig:Pairs}).

\centerline{\unnamed
{\ant{\Gamma \vdash \intype{\sigma}{\class}}
  \ant{\Gamma;\;\intype{x}{\sigma} \vdash \intype{\tau[x]}{\class}}}
{\ant{\Gamma \vdash \intype{\left(\Sigma_{\intype{x\;}{\;\sigma}}\;\tau[x]\right)}{\class}}}}
}

\begin{figure}
\begin{framed}

  {\small
~ \hfill
$\epsilon \vdash \intype{\bool}{\mathbf{Set}}$
\hfill
$\epsilon \rdash \intype{\sett}{\class}$
\hfill
\unnamed
    {\ant{\Gamma \vdash \intype{\sigma}{\mathbf{Set}}}}
    {\ant{\Gamma \vdash \intype{\sigma}{\mathbf{Class}}}}
\hfill ~

~ \hfill \unnamed {\ant{\Gamma \vdash \intype{\sigma}{\class}}
  \ant{\mbox{$x$ not declared in $\Gamma$}}}
{\ant{\Gamma;\;\intype{x}{\sigma} \vdash \wfcon}}
\hfill \unnamed
{\ant{\Gamma \vdash \intype{\Phi}{\bool}}} {\ant{\Gamma;\Phi \vdash
    \wfcon}}
\hfill \unnamed{ \ant{\Gamma;\Theta \vdash \wfcon}}
       {\ant{\Gamma;\Theta \vdash \Theta}} \hfill ~
       
~ \hfill \unnamed{\ant{\Gamma;\Theta \vdash \wfcon} \ant{\Gamma \vdash \Psi}}
{\ant{\Gamma;\Theta \vdash \Psi}}
\hfill \unnamed{\ant{\Gamma \vdash \intype{\Phi}{\bool}}}
{\ant{\Gamma \vdash \intype{\neg \Phi}{\bool}}}
 \unnamed{\ant{\Gamma \vdash \intype{\Phi}{\bool}} \ant{\Gamma
        \vdash \intype{\Psi}{\bool}}} {\ant{\Gamma \vdash
     \intype{(\Phi \vee \Psi)}{\bool}}}
\hfill ~
       
\unnamed {\ant{\Gamma \vdash \intype{\tau}{\class}}
  \ant{\Gamma \vdash \intype{w}{\tau}} \ant{\Gamma \vdash
    \intype{u}{\tau}}} {\ant{\Gamma \vdash \intype{(w
      =_\tau u)}{\bool}}}
\hspace{3ex}
\unnamed {\ant{\Gamma \vdash \intype{\Phi}{\bool}}
              \ant{\Gamma \vdash \intype{\Psi}{\bool}}
              \ant{\Gamma;\Phi \vdash \Psi} \ant{\Gamma;\neg \Phi
                \vdash \Psi}} {\ant{\Gamma \vdash \Psi}}\newline

\unnamed
  {\ant{\Gamma \vdash \intype{\tau}{\class}}
              \ant{\Gamma;\;\intype{x}{\tau} \vdash
                \intype{\Phi[x]}{\bool}}} {\ant{\Gamma \vdash
    \intype{(\forall\intype{x}{\tau}\;\Phi[x])}{\bool}}}
\hspace{3ex} \unnamed {\ant{\Gamma;\intype{x}{\sigma} \vdash
                \intype{\Phi[x]}{\bool}} \ant{\Gamma;
    \intype{x}{\sigma} \vdash \Phi[x]}}
       {\ant{\Gamma \vdash \forall \intype{x}{\sigma}\;\Phi[x]}}

\unnamed {\ant{\Gamma \vdash \forall \intype{x}{\sigma}
                \;\Phi[x]} \ant{\Gamma \vdash \intype{e}{\sigma}}}
       {\ant{\Gamma \vdash \Phi[e]}}
\hfill\unnamed {\ant{\Gamma \vdash \intype{\Phi}{\bool}}
              \ant{\Gamma \vdash \intype{\Psi}{\bool}} \ant{\Gamma
                \vdash \Phi}} {\ant{\Gamma \vdash \Phi \vee \Psi}
              \ant{\Gamma \vdash \Psi \vee \Phi} \ant{\Gamma \vdash
                \neg \neg \Phi}}
\hfill \unnamed {\ant{\Gamma \vdash    \intype{\Phi}{\bool}}
  \ant{\Gamma \vdash \intype{\Psi}{\bool}}
  \ant{\Gamma \vdash \neg \Psi}
  \ant{\Gamma \vdash \neg \Phi}}
       {\ant{\Gamma \vdash \neg(\Phi \vee \Psi)}}

\unnamed {\ant{\Gamma \vdash \intype{\tau}{\sett}}
  \ant{\Gamma;\;\intype{x}{\tau} \vdash \intype{\Phi[x]}{\bool}}}
    {\ant{\Gamma \vdash
        \intype{\left(S_{\intype{x\;}{\;\tau}}\;\Phi[x]\right)}{\sett}}}
\hspace{3ex} \unnamed {\ant{\Gamma \vdash \intype{\tau}{\class}}
      \ant{\Gamma;\;\intype{x}{\tau} \vdash \intype{\Phi[x]}{\bool}}}
           {\ant{\Gamma \vdash
               \intype{\left(S_{\intype{x\;}{\;\tau}}\;\Phi[x]\right)}{\class}}}

\unnamed {\ant{\Gamma \vdash
               \intype{\left(S_{\intype{x\;}{\;\tau}}\;\Phi[x]\right)}{\class}}
             \ant{\Gamma \vdash \intype{e}{\tau}} \ant{\Gamma \vdash
               \Phi[e]}} {\ant{\Gamma \vdash
               \intype{e}{\left(S_{\intype{x\;}{\;\tau}}\;\Phi[x]\right)}}}
           \hspace{3ex} \unnamed {\ant{\Gamma \vdash
               \intype{e}{\left(S_{\intype{x\;}{\;\tau}}\;\Phi[x]\right)}}}
                  {\ant{\Gamma \vdash \intype{e}{\tau}} \ant{\Gamma
                      \vdash \Phi[e]}}

}

\caption{{\bf Structural Rules, Boolean Rules and Subtypes}}
\label{fig:BooleanRules}
\end{framed}
\end{figure}

\begin{figure}
\begin{framed}
{\small
~ \hfill \unnamed {\ant{\Gamma \vdash \intype{e}{\tau}}} {\ant{\Gamma
      \vdash e =_\tau e}}
  \hfill \unnamed {\ant{\Gamma \vdash u=_\tau
      w}} {\ant{\Gamma \vdash w=_\tau u}}
  \hfill \unnamed {\ant{\Gamma
      \vdash u=_\tau w} \ant{\Gamma \vdash w=_\tau s}} {\ant{\Gamma
      \vdash u=_\tau s}} \hfill ~

~ \hfill \unnamed {\ant{\Gamma \vdash \intype{e}{\tau}}} {\ant{\Gamma
      \vdash e \doteq e}}
  \hfill \unnamed {\ant{\Gamma \vdash u\doteq
      w}} {\ant{\Gamma \vdash w\doteq u}}
  \hfill \unnamed {\ant{\Gamma
      \vdash u\doteq w} \ant{\Gamma \vdash w\doteq s}} {\ant{\Gamma
      \vdash u\doteq s}} \hfill ~

~ \hfill \unnamed {\ant{\Gamma\;\vdash \Theta[u]} \ant{\Gamma \vdash u
      \doteq w}} {\ant{\Gamma \vdash \Theta[w]}}
  \hfill \unnamed
  {\ant{\Gamma \vdash \intype{\sigma,\tau}{\class}}
    \ant{\Gamma;\;\intype{x}{\sigma} \vdash \intype{e[x]}{\tau}}
    \ant{\Gamma \vdash w =_\sigma u}} {\ant{\Gamma \vdash e[w] =_\tau
      e[u]}}
  \hfill \unnamed
         {\ant{\Gamma \vdash \intype{\tau}{\mathbf{Set}}}
           \ant{\Gamma \vdash u =_\tau w}}
         {\ant{\Gamma \vdash u \doteq w}}
\hfill ~

\unnamed {\ant{\Gamma \vdash \intype{\sigma}{\sett}}
  \ant{\Gamma;\;\intype{x}{\sigma} \vdash \intype{\tau[x]}{\sett}}}
         {\ant{\Gamma \vdash
      \intype{\left(\Sigma_{\intype{x\;}{\;\sigma}}\;\tau[x]\right)}{\sett}}}
  \hspace{3ex}
  \unnamed {\ant{\Gamma \vdash \intype{\sigma}{\class}}
    \ant{\Gamma;\;\intype{x}{\sigma} \vdash \intype{\tau[x]}{\class}}}
         {\ant{\Gamma \vdash
             \intype{\left(\Sigma_{\intype{x\;}{\;\sigma}}\;\tau[x]\right)}{\class}}}

\medskip

\unnamed {\ant{\Gamma \vdash
    \intype{\left(\Sigma_{\intype{x\;}{\;\sigma}}
      \tau[x]\right)}{\class}} \ant{\Gamma \vdash \intype{u}{\sigma}}
  \ant{\Gamma \vdash \intype{w}{\tau[u]}}} {\ant{\Gamma \vdash
    \intype{(u,w)}{\left(\Sigma_{\intype{x\;}{\;\sigma}}\;
      \tau[x]\right)}} \ant{\Gamma \vdash \pi_1((u,w)) \doteq u}
  \ant{\Gamma \vdash \pi_2((u,w)) \doteq w}}
\hspace{3ex}
\unnamed
{\ant{\Gamma \vdash \intype{p}{\left(\Sigma_{\intype{x\;}{\;\sigma}}
      \;\tau[x]\right)}}} {\ant{\Gamma \vdash
    \intype{\pi_1(p)}{\sigma}} \ant{\Gamma \vdash
    \intype{\pi_2(p)}{\tau[\pi_1(p)]}} \ant{\Gamma \vdash p \doteq
    (\pi_1(p),\;\pi_2(p))}}

\medskip
\unnamed
{\ant{\Gamma \vdash \intype{\sigma}{\sett}}
  \ant{\Gamma;\;\intype{x}{\sigma} \vdash \intype{\tau[x]}{\sett}}}
{\ant{\Gamma \vdash \intype{\left(\Pi_{\intype{x\;}{\;\sigma}}\;\tau[x]\right)}{\sett}}}
\hspace{3ex}
\unnamed
{\ant{\Gamma \vdash \intype{\left(\Pi_{\intype{x\;}{\;\sigma}}\;\tau[x]\right)}{\sett}}
  \ant{\Gamma;\;\intype{x}{\sigma} \vdash \intype{e[x]}{\tau[x]}}}
{\ant{\Gamma \vdash
           \intype{\left(\lambda\;\intype{x}{\sigma}\;e[x]\right)}{\left(\Pi_{\intype{x\;}{\;\sigma}}\;\tau[x]\right)}}}

\medskip
\unnamed {\ant{\Gamma \vdash
    \intype{\left(\lambda\;\intype{x}{\sigma}\;e[x]\right)}{\left(\Pi_{\intype{x\;}{\;\sigma}}\;\tau[x]\right)}}
  \ant{\Gamma \vdash \intype{u}{\sigma}}} {\ant{\Gamma \vdash
    \left(\left(\lambda\;\intype{x}{\sigma}\;e[x]\right)\;u\right)
    \doteq e[u]}}
\hspace{3ex}
\unnamed {\ant{\Gamma \vdash
    \intype{f}{\left(\Pi_{\intype{x\;}{\;\sigma}}\;\tau[x]\right)}}
  \ant{\Gamma \vdash \intype{u}{\sigma}}} {\ant{\Gamma \vdash
    \intype{f(u)}{\tau[u]}}}

\medskip
\unnamed {\ant{\Gamma \vdash
    \intype{f,g}{\Pi_{\intype{x\;}{\;\sigma}}\;\tau[x]}}
  \ant{\Gamma;\;\intype{x}{\sigma} \vdash f(x) \doteq g(x)}}
    {\ant{\Gamma \rdash f \doteq g}}
\hspace{3ex}
\unnamed {\ant{\Gamma
        \vdash
        \intype{\left(\Pi_{\intype{x\;}{\;\sigma}}\;\tau[x]\right)}{\sett}}
      \ant{\Gamma \vdash  \forall \intype{x}{\sigma}\;\exists
        \intype{y}{\tau[x]}\; \Phi[x,\;y]}}
         {\ant{\Gamma \vdash \begin{array}{l} \exists \intype{f}{\left(\Pi_{\intype{x\;}{\;\sigma}}\;\tau[x]\right)} \\\;\;\;\;\forall \intype{x}{\sigma}\;\Phi[x,f(x)] \end{array}}}
}         
  
\caption{{\bf Equality Rules, Pair Types, Function Types,
    Extensionality and Choice}}
\label{fig:Pairs}
\end{framed}
\end{figure}

\begin{figure}
  \begin{framed}
    {\small
$$\mathbf{Bijection}[u,v] \equiv S_{\intype{f\;}{\;u \rightarrow v}}\;\forall\;\intype{x,y}{u}\; f(x) =_v f(y) \Leftrightarrow x =_u y.$$
\unnamed {\ant{\Gamma \rdash \intype{u,v}{\mathbf{Set}},
    \intype{f}{\mathbf{Bijection}[u,v]}}
  \ant{\Gamma;\;\intype{\alpha}{\sett} \rdash
    \intype{\tau[\alpha]}{\sett}}} {\ant{\Gamma \vdash
    \intype{\mathbf{Carrier}(u,v,f,(\lambda\;\intype{\alpha}{\sett}\;\tau[\alpha]))}{\mathbf{Bijection}(\tau[u],\tau[v])}}
  \ant{}
  \ant{\Gamma \rdash \begin{array}{l} \forall \;\intype{h}{\tau[u]} \\
                                    \;\;(u,h) \;=_{\Sigma_{\intype{\alpha\;}{\;\sett}}\;\tau[\alpha]}\; (v,\mathbf{Carrier}(u,v,f,(\lambda\;\intype{\alpha}{\sett}\;\tau[\alpha]))(h)) \end{array}}}

\parbox{3.0in}{
\unnamed {\ant{\Gamma \rdash \intype{u,v}{\mathbf{Set}},
    \intype{f}{\mathbf{Bijection}[u,v]}}} {\ant{\Gamma \rdash
    \mathbf{Carrier}(u,v,f,(\lambda \intype{\alpha}{\sett}\;\alpha))
    \doteq f}}} \hfill ~

\parbox{3.0in}{ \unnamed {\ant{\Gamma \rdash \intype{u,v}{\mathbf{Set}},
    \intype{f}{\mathbf{Bijection}[u,v]}} \ant{\Gamma \rdash
    \intype{w}{\sett}}} {\ant{\Gamma \rdash
    \mathbf{Carrier}(u,v,f,(\lambda \intype{\alpha}{\sett}\;w)) \doteq
    (\lambda \intype{x}{w}\;x)}}}

\parbox{3.0in}{
\unnamed
{\ant{\Gamma \rdash\mathbf{Carrier}(u,v,f,(\lambda\;\intype{\alpha}{\sett}\;\tau[\alpha])) \doteq g}
  \ant{\Gamma \rdash\mathbf{Carrier}(u,v,f,(\lambda\;\intype{\alpha}{\sett}\;\sigma[\alpha])) \doteq h}}
{\ant{\Gamma \rdash  \begin{array}{l}
      \mathbf{Carrier}(u,v,f,(\lambda\intype{\alpha}{\sett}\;\tau[\alpha]
      \times \sigma[\alpha])) \\ \doteq \lambda\intype{x}{(\tau[u]
        \times \sigma[u])}\;(g(\pi_1(x)),h(\pi_2(x)))
    \end{array}
    }}}

\parbox{3.0in}{ \unnamed {\ant{\Gamma
    \rdash\mathbf{Carrier}(u,v,f,(\lambda\;\intype{\alpha}{\sett}\;\tau[\alpha]))
    \doteq g} \ant{\Gamma
    \rdash\mathbf{Carrier}(u,v,f,(\lambda\;\intype{\alpha}{\sett}\;\sigma[\alpha]))
    \doteq h} \ant{\Gamma \rdash\intype{k}{\tau[u] \rightarrow
      \sigma[u]}} \ant{\Gamma \rdash \intype{a}{\tau[u]}}}
       {\ant{\Gamma \rdash
           \mathbf{Carrier}(u,v,f,(\lambda\intype{\alpha}{\sett}\;\tau[\alpha]
           \rightarrow \sigma[\alpha]))(k)(g(a)) \doteq h(k(a))}}
      }
  
\parbox{3.0in}{ \unnamed {\ant{\Gamma \vdash
    \intype{a,b}{S_{\intype{x\;}{\;\sigma}} \Phi[x]}} \ant{\Gamma
    \vdash a =_\sigma b}} {\ant{\Gamma \vdash a
    =_{\left(S_{\intype{x\;}{\;\sigma}}\;\Phi[x]\right)}\;\;b}}}
    }
    
\caption{{\bf Isomorphism Rules}}
\label{fig:IsoRules}
\end{framed}
\end{figure}

\section{Inference Rules}
\label{sec:rules}

The rules in figures~\ref{fig:BooleanRules}, \ref{fig:Pairs} and
\ref{fig:IsoRules} define a formal proof-theoretic system.  A sequent
$\Gamma \vdash \Theta$ is called {\em valid} if we have $\Gamma
\models \Theta$ as defined in figure~\ref{fig:semantics}.  An
inference rule is {\em sound} if the validity of the antecedents
(the sequents above the line) imply the validity of the conclusion.
Soundness of the inference rules under the semantics of
figure~\ref{fig:semantics} is proved in section~\ref{sec:Soundness}.

Figure~\ref{fig:BooleanRules} gives structural rules and rules
for Boolean expressions and subtypes.  A sequent of the form $\Gamma \vdash \true$ expresses the
statements that $\Gamma$ is well-formed, i.e., that
$\convalue{\Gamma}$ is defined. A rule with multiple conclusions, such
as the second rule of the fourth row, abbreviates multiple rules each
with the same antecedents but with a separate rule for each
conclusion.  Other rules should be self explanatory and justified by
the semantics in figure~\ref{fig:semantics}.

Figure~\ref{fig:Pairs} gives equality rules, rules for pairs and
functions, and the (nonconstructive) axiom of choice.  The axiom of choice (the last rule
of the last row) is restricted to sets. \ignore{We note that choice fails for
proper classes.  To see this let $\mathbf{Two}$ be the class of all two element
sets.  We have $\forall \intype{\alpha}{\mathbf{Two}}\; \exists \intype{x}{\alpha}\;\true$
but there does not exist any functor in $\Pi_{\intype{\alpha\;}{\;\mathbf{Two}}}\;\alpha$.}

Figure~\ref{fig:IsoRules} gives the inference rules for deriving
isomorphism relationships at pair types of the form
$\Sigma_{\intype{\alpha\;}{\;\sett}}\;\tau[\alpha]$.  Intuitively, for sets $u$ and $v$ we
have $(u,h)$ is isomorphic to $(v,g)$ as a member of $\Sigma_{\intype{\alpha\;}{\;\sett}}\;\tau[\alpha]$  if there exists a bijection $f$
from $u$ to $v$ that ``carries'' $h$ to $g$.    The carring operation is a bijection from $\tau[u]$ to $\tau[v]$.
The first rule of figure~\ref{fig:IsoRules} has two conclusions the first of which
acts as a formation rule for the carring operation.
The carrying operation is detrmined by the sets $u$ and $v$, the bijection $f$ between them, and the mapping
from a set $\alpha$ to the set $\tau[\alpha]$ which we can write as
$\lambda\;\intype{\alpha}{\sett}\;\tau[\alpha]$.
The second conclusion of the first rule states that the carrying operation yields an isomorphic pair.

The remaining rules in figure~\ref{fig:IsoRules} define the
carrying operation in the case where $\tau[\alpha]$ is a simple type
--- one built from $\alpha$, type expressions not involving $\alpha$,
and simple pair and function types.  The second and third rule give
the base cases for $\tau[\alpha] = \alpha$ and for $\tau[\alpha] = w$
with $w$ not containing $\alpha$.  The fourth and fifth rules define
the carrying relation at simple pair types and simple function types
respectively.  The final rule handles subtypes. These rules can be combined with the inference rule for the substitution of isomorphics to yield a wide variety of ismorphism equations.
The general semantics of
carrying and proof of the soundness of these rules is given in section~\ref{sec:Soundness}.

\section{Morphoids}
\label{sec:semantics2}

In the semantics developed here a class $\sigma$ is just a collection of
values with no auxiliary information about isomorphisms.  But in this case how do
we define $x =_\sigma y$?  The fundamental idea is value-based
groupoid structure.  We define the values so that for every
value $x$ we also have values $\leftop(x)$, $\rightop(x)$ and
$x^\inv$. Here $x$ itself is viewed as an isomorphism between
$\leftop(x)$ and $\rightop(x)$.  For $\rightop(x) = \leftop(y)$ we
also define the composition $x \circ y$.  These operations are defined
in figure~\ref{fig:values}.  The operations are defined independent
of any type containing the values involved.  These operations satisfy
algebraic properties of a groupoid.  Any collection of values that is
closed under these operations then forms a (value-based) groupoid.
The values are simultaneously objects (oids) and morphisms --- they
are {\bf morphoids}.

{\bf A Grothendieck Universe.} To construct a space of morphoid values
we assume a Grothendieck universe $U$ --- a standard model of set theory.
All of the set-forming operations allowed in set theory can be carried out
within a single Grothendieck universe.  A Grothendiek universe is assumed to be ``full''
in the sense that if $U$ contains a set $\sigma$ then $U$ also contains all the (true Platonic) subsets of $\sigma$.

{\bf Tagged Values.} All morphoid values are tagged with one of five tags classifying each value as either a
Boolean value, a point, a pair, a collection (set or class) or a function. We will write the
Boolean values as $\true$ and $\false$, write points as
$\pointt(i,j)$, write pairs as $(x,y)$, write collections using set notation
$\{\ldots\}$ and write functions as sets of input-output pairs $\{x \mapsto y, \;\ldots\}$.
Pair values are defined by their two
components, collections are defined by their members, and
function values are defined by their input-output pairs.

{\bf Morphoid Points.} Morphoids are built from morphoid points.
Morphoid points can be thought of as structured ur-elements
of set theory. A morphoid point is written as
$\pointt(i,j)$ where $i$ and $j$ are arbitrary elements of $U$.  We
call $i$ the left index and $j$ the right index.  We define
$\leftop(\pointt(i,j)) = \pointt(i,i)$,
$\rightop(\pointt(i,j)) = \pointt(j,j)$,
$\pointt(i,j)^\inv = \pointt(j,i)$ and $\pointt(i,j) \circ \pointt(j,k) = \pointt(i,k)$.  These operations on points
satisfy the groupoid properties listed in figure~\ref{fig:GroupOps}.

It turns out that the groupoid operations on points can be extended to
all values built from points in a way that satisfies the groupoid axioms provided that we require that the members of the
constant $\sett$ satisfy the ``formation
invariant'' for sets stated in figures~\ref{fig:values}
that (hereditarily) every set is bijective --- no two elements have
the same left value or the same right value.

Figure~\ref{fig:values} defines the morphoid values and the
groupoid operations.  It starts by defining templates.  A template is
an expression specifying structure.  A template can be viewed as an
abstract type expression specifying where points occur in a value.
For example, we can define a group to be a pair of a set and a binary operation on that set
such that an identity element and an inverse operation exist satisfying the algebraic properties
of a group.  An abstract group is a group whose elements are points. For an abstract group $G$ we have
$$\intype{G}{(\mathbf{SetOf}(\pointt) \times (\pointt \times \pointt \rightarrow \pointt))}.$$
The right hand side of the above expression is a template as defined by the template grammar
at the top of figure~\ref{fig:values}.  Of course not all groups have points as group elements.
Group representations, such as permutation groups, or groups of linear operators, are also groups.
This is discussed in more detail in the discussion of figure~\ref{fig:Abstraction}.
\begin{figure}
\begin{framed}

  \parbox{3.0in}{
    \centerline{\bf Templates}
    \medskip
    {\bf A template} is an expression generated by the following grammar.}
$${\mathcal T} ::= \bool \;|\; \pointt \;|\; \mathbf{SetOf}({\mathcal T})
\;|\; {\mathcal T}_1 \times {\mathcal T}_2 \;|\; {\mathcal T}_1 \rightarrow {\mathcal
  T}_2$$
\parbox{3.0in}{For $x \in U$ and template ${\mathcal T}$ we define $\intype{x}{{\mathcal T}}$
by the following clauses.}

\medskip
\begin{itemize}
\item $\intype{x}{\pointt}$ if $x$ is a point $\pointt(i,j)$.
\item $\intype{\Phi}{\bool}$ if $\Phi$ is a Boolean value.
\item $\intype{(x,y)}{{\mathcal T}_1 \times {\mathcal T}_2}$ if
  $\intype{x}{{\mathcal T}_1}$ and $\intype{y}{{\mathcal T}_2}$.
\item $\intype{\sigma}{\mathrm{SetOf}({\mathcal T})}$ if for all $x \in
  \sigma$ we have $\intype{x}{{\mathcal T}}$.
\item $\intype{f}{{\mathcal T}_1 \rightarrow {\mathcal T}_2}$ if $f$ is any
  (possibly non-functional) set of input-output pairs such that for
  $(x \mapsto y) \in f$ we have $\intype{x}{{\mathcal T}_1}$ and
  $\intype{y}{{\mathcal T}_2}$.
\end{itemize}

\medskip
\centerline{\bf Weak Values}
\medskip
\parbox{3.0in}{{\bf A weak value} is an element $x$ of $U$ such that there exists a
template ${\mathcal T}$ with $\intype{x}{\mathcal T}$.  For a weak function
value $f$ we write $\domop(f)$ for the set of input values in the
pairs of $f$.}

\bigskip
\centerline{\bf The Groupoid Operations}
\medskip
\parbox{3.0in}{For a weak value $x$ we define $\leftop(x)$ to be the result of
replacing each point $\pointt(i,j)$ within $x$ by $\pointt(i,i)$.  This can be
defined recursively as follows.}
\begin{eqnarray*}
\leftop(\Phi) & = & \Phi \\ \leftop(\pointt(i,j)) & = & \pointt(i,i)
\\ \leftop((x,y)) & = & (\leftop(x),\leftop(y)) \\ \leftop(\sigma) & =
& \{\leftop(x),\;x \in \sigma\} \\ \leftop(f) & = & \{\leftop(x)
\mapsto \leftop(y), \; (x \mapsto y) \in f\}
\end{eqnarray*}
\parbox{3.0in}{$\rightop(x)$ similarly replaces each point $\pointt(i,j)$ in $x$ by
$\pointt(j,j)$ and $x^\inv$ replaces each point $\pointt(i,j)$ by
$\pointt(j,i)$.  For weak values $x$ and $y$ we have that $x
\circ y$ is defined when $\rightop(x) = \leftop(y)$ in which case we
define $x \circ y$ by the following rules.}
\begin{eqnarray*}
\Phi \circ \Phi & = & \Phi \\ \pointt(i,j) \circ \pointt(j,k) & = &
\pointt(i,k) \\ (x,y) \circ (z,w) & = & (x\circ z,\;y\circ w)
\\ \sigma \circ \tau & = & \{x \circ y,\;x \in \sigma,\;y \in \tau\}
\\ f \circ g & = & \left\{\begin{array}{l} x \circ x' \mapsto y \circ y': \\\; (x \mapsto y) \in f, \; (x' \mapsto y') \in g\end{array}\right\}
\end{eqnarray*}

\medskip
\centerline{\bf Values}
\medskip
\parbox{3.0in}{A weak set value $\sigma$ will be called bijective if for all $x,y \in
\sigma$ with $x \not = y$ we have $\leftop(x) \not = \leftop(y)$ and
$\rightop(x) \not = \rightop(y)$.}

\medskip
\parbox{3.0in}{A weak function value $f$ will be called functional if no two
input-output pairs of $f$ have the same input value.}

\medskip
\parbox{3.0in}{{\bf A value} is a weak value within which each set value is bijective and
each function value is functional.  More formally, a value is a weak
value that is either a Boolean value, a point, a pair of values, a
bijective set of values or a functional function value $f$ such that
$\domop(f)$ is a set value and for $(x \mapsto y) \in f$ we have that
$x$ and $y$ are values.}

\caption{\bf Values}
\label{fig:values}
\end{framed}
\end{figure}

\begin{figure}
  \begin{framed}
\medskip
\begin{itemize}
\item[(7.1)] For any value $x$ we have that $\leftop(x)$,
  $\rightop(x)$ and $x^\inv$ are also values.

\item[(7.2)] For any values $x$ and $y$ with $x \circ y$ is defined we
  have that $x \circ y$ is a value.

\item[(7.3)] $\leftop(x^{-1}) = \rightop(x)$ and $\rightop(x^{-1}) =
  \leftop(x)$

\item[(7.4)] $\leftop(x \circ y) = \leftop(x)$ and $\rightop(x \circ
  y) = \rightop(y)$.

\item[(7.5)] $(x \circ y) \circ z$ = $x \circ (y \circ z)$.

\item[(7.6)] $x^{-1} \circ x \circ y = y$ and $x \circ y \circ y^{-1}
  = x$.

\item[(7.7)] $\rightop(x) = x^\inv \circ x$ and $\leftop(x) = x \circ
  x^\inv$

\item[(7.8)] $(x^\inv)^\inv = x$.

\item[(7.9)] $(x \circ y)^\inv = y^\inv \circ x^\inv$.
\end{itemize}

\caption{{\bf Groupoid Properties}}
\label{fig:GroupOps}
\end{framed}
\end{figure}

\begin{figure}
  \begin{framed}
    {\small
      \centerline{\bf The Abstraction Operation}
      \medskip
\parbox{3.0in}{For a weak value $x$ and template ${\mathcal T}$ {\bf we define $x@{\mathcal T}$} by
  the rules below where $x@{\mathcal T}$ is undefined if no rule applies.}
\medskip
\begin{eqnarray*}
  \Phi@\bool & = & \Phi\;\mbox{for $\Phi$ a Boolean} \\
  (x,y)@({\mathcal T}_1 \times {\mathcal T}_2) & = & (x@{\mathcal T}_1,y@{\mathcal T}_2)\;\mbox{for $x@{\mathcal T}_1$, $y@{\mathcal T}_2$ defined}\\
  \sigma@\setof({\mathcal T}) & = & \{x@{\mathcal T},\;x \in \sigma\} \; \mbox{for $\sigma$ a set} \\
  & & \mbox{with $x@{\mathcal T}$ defined for all  $x \in \sigma$.} \\
  f@({\mathcal T}_1 \rightarrow {\mathcal T}_2) & = & \{x@{\mathcal T}_1 \mapsto y@{\mathcal T}_2,\;(x \mapsto y) \in f\} \\
  & & \mbox{for $f$ a function with $x@{\mathcal T}_1$ and $y@{\mathcal T}_2$} \\
  & & \mbox{defined for all $(x \mapsto y) \in f$.} \\
  \\
  \pointt(i,j)@\pointt & = & \pointt(i,j) \\
  x@\pointt & = & \pointt(\subpoint(\leftop(x)),\subpoint(\rightop(x))) \\
  & & \mbox{for $x$ not a point} \\
  \\
  \subpoint(\Phi) & = & \Phi \;\mbox{for $\Phi$ Boolean} \\
  \subpoint((x,y)) & = &(\pointify(x), \pointify(y)) \\
  \subpoint(\sigma) & = & \{\pointify(x),\; x \in \sigma\} \\
  \subpoint(f) & = & \{\pointify(x) \mapsto \pointify(y), \;(x \mapsto y) \in f\} \\
  \\
  \pointify(\pointt(i,i)) & = & \pointt(i,i) \\
  \pointify(x) & = & \pointt(\subpoint(x),\subpoint(x)) \;\mbox{$x$ not a point}
\end{eqnarray*}

\bigskip
{\bf Abstraction Properties:}

\medskip
\parbox{3.0in}{For weak values $x$ and $y$ we define $x \preceq y$ to mean that for
$y@{\mathcal T}$ defined we also have $x@{\mathcal T}$ defined and $x@{\mathcal T} = y@{\mathcal T}$.}

\medskip
\parbox{3.0in}{For weak values $x$ and $y$ we define $x \sim y$ to mean that there
exists a value $z$ with $x \circ z \circ y$ defined.}

\medskip
\begin{itemize}
\item [(8.1)] For a value $x$ with $x@{\mathcal T}$ defined we have that
  $x@{\mathcal T}$ is a value with $\intype{(x@{\mathcal T})}{{\mathcal T}}$.
\item [(8.2)] We have $x@{\mathcal T} = x$ if and only if $\intype{x}{\mathcal T}$.
\item [(8.3)] We have that $\sim$ is an equivalence relation and for
  any value $x$ we have $x \sim x^\inv \sim \leftop(x) \sim \rightop(x)$ and for values $x$ and $y$ with $x \circ y$ defined we
  have $x \sim (x \circ y) \sim y$.
\item [(8.4)] For $\intype{x}{\mathcal T}$ and $x \sim y$ we have
  $\intype{y}{\mathcal T}$.
\item [(8.5)] For $x@{\mathcal T}$ defined and $x \sim y$ we have $y@{\mathcal  T}$ defined.
\item [(8.6)] If $(x@{\mathcal T})@{\mathcal S}$ is defined then $(x@{\mathcal T})@{\mathcal S} = x@{\mathcal S}$.
\item [(8.7)] If $x@{\mathcal T}$ is defined then $x \preceq x@{\mathcal T}$. 
\item [(8.8)] $\preceq$ is a partial order on values.
\item [(8.9)] For $(x@{\mathcal T})@{\mathcal S}$ and $(x@{\mathcal S})@{\mathcal T}$ both defined we have $x@{\mathcal T} = x@{\mathcal  S}$.
\item [(8.10)] For $x@{\mathcal T}$ defined we have $(x^\inv)@{\mathcal T} = (x@{\mathcal T})^\inv$.
\item[(8.11)] For $(x \circ y)@{\mathcal T}$ defined we have $(x \circ  y)@{\mathcal T} = (x@{\mathcal T}) \circ (y@{\mathcal T})$.
\item[(8.12)] For $\intype{x}{\mathcal T}$ and $\intype{y}{\mathcal T}$ and $(x@{\mathcal S})\circ (y@{\mathcal S})$ defined
  we have that $x \circ y$ is defined.
\end{itemize}
}

\caption{{\bf Abstraction}}
\label{fig:Abstraction}
\end{framed}
\end{figure}

While we allow representations of groups, we require that for every value $x$ there exists a template ${\mathcal T}$
such that $\intype{x}{\mathcal T}$. More specifically, we define a weak value to be any element $x$ of the universe $U$ such that
there exists a template ${\mathcal T}$ such that $\intype{x}{\mathcal T}$.  For technical reasons weak functions are not required to be functional ---
they are allowed to contain two different input-output pairs with the same input value.
The values are defined to be the weak values that hereditarily
satisfy the formation invariants that sets are bijective and that functions are functional.
  
It should be noted that values need not
have unique templates --- a minimal example is
$\intype{\emptyset}{\setof({\mathcal T})}$ for any ${\mathcal T}$ where
$\emptyset$ is the empty set which can be denoted as
$S_{\intype{P\;}{\;\bool}}\;\false$.  While values need not have
unique templates, any weak value $x$ has a finite depth over points as
specified by any template ${\mathcal T}$ with $\intype{x}{\mathcal T}$. While
values have finite depth, sets (including sets of points) can have very large cardinality.

For a weak value $x$ the operation $\leftop(x)$ replaces every point
$\pointt(i,j)$ in $x$ by $\pointt(i,i)$. $\rightop(x)$ is defined
similarly and $x^\inv$ replaces every point in $x$ $\pointt(i,j)$ by
$\pointt(j,i)$.
To better understand composition we can consider sets of points.  A set of points is
bijective if no two points have the same left index or the same right index.
The composition $\sigma \circ \tau$ of two (bijective) point
sets, $\sigma$ and $\tau$, as defined in figure~\ref{fig:values}, is
the point set representing the bijection that is the composition of
the bijections represented by $\sigma$ and $\tau$.  So the class of
all point sets forms a value-based groupoid whose elements are
bijections under inverse and composition.

Figure~\ref{fig:GroupOps} states the algebraic groupoid
properties.  These properties are proved for values and classes in the appendix.

Figure~\ref{fig:Abstraction} defines the abstraction operation.  The abstraction
operation is central to defining the isomorphism relation $=_\sigma$ in a way that handles both abstract
elements and representations.
For example consider two group representations $G$ and $H$, perhaps a permutation group and a group of linear operations.
We define $G =_{\mathbf{Group}} \;H$
to mean that there exists an abstract group $F$ such that
$(G@\mathbf{Group}) \circ F \circ (H@\mathbf{Group})$ is defined.  Here $F$ is the isomorphism
between $G$ and $H$.  The expression $G@\mathbf{Group}$ is the coercion of the group representation $G$ into an abstract group --- a group whose group
elements are points.  The abstraction $G@\mathbf{Group}$ is an abbreviation for
$$G@(\mathbf{SetOf}(\pointt) \times (\pointt \times \pointt \rightarrow \pointt)).$$
For a template ${\mathcal T}$ the operation $x@{\mathcal T}$ is defined in figure~\ref{fig:Abstraction}.
The operation $x@{\mathcal T}$ converts parts of $x$ to points as
specified in ${\mathcal T}$.  The abstraction operation is
partial --- for $x@{\mathcal T}$ to be defined $x$ must have a shape
compatible with ${\mathcal T}$.  The complex definition of $x@\pointt$
achieves the property that if $x@{\mathcal T}$ is defined then $(x@{\mathcal T})@\pointt = x@\pointt$.  This supports property (8.6)
in figure~\ref{fig:Abstraction}.  Properties (8.1) through (8.12)
are proved in the appendix.

Figure~\ref{fig:classes} defines classes by stating formation
invariants that classes must satisfy.  It is useful to again consider
the class of all sets of points (point sets).  The class of all point
sets is closed under inverse and composition and hence forms a
(value-based) groupoid.  However, it is possible to form classes that
are not closed under inverse.  A minimal example is given by the
following sequent.
$$\intype{\alpha}{\sett} \vdash \intype{\left(\sett \times \alpha\right)}{\class}$$
Here it is possible to interpret $\alpha$
as a point set whose set of left indexes is disjoint from its set of
right indexes.  The point set $\alpha$ is then not closed under
inverse --- for $\pointt(i,j) \in \alpha$ we have $\pointt(j,i) \not
\in \alpha$.  The class $\sett \times \alpha$ is the set of pairs of
the form $(\sigma, \pointt(i,j))$ for $\sigma$ a set and $\pointt(i,j)
\in \alpha$.  The class $\sett \times \alpha$ is not closed under
inverse.
$$(\sigma,\;\pointt(i,j))^\inv = (\sigma^\inv,\;\pointt(j,i)) \not \in
\sett \times \alpha$$ This situation arises for the class of vector
spaces over a given field $F$.  The class of all fields forms a
value-based groupoid in which fields can be inverted and composed.  An
individual field whose set of field elements is a bijective point set
acts as an isomorphism between fields.  If $V$ is a vector space over
$F$ then $V^\inv$ is a vector space over $F^\inv$ but for $F \not =
F^\inv$ we have that $V^\inv$ is not a vector space over $F$.  In
general closed type expressions denote groupoids while open type
expressions (type expressions containing free variables) need not be
closed under inverse.

\begin{figure}
  \begin{framed}
    {\small
      \centerline{\bf Classes}
      \medskip
    \parbox{3.0in}{
 A class $\sigma \subseteq U$ is a collection of values, possibly too
 large be an element of $U$, where we require:

  \medskip
  \begin{itemize}
  \item[(9.1)] Morphoid closure --- for $x,y,z \in \sigma$ with $x
    \circ y^\inv \circ z$ defined we have $x\circ y^\inv \circ z \in
    \sigma$
    \medskip
  \item[(9.2)] Interface template --- there exists ${\mathcal T}$ such
    that for all $x \in \sigma$ we have that $x@{\mathcal T}$ is defined and
    $x@{\mathcal T} \in \sigma$.
  \end{itemize}

  \medskip
  \centerline{\bf Operations}
    \medskip
  For a class $\sigma$ and $x \in \sigma$ we define $x@\sigma$ to be
  $x@{\mathcal T}$ for ${\mathcal T}$ an interface template for $\sigma$.  By
  property (8.9) this definition is independent of the choice of
  ${\mathcal T}$.

  \medskip
  We write $\intype{\sigma}{\classof({\mathcal T})}$ if $\sigma$ is a
  class with interface template ${\mathcal T}$.

  \medskip
  For a class $\sigma$ and template ${\mathcal T}$ with $x@{\mathcal T}$
  defined for all $x \in \sigma$ we define $\sigma@\classof({\mathcal T})$
  to be the class $\{x@{\mathcal T},\;x \in \sigma\}$.

  \medskip
  The groupoid operations on classes are defined by the following
  rules.  Section~\ref{sec:ClassProofs} of the appendix proves that classes
  satisfy the groupoid properties.
  \begin{eqnarray*}
    \leftop(\sigma) & = & \{x_1 \circ x_2^\inv,\;x_1,x_2 \in \sigma\}
    \\ \rightop(\sigma) & = & \{x_1^\inv \circ x_2,\;x_1,x_2 \in
    \sigma\} \\ \sigma^\inv & = & \{x^\inv,\;x \in \sigma\} \\ \sigma
    \circ \tau & = & \{x\circ y,\;x \in \sigma,\;y\in \tau\}
  \end{eqnarray*}

  \medskip
  \centerline{\bf Abstraction Ordering and Isomorphism}
  
  \medskip
  For classes $\sigma$ and $\tau $ we define $\sigma \preceq
  \tau$ to mean that for $\intype{\tau}{\classof({\mathcal T})}$ we have
  $\sigma@\classof({\mathcal T}) = \tau@\classof({\mathcal T})$.

  \medskip
  For $x,y \in \sigma$ we define $x =_\sigma y$ to mean $(x@\sigma)
  \circ z^\inv \circ (y@\sigma)$ is defined for some $z \in \sigma$.
}}
  \caption{\bf Classes}
  \label{fig:classes}
\end{framed}
\end{figure}

\begin{figure}
  \begin{framed}
    {\small
      \parbox{3.0in}{
    A structure is mapping from a finite set of variables to values.

    \medskip
    For a structure $\rho$ we define $\leftop(\rho)$ to be the
    structure defined on the same variables as $\rho$ and satisfying
    $\leftop(\rho)(x) = \rho(\leftop(x))$.  $\rightop(\rho)$ and
    $\rho^\inv$ are defined similarly.  If $\rightop(\rho_1) =
    \leftop(\rho_2)$ then $\rho_1 \circ \rho_2$ is defined by $(\rho_1
    \circ \rho_2)(x) = \rho_1(x) \circ \rho_2(x)$.

    \medskip
    A structure template is a mapping from variables to templates.
    For a structure $\rho$ and structure template $\eta$ defined on
    the same variables we write $\intype{\rho}{\eta}$ if
    $\intype{\rho(x)}{\eta(x)}$ for each $x$ and $\rho@\eta$ is
    defined if $\rho(x)@\eta(x)$ is defined for each $x$ in which case $(\rho@\eta)(x) = \rho(x)@\eta(x)$.
    We define $\rho_1 \preceq \rho_2$ to mean that $\rho_1(x) \preceq \rho_2(x)$ for each $x$.

    \bigskip
    For $\semvalue{e}\rho$ defined and $\intype{\rho}{\eta}$ we define
    $\tempvalue{e}{\eta}$ to be a template using the following rules.

    \bigskip
\begin{eqnarray*}
  \tempvalue{x}{\eta} & = & \eta(x) \\ \tempvalue{\bool}{\eta} & = & \setof(\bool) \\
  \tempvalue{\sett}{\eta} & = & \classof(\setof(\pointt)) \\
  \tempvalue{\lambda\;\intype{x}{\sigma}\;e[x]}\eta & = & {\mathcal M}(\tempvalue{\sigma}\eta) \rightarrow \tempvalue{e[x]}\eta' \\
  \tempvalue{\Pi_{\intype{x\;}{\;\sigma}}\;\tau[x]}\eta & = & \setof({\mathcal M}(\tempvalue{\sigma}\eta) \rightarrow {\mathcal M}(\tempvalue{\tau[x]}\eta')) \\
  \tempvalue{\Sigma_{\intype{x\;}{\;\sigma}} \;\tau[x]}{\eta} & = & \classof({\mathcal M}(\tempvalue{\sigma}\eta) \times {\mathcal M}(\tempvalue{\tau[x]}\eta')) \\
  \eta' & = & \eta[x \leftarrow  {\mathcal M}(\tempvalue{\sigma}\eta)] \\
  \tempvalue{S_{\intype{x\;}{\;\sigma}}\;\Phi[x]}{\eta} & = & \tempvalue{\sigma}{\eta} \\
  \tempvalue{f(e)}{\eta} & = & \mathbf{Range}(\tempvalue{f}{\eta}) \\
  \tempvalue{(u,\;w)}{\eta} & = & \tempvalue{u}{\eta} \times \tempvalue{w}{\eta} \\
  \tempvalue{\pi_i(e)}{\eta} & = & \pi_i(\tempvalue{e}{\eta}) \\
  \tempvalue{\Phi}{\eta} & = & \bool \;\;\;\mbox{for $\Phi$ an equality, disjunction,} \\
  & & \;\;\;\;\;\;\;\;\;\;\mbox{negation or quantified formula}
\end{eqnarray*}

\bigskip
\vspace{-4ex}
\begin{eqnarray*}
  {\mathcal M}(\setof({\mathcal T})) & = & {\mathcal M}(\classof({\mathcal
    T})) = {\mathcal T} \\ \mathbf{Range}({\mathcal T}_1 \rightarrow {\mathcal
    T}_2) & = & {\mathcal T}_2 \\ \pi_i({\mathcal T}_1 \times {\mathcal T}_2) & =
  & {\mathcal T}_i
\end{eqnarray*}
    }}
    
\caption{{\bf Structures and Template Evaluation.}}
\label{fig:AbsEval}
\end{framed}
\end{figure}

\begin{figure}
  \begin{framed}
    {\small
      \parbox{3.0in}{
    \medskip
    For $\convalue{\Gamma}$ defined we have

    \medskip
    \begin{itemize}
      \item[(11.1)] For $\rho \in \convalue{\Gamma}$ we have that
        $\rho$ is a structure (all variables are mapped to values).

        \medskip
      \item[(11.2)] For $\rho \in \convalue{\Gamma}$ we have
        $\rho^\inv \in \convalue{\Gamma}$ and for $\rho_1,\rho_2 \in \convalue{\Gamma}$
        with $\rho_1 \circ \rho_2$ defined we have
        $\rho_1 \circ \rho_2 \in \convalue{\Gamma}$.
    \end{itemize}

    \bigskip
    We define a proper class to be a class that is not an element of $U$.

    \medskip
    We define a denotable value to be either a value, a proper class,
    or a pair of denotable values.
    
    \medskip
    For $\semvalue{e}$ defined with $e \not = \class$ and for $\rho,\rho_1,\rho_2 \in \semvalue{\Gamma}$ we have the following.

    \medskip
    \begin{itemize}
    \item[(11.3)] $\semvalue{e}\rho$ is a denotable value.

      \medskip
    \item[(11.4)] $\semvalue{e}(\rho^\inv) = (\semvalue{e}\rho)^\inv$ and for
      $\rho_1 \circ \rho_2$ defined we have $\semvalue{e}(\rho_1 \circ \rho_2) = (\semvalue{e}\rho_1) \circ (\semvalue{e}\rho_2)$.
      
      \medskip
    \item[(11.5)] For
      $\rho_1 \preceq \rho_2$ we have $\semvalue{e}\rho_1 \preceq
      \semvalue{e}\rho_2$.
      
      \medskip
    \item[(11.6)] For structure template $\eta$ with
      $\intype{\rho}{\eta}$ we have
      $\intype{\semvalue{e}\rho}{\tempvalue{e}\eta}$.
    \end{itemize}
  }}
    \caption{{\bf Evaluation Properties.}}
    \label{fig:EvalProps}
  \end{framed}
\end{figure}

While classes are not in general closed under inverse we require that
all classes satisfy the ``morphoid closure condition'' --- (9.1) in
figure~\ref{fig:classes}.  This condition states that for any class
$\sigma$ (including sets), and for $x,y,z \in \sigma$ with $x \circ
y^\inv \circ z$ defined, we have $x \circ y^\inv \circ z \in \sigma$.
The definition of set values imply (9.1) and (9.2) and we have that all
sets are classes.

To better understand the morphoid closure condition (9.1) again
consider the class $\sett \times \alpha$ discussed above.  Consider
$$(\sigma_1,\;\pointt(i_1,j_1)) \circ (\sigma_2,\;\pointt(i_2,j_2))^\inv \circ (\sigma_3,\;\pointt(i_3,j_3))$$
where $\sigma_k$ is a set and
$\pointt(i_k,j_k) \in \alpha$.  Because $\alpha$ must be a bijection
this composition can only be defined if $j_2 = j_1$, implying $i_2 = i_1$, and $i_3 = i_2$, implying $i_3 = i_1$ and $j_3 = j_1$.  So any
such composition has the form
\begin{eqnarray*}
  & & (\sigma_1,\;\pointt(i,j)) \circ (\sigma_2,\;\pointt(i,j))^\inv
  \circ (\sigma_3,\;\pointt(i,j)) \\ & = & (\sigma_1 \circ
  \sigma_2^\inv \circ \sigma_3,\;\pointt(i,j)) \\ & \in & \sett \times
  \alpha
\end{eqnarray*}
Hence the class $\sett \times \alpha$ satisfies the morphoid closure
condition (9.1).

Condition (9.2) specifies that every class must have an interface
template --- for any class $\sigma$ there must exist a template ${\mathcal
  T}$ such that for $x \in \sigma$ we have that $x@{\mathcal T}$ is
defined and $x@{\mathcal T} \in \sigma$.  Figure~\ref{fig:classes} defines
$x@\sigma$ to be $x@{\mathcal T}$ for any interface template ${\mathcal T}$
for $\sigma$. Property (8.9) implies that this definition of
$x@\sigma$ is independent of the choice of the interface template.
For any group $G$ we have
$$G@\mathbf{Group} = G@(\mathbf{SetOf}(\pointt) \times (\pointt
\times \pointt \rightarrow \pointt)).$$ For any group $G$, the group
$G@\mathbf{Group}$ is an ``abstract'' group where the group elements
are points.

Figure~\ref{fig:classes} also defines the isomorphism relation
$=_\sigma$ associated with the class $\sigma$. We have $x =_\sigma y$ if
there exists $z \in \sigma$ with $(x@\sigma) \circ z^\inv \circ (y@\sigma)$ defined.
The inverse operation is needed to handle the case where the class is not closed under inverse.

Figure~\ref{fig:AbsEval} defines structures (variable
interpretations) and structure templates and extends the operations
and relations of the previous figures to structures.  The figure also
defines template evaluation --- a form of abstract interpretaion which computes a ``template value''
for an expression when provided a template for each free variable.  Property (11.6) in figure~\ref{fig:EvalProps}
states that for $\semvalue{e}\rho$ defined and $\intype{\rho}{\eta}$, where $\eta$ is a
structure template, we have that $\tempvalue{e}\eta$ is a template such
that $\intype{\semvalue{e}{\rho}}{\tempvalue{e}\eta}$.

Figure~\ref{fig:EvalProps} States general properties of well-formed
contexts and well-formed expressions. The properties of figure~\ref{fig:EvalProps}
are central to the soundness of the inference rules, especially those involving isomorphism.
These properties are proved by a case analysis
over the constructs listed in figure~\ref{fig:constructs} under the semantics listed in figure
~\ref{fig:semantics}.  This case analysis is done in section~\ref{sec:EvalProps} in the appendix.

\section{Soundness}
\label{sec:Soundness}

We now assume the properties in figures~\ref{fig:GroupOps}, \ref{fig:Abstraction} and~\ref{fig:EvalProps}, plus the lemma that every set value is a class,
all of which are proved in the appendix.  Given this we consider the soundness of the inference rules in figures~\ref{fig:BooleanRules}, \ref{fig:Pairs} and~\ref{fig:IsoRules}.

Most rules follow from the semantic clauses of figure~\ref{fig:semantics} other than the semantically subtle clauses (3) and (20). For such rules, for example the rules for Boolean reasoning,
soundness is immediate. Rules whose soundness rests on clauses (3) and (20) can be divided into type formation rules and isomorphism rules.
For example consider the following formation rule for pair types.

\centerline{ \unnamed {\ant{\Gamma \vdash \intype{\sigma}{\class}} \ant{\Gamma;\;\intype{x}{\sigma}
    \vdash \intype{\tau[x]}{\class}}} {\ant{\Gamma \vdash
    \intype{\left(\Sigma_{\intype{x\;}{\;\sigma}}\;\tau[x]\right)}{\class}}}}

The soundness of such type formation rules follows from property (11.3) which states that for $e \not = \class$ we have that if $\semvalue{e}$ is defined
then for $\semvalue{e}\rho \in U$ we have that $\semvalue{e}\rho$ is a denotable value.  If the pair type in the above rule is a denotable value then it must be a class.
We now turn to proving the soundness of the rules explicitly involving isomorphism.

\begin{theorem}
  \label{lem:EquivalenceRelation}
  For any class $\sigma$ we have that $=_\sigma$ is an equivalence
  relation on the elements of $\sigma$. \label{lem:isoequiv}
\end{theorem}

\begin{proof}
  For $x,y \in \sigma$ we have $x =_\sigma y$ if there exists $z \in \sigma$
  such that $(x@\sigma) \circ z^\inv \circ (y@\sigma)$ is
  defined.  For any $\incat{x}{\sigma}$ we have that $(x@\sigma) \circ (x@\sigma)^{-1} \circ (x@\sigma)$
  is defined and hence $x =_\sigma x$.
  To show symmetry suppose $x =_\sigma y$ with $(x@\sigma) \circ z^{-1} \circ (y@\sigma)$ defined.
  In this case we have that $(y@\sigma) \circ ((x@\sigma) \circ z^{-1} \circ (y@\sigma))^{-1} \circ (x@\sigma)$ is defined.
  By morphoid closure we have $((x@\sigma) \circ z^\inv \circ (y@\sigma)) \in \sigma$ and hence $y =_\sigma x$.
  For transitivity suppose $x =_\sigma y =_\sigma z$.  In this case
  there exist $s$ and $t$ in $\sigma$ that $(x@\sigma) \circ s^\inv \circ (y@\sigma) \circ t^\inv \circ (z@\sigma)$ is defined.  But in
  this case we have $(x@\sigma) \circ (t \circ (y@\sigma)^\inv \circ s)^\inv \circ (z@\sigma)$ is defined.
  Morphoid closure implies $(s \circ (y@\sigma)^\inv \circ t) \in \sigma$ and the result follows.
\end{proof}

\begin{theorem}
  The isomorphism substitution rule

\centerline{
\unnamed
    {\ant{\Gamma \vdash \intype{\sigma,\tau}{\class}}
      \ant{\Gamma;\;\intype{x}{\sigma} \vdash \intype{e[x]}{\tau}} \ant{\Gamma \vdash w =_\sigma u}}
    {\ant{\Gamma \vdash e[w] =_\tau e[u]}.}
    }

is sound
    \label{lem:isocongruence}
\end{theorem}

\begin{proof}
Consider $\rho \in \convalue{\Gamma}$.  Let $\sigma^*$ be
$\semvalue{\sigma}\rho$ and similarly for $\tau^*$, $w^*$ and $u^*$.
For $s \in \sigma^*$ let $e^*[s]$ be
$\subvalue{\Gamma;\intype{x\;}{\;\sigma}}{e[x]}\rho[x \leftarrow s]$.
We must show that the validity of the antecedents of the rule implies
$e^*[w^*] =_{\tau^*} e^*[u^*]$.

$\Gamma \models w =_\sigma u$ and clause (20) of figure~\ref{fig:semantics} imply $w^* \in \sigma^*$ and $u^*\in \sigma^*$
and $w^*=_{\sigma^*} u^*$.  This implies that there exists $z \in \sigma^*$
such that $(w^*@\sigma^*) \circ z^\inv \circ (u^*@\sigma^*)$ is
defined.  By the definitions of figure~\ref{fig:classes} we have
$w^*@\sigma^* \in \sigma^*$ and $u^*@\sigma^* \in \sigma^*$ and by the morphoid closure of $\sigma^*$
we have  $(w^*@\sigma^*) \circ z^\inv \circ (u^*@\sigma^*) \in \sigma^*$. Since
$\Gamma;\intype{x}{\sigma} \models \intype{e[x]}{\tau}$ we have that
$\subxvalue{e[x]}$ is defined and by (11.2) for $\Gamma$ and (11.4) for $e[x]$ we have
\begin{eqnarray*}
  & & \semvalue{e[x]}\rho[x\leftarrow (w^*@\sigma^*) \circ z^\inv \circ (u^*@\sigma^*)] \\
  & = & \semvalue{e[x]}(\rho \circ \rho^\inv \circ \rho)[x\leftarrow (w^*@\sigma^*) \circ z^\inv \circ (u^*@\sigma^*)] \\
    & = & \semvalue{e[x]}(\rho[x\leftarrow w^*@\sigma^*] \circ \rho^\inv[x \leftarrow z^\inv] \circ \rho[x \leftarrow u^*@\sigma^*]) \\
  & = & \semvalue{e[x]}(\rho[x\leftarrow w^*@\sigma^*] \circ \rho[x \leftarrow z]^\inv \circ \rho[x \leftarrow u^*@\sigma^*]) \\
  & = &  e^*[w^*@\sigma^*] \circ e^*[z]^\inv \circ e^*[u^*@\sigma^*]
\end{eqnarray*}
By the morphoid closure property (9.1) for $\tau^*$ this composition
is a member of $\tau^*$.  By (8.11) we then have
\begin{eqnarray*}
  & & (e^*[w^*@\sigma^*] \circ e^*[z]^\inv \circ e^*[u^*@\sigma^*])@\tau^* \\
  & = & e^*[w^*@\sigma^*]@\tau^* \circ (e^*[z]@\tau^*)^\inv \circ e^*[u^*@\sigma^*]@\tau^*
\end{eqnarray*}
By (8.7) we have $w^* \preceq w^*@\sigma^*$ and by (11.5) for $e[x]$ we then have $e^*[w^*] \preceq e^*[w^*@\sigma^*]$. By the definition of $\preceq$ we then have $e^*[w^*@\sigma^*]@\tau^*
= e^*[w^*]@\tau^*$ and similarly for $u^*$. We now have that
$$e^*[w^*]@\tau^* \circ (e^*[z]@\tau^*)^\inv \circ e^*[u^*]@\tau^*$$
is defined which implies the result.
\end{proof}

We now turn to figure~\ref{fig:IsoRules}.  In this section we will define the semantics of the
the carrier expression.  The proof of the soundness of the rules in figure~\ref{fig:IsoRules}
is given in the last section of the appendix.

For
$\semvalue{\carrier(\sigma,\gamma,f,(\lambda\;\intype{\alpha}{\sett}\;\tau[\alpha]))}$ to be
defined we require $\Gamma \models \intype{\sigma,\gamma}{\sett}$ and $\Gamma \models
\intype{f}{\mathrm{Bijection}[\sigma,\gamma]}$ and $\Gamma; \intype{\alpha}{\sett} \models
\intype{\tau[\alpha]}{\sett}$.  To state the definition consider $\rho \in \convalue{\Gamma}$.
Let
$\sigma^*$ abbreviate $\semvalue{\sigma}\rho$ and similarly for $\tau^*$ and $f^*$.  For a set $s$
let $\tau^*[s]$ abbreviate $\subvalue{\Gamma;\;\intype{\alpha\;}{\;\sett}}{\tau[\alpha]}\rho[\alpha \leftarrow s]$.
We start with the following definition.

\begin{definition} For function value $f$ we define $Y(f)$ by
  {\small $$Y(f) = \{\pointt(\;\mathrm{Lindex}(f(u)@\pointt),\;\mathrm{Rindex}(u@\pointt)\;),\;u \in   \domop(f)\}$$}
  where
  $$\mathrm{Lindex}(\pointt(i,j)) = i\;\;\;\mbox{and}\;\;\;\;\mathrm{Rindex}(\pointt(i,j)) = j.$$
\end{definition}

For a bijection $f$ from $\sigma$ to $\tau$ we then have that $Y(f)$ is a set value with
$$\sigma@\setof(\pointt) \circ Y(f)^\inv \circ \tau@\setof(\pointt)$$ defined and hence $\sigma
=_{\sett} \tau$.  By (11.4) for $\tau[\alpha]$ we have that
$$\tau^*[\sigma^*@\setof(\pointt)] \circ \tau^*[Y(f^*)]^\inv \circ
\tau^*[\gamma^*@\setof(\pointt)]$$ is defined.  The need for the inverses can be seen in the
following more explicit derivation.
\begin{eqnarray*}
  & & \tau^*[X \circ Y^\inv \circ Z] \\
  & = & \subvalue{\Gamma;\;\intype{\alpha\;}{\;\sett}}{\tau[\alpha]}\rho[\alpha \leftarrow (X \circ Y^\inv
    \circ Z)] \\ & = & \subvalue{\Gamma;\;\intype{\alpha\;}{\;\sett}}{\tau[\alpha]}(\rho \circ
  \rho^\inv \circ \rho)[\alpha \leftarrow (X \circ Y^\inv \circ Z)] \\ & = &
  \subvalue{\Gamma;\;\intype{\alpha\;}{\;\sett}}{\tau[\alpha]}(\rho[\alpha \leftarrow X] \circ
  (\rho[\alpha \leftarrow Y])^\inv \circ \rho[\alpha \leftarrow Z]) \\ & = & \tau^*[X] \circ
  \tau^*[Y]^\inv \circ \tau^*[Z]
\end{eqnarray*}
Now let ${\mathcal T}$ be a template such that $\intype{\tau^*[Y(f^*)]}{\setof({\mathcal T})}$. Abstracting
the above equation to $\setof({\mathcal T})$ gives that
$$\tau^*[\sigma^*]@\setof({\mathcal T})\circ \tau^*[Y(f^*)]^\inv \circ \tau^*[\gamma^*]@\setof({\mathcal
  T})$$ is defined. We now make the following definition.

\begin{definition} Given sets $X$, $Y$ and $Z$ with
  $$X@\setof({\mathcal T}) \circ Y \circ Z@\setof({\mathcal T})$$ defined, we define $C(X,Y,Z)$ to be the
  bijection $g$ from $X$ to $Z$ such that for $x \in X$ there exists $y \in Y$ such that
  $$x@{\mathcal T} \circ y \circ g(x)@{\mathcal T}$$ is defined.
\end{definition}

To show that this definition is well formed we use the following lemma.

\begin{lemma}
  For set values $\sigma$ and $\tau$ with $\sigma \preceq \tau$ and with $\intype{\tau}{\setof({\mathcal T})}$
  the mapping $\{x \mapsto x@{\mathcal T},\;x\in \sigma\}$ is a bijection.
  \label{lem:setdown}
\end{lemma}

\begin{proof}
  The definition of $\preceq$ requires that $\sigma@\setof({\mathcal T}) = \tau$.  This implies that the
  mapping $\{x \mapsto x@{\mathcal T},\;x\in \sigma\}$ is onto $\tau$.  We must show that no two elements of
  $\sigma$ map to the same element of $\tau$.  We have $\intype{\sigma}{\setof({\mathcal S})}$ for some
  template ${\mathcal S}$.  Consider $x_1$ and $x_2$ with $x_1@{\mathcal T} = x_2@{\mathcal T}$.  We then have
  that $(x_1@{\mathcal T}) \circ (x_2@{\mathcal T})^\inv$ is defined where we have $\intype{x_1}{\mathcal S}$
  and $\intype{x_2}{\mathcal S}$.  By property (8.12) we then have that $x_1 \circ x_2^\inv$ is
  defined and by the bijectivity of $\sigma$ we then have $x_1 = x_2$.
\end{proof}

The well-formedness of the definition of $C(X,Y,Z)$ then follows from the above lemma
and the fact that the set $Y$ is bijective.

\begin{definition}
  \begin{eqnarray*} & & \semvalue{\carrier(\sigma,\gamma,f,(\lambda\;\intype{\alpha}{\sett}\;\tau[\alpha]))}\rho \\
    & = &  C(\tau^*[\sigma^*],\tau^*[Y(f^*)]^\inv,\tau^*[\gamma^*])
  \end{eqnarray*}
\end{definition}

We then have that the carrier function is a bijection from $\tau^*[\sigma^*]$ to $\tau^*[\gamma^*]$
as required.

\section{Extensions and Conclusions}
\label{sec:Conclusions}

This paper makes a case that the semantics of type theory, including the treatment of isomorphism, can be greatly simplified under a commitment to a classical
set-theoretic foundation for mathematics. The main result of this paper is a naive composition semantics supporting the proof system defined in figures~\ref{fig:BooleanRules},~\ref{fig:Pairs} and ~\ref{fig:IsoRules}.
The system presented here is intentionally minimal for presenting the basic ideas of naive semantics and value-based groupoid structure.  Various extensions are possible.  We will mention two such extensions here.

{\bf Higher Functions.} The most obvious extension is the introduction of higher lambda expressions and function types.  We should be able to extend the system
with the following rules.

\centerline{
$\vdash \intype{\class}{\type}$
\hspace{3ex}
\unnamed
{\ant{\Gamma \vdash \intype{\sigma}{\class}}}
{\ant{\Gamma \vdash \intype{\sigma}{\type}}}
\hspace{3ex}
\unnamed
{\ant{\Gamma \vdash \intype{\sigma}{\type}}
  \ant{\mbox{$x$ is not declared in $\Gamma$}}}
{\ant{\Gamma;\;\intype{x}{\sigma} \vdash \true}}
}
    
\centerline{
\unnamed
{\ant{\Gamma \vdash \intype{\sigma}{\type}}
  \ant{\Gamma;\;\intype{x}{\sigma} \vdash \intype{\tau[x]}{\type}}}
{\ant{\Gamma \vdash \intype{\left(\Pi_{\intype{x\;}{\;\sigma}}\;\tau[x]\right)}{\type}}}
\hspace{3ex}
\unnamed
{\ant{\Gamma \vdash \intype{\left(\Pi_{\intype{x\;}{\;\sigma}}\;\tau[x]\right)}{\type}}
  \ant{\Gamma;\;\intype{x}{\sigma} \vdash \intype{e[x]}{\tau[x]}}}
{\ant{\Gamma \vdash
           \intype{\left(\lambda\;\intype{x}{\sigma}\;e[x]\right)}{\left(\Pi_{\intype{x\;}{\;\sigma}}\;\tau[x]\right)}}}
}

This would allow one to write a lambda expression for the mapping of a topological space to its fundamental group.
It would also allow one to write a function type for the space of all ``natural maps'' from topological spaces to groups.
The semantics of higher function classes would have to be restricted to only include ``natural'' functions --- functions satisfying certain commutativity
conditions with respect to value composition.
The main technical issue, however, is formulating an appropriate generalization of evaluation property (11.6) characterizing template evaluation.  One approach is that the system of higher functions
has an associate system of higher functions on templates, functions of function of templates, and so on.  While this seems possible it also seems rather cumbersome.
Another approach is to take higher functions to be closures ---  pairs of a lambda expression and a variable interpretation.  An expression denoting a class or value
would then be guaranteed
to beta-reduce to an expression not involving higher lambda expressions and (11.6) would then be required only for class or value expressions.

{\bf ``Up to Isomorphism'' Definite Descriptions.} Mathematical objects are often only defined up to isomorphism.  For example,
{\bf the} real numbers can be defined by axioms, or as Dedekind cuts, or as Cauchy sequence and these definitions are considered equivalent
--- up to isomorphism they yield the same ordered field.  We would like to introduce the following inference rules.

~ \hfill
\unnamed
{\ant{\Gamma \vdash u =_\sigma w}}
{\ant{\Gamma \vdash \mathbf{The}(\sigma,\;u) \doteq \mathbf{The}(\sigma,\;w)}}
\hfill
\unnamed
{\ant{\Gamma \vdash \intype{u}{\sigma}}}
{\ant{\Gamma \vdash \intype{\mathbf{The}(\sigma,\;u)}{\sigma}}
  \ant{\Gamma \vdash \mathbf{The}(\sigma,\;u) =_\sigma u}}
\hfill ~

We could then talk about {\bf the} complete graph $K_5$ or {\bf the} complete bipartite graph $K_{5,5}$ as well as things like
{\bf the vector space} $\mathbb{R}^3$ or {\bf the topological space} $\mathbb{R}^3$.  We could also use such definite
descriptions to build structures.  For example, the natural numbers can be taken to be the values of the form $\mathbf{The}(\sett,\;s)$
where $s$ is a finite set.  ``Two'' is then {\bf the set} with two elements.

Of course we would want to specify a semantics for such definite descriptions.  This could involve a choice oracle for selecting a value.
But care must be taken to ensure the evaluation properties in figure~\ref{fig:EvalProps} for $\mathbf{The}(\sigma,\;u)$ in the case where $\sigma$ and $u$ contain free variables.

But independent of extensions, the main contribution of this paper is a naive set-theoretic compositional semantics of type theory supporting a treatment of isomorphism.


\vfill
\eject
\appendix

\section{Weak value groupoid properties}
\label{sec:AbsProps1}

We now consider the groupoid properties in figure~\ref{fig:GroupOps}. Recall that
figure~\ref{fig:values} defines a weak value to be an element $x$ of $U$ such that $\intype{x}{\mathcal
  T}$ for some template ${\mathcal T}$.  In this section we prove groupoid properties that hold over all
weak values.

\begin{definition} We define the size $S({\mathcal T})$ of a template ${\mathcal T}$
to be the number of nodes in the syntax tree for ${\mathcal T}$ (including the root node).  We define
the size $S(x)$ of a weak value $x$ to the minimum of $S({\mathcal T})$ over all template expressions
${\mathcal T}$ such that $\intype{x}{\mathcal T}$.
\end{definition}

Note that for a weak value pair $(x,y)$ we have that $S(x)$ and $S(y)$ are both strictly less than
$S((x,y))$.  Also for a weak value set $\sigma$ and for $x \in \sigma$ we have $S(x) < S(\sigma)$.
For a weak value function $f$ and $(x \mapsto y) \in f$ we have that $S(x)$ and $S(y)$ are both less
than $S(f)$ and also $S(\domop(f)) < S(f)$.  Most proofs will be either by structural induction on
templates or by induction on value size.

We can think of weak function values as sets of pairs.  For weak values we have that the template
${\mathcal T}_1 \rightarrow {\mathcal T}_2$ is essentially the same as the template $\setof({\mathcal T}_1
\times {\mathcal T}_2)$. For (strong) values we require that functions are functional.

\begin{lemma}[Partner Lemma.] For weak set values $\sigma$ and $\tau$ such that $\sigma \circ \tau$ is defined we have that for any $x \in \sigma$
  there exists $y \in \tau$ such that $x \circ y$ is defined and for any $y \in \tau$ there exists
  $x \in \sigma$ such that $x \circ \sigma$ is defined.  For weak functions $f$ and $g$ with $f
  \circ g$ defined and for $(x \mapsto y) \in f$ there exists $(x' \mapsto y') \in g$ with $x \circ
  x'$ and $y \circ y'$ defined and for $(x' \mapsto y') \in g$ there exists $(x \mapsto y) \in f$
  with $x \circ x'$ and $y \circ y'$ defined. \label{lem:partner}
\end{lemma}

\begin{proof}
  Since $\sigma \circ \tau$ is defined we have $\rightop(\sigma) = \leftop(\tau)$.  So for $x \in
  \sigma$ we have $\rightop(x) = \leftop(y)$ for some $y \in \tau$.  The proof for weak functions is
  similar where we think of a weak function as a set of pairs.
\end{proof}

\begin{lemma}  For a weak value $x$ with $\intype{x}{\mathcal T}$ we have $\intype{\leftop(x)}{\mathcal T}$, $\intype{\rightop(x)}{\mathcal T}$
    and $\intype{x^\inv}{\mathcal T}$ and for weak values $x$ and $y$ with $x \circ y$ defined and
    $\intype{x}{\mathcal T}$ we also have $\intype{(x \circ y)}{\mathcal T}$ and $\intype{y}{\mathcal T}$.
    \label{lem:closed1}
\end{lemma}

\begin{proof} The proof can be done by structural induction on the template ${\mathcal T}$.  For the operations of left, right and inverse the result follows directly from
  the induction hypothesis and the definition of the operations given in figure~\ref{fig:values}.
  For composition the result is immediate for Booleans and points.  For pairs we have
  $(x,y)\circ(x',y') = (x\circ x',\;y\circ y').$ If $\intype{(x,y)}{({\mathcal T}_1 \times {\mathcal T}_2)}$
  then $\intype{x}{{\mathcal T}_1}$ and $\intype{y}{{\mathcal T}_2}$.  By the induction hypothesis we then
  have $\intype{(x \circ x')}{{\mathcal T}_1}$ and $\intype{x'}{{\mathcal T}_1}$ and $\intype{(y \circ
    y')}{{\mathcal T}_2}$ and $\intype{y'}{{\mathcal T}_2}$.  This gives $\intype{(x\circ x',\;y\circ
    y')}{({\mathcal T}_1 \times {\mathcal T}_2)}$ and $\intype{(x',y')}{({\mathcal T}_1 \times {\mathcal T}_2)}$ as
  desired.

  Now consider two weak sets $\sigma$ and $\tau$ such that $\sigma \circ \tau$ is defined and with
  $\intype{\sigma}{\setof({\mathcal T})}$.  An element of $\sigma \circ \tau$ has the form $x \circ y$
  for $x \in \sigma$ and $y \in \tau$ and where we have $\intype{x}{{\mathcal T}}$.  By the induction
  hypothesis this gives that $(x \circ y)@{\mathcal T}$ and $y@{\mathcal T}$ are defined. This gives
  $\intype{\sigma \circ \tau}{\setof({\mathcal T})}$.  To show $\intype{\tau}{\setof({\mathcal T})}$ we note
  that the partner lemma \ref{lem:partner} gives us that for all $y \in \tau$ there exists $x \in
  \sigma$ with $x \circ y$ defined. The proof for a composition of weak functions is similar where
  we think of weak functions as sets of pairs.
\end{proof}

\begin{corollary} The weak values are closed under the groupoid operations.
\end{corollary}

\medskip
We define an identity value to be a weak value in which every point has the form $\pointt(i,i)$.
More formally we have the following definition.

\begin{definition}
  An identity value is either a Boolean value, a point of the form $\pointt(i,i)$, a pair of
  identity values, a weak set value whose elements are all identity values, or a weak function value
  such that for $(x \mapsto y) \in f$ we have that $x$ and $y$ are identity values.
  \label{def:idvalue}
\end{definition}

The following lemma is straightforward.

\begin{lemma}
  For any weak value $x$ we have that $\leftop(x)$ and $\rightop(x)$ are identity values and if $x$
  is an identity value then $\leftop(x) = x$ and $\rightop(x) = x$.
  \label{lem:identity}
\end{lemma}

\begin{lemma}[Domain Lemma.] For any weak function $f$ we have $\domop(f^\inv) = \domop(f)^\inv$, $\domop(\leftop(f)) = \leftop(\domop(f))$  and\newline
  $\domop(\rightop(f)) = \rightop(\domop(f))$.
  Also, for any weak function values $f$ and $g$ with $f \circ g$ defined we have $\domop(f \circ g)
  = \domop(f) \circ \domop(g)$.
\end{lemma}

\begin{proof}
  The case of inverse follows from the duality of left and right.  The cases of $\leftop(f)$ and
  $\rightop(f)$ are immediate from the definition.  For the case of composition we note that
  $\rightop(f) = \leftop(g)$ implies that for every pair $(x \mapsto y) \in f$ we have that the pair
  $\leftop(x) \mapsto \leftop(y))$ is equal to some pair $\rightop(x') \mapsto \rightop(y'))$ for
  $(x' \mapsto y') \in g$.  This implies that for every $x \in \domop(f)$ we have that $f \circ g$
  contains a pair of the form $x \circ x' \mapsto y \circ y'$ and hence $\domop(f \circ g) =
  \domop(f) \circ \domop(g)$.
\end{proof}

The following lemma follows from the duality of left and right.

\begin{lemma}[Property (7.3)] For any weak value $x$ we have\newline
  $\leftop(x^\inv) = \rightop(x)$ and $\rightop(x^\inv) = \leftop(x)$
  \end{lemma}

\begin{lemma}[Property (7.4)] For weak values $x$ and $y$ we have
  $\leftop(x \circ y) = \leftop(x)$ and $\rightop(x \circ y) = \rightop(y)$.
\end{lemma}

\begin{proof}
  The proof is by induction on value size. The case of Booleans and points is immediate.  The case
  of pairs follows straightforwardly from the induction hypothesis.

  We first show that for sets $\sigma$ and $\tau$ we have that $\leftop(\sigma \circ \tau) =
  \leftop(\sigma)$.  By the partner lemma \ref{lem:partner} for every $x \in \sigma$ that exists $y
  \in \tau$ with $x \circ y$ defined.  We then have that that $\leftop(\sigma \circ \tau)$ equals
  the set of values of the form $\leftop(x \circ y)$ for $x \in \sigma$ and $y \in \tau$ which by
  the induction hypothesis equals $\leftop(x)$ for $x \in \sigma$ which equals $\leftop(\sigma)$.
  The proof for functions is similar.
\end{proof}

\begin{lemma}[Property (7.5)] For weak values $x$, $y$ and $z$ we have $(x \circ y) \circ z = x \circ (y \circ z)$.
\end{lemma}

\begin{proof}
  The proof is by induction on value size.  We will consider sets; the proof functions is similar.

  Consider sets $\sigma$, $\tau$ and $\gamma$ with with $(\sigma \circ \tau) \circ \gamma$ defined.
  By property (7.4) proved above we have that $\sigma \circ (\tau \circ \gamma)$ is also defined.
  The induction hypothesis implies that for $x \in \sigma$, $y \in \tau$ and $z \in \gamma$ we have
  $(x \circ y) \circ z = x \circ (y \circ z)$ which implies the result.
\end{proof}

{\bf Properties (7.8) and (7.9)} follow for all weak values from the duality of left and right.

\section{Value groupoid properties}

\begin{lemma}[Property (7.1)]
For a value $x$ we have that $x^\inv$, $\leftop(x)$ and $\rightop(x)$ are also values.
\end{lemma}

\begin{proof}  For $x^\inv$ the result follows from the duality of left and right.  We consider the case of $\leftop(x)$.
  The proof is by induction on the size of $x$.  For Booleans and points the result is immediate.
  For pairs the result follows directly from the induction hypothesis.  For a set value $\sigma$ we
  have that every member of $\leftop(\sigma)$ has the form $\leftop(x)$ for $x \in \sigma$ and by
  the induction hypothesis this is a value.  We must also check that $\leftop(\sigma)$ is bijective.
  But this follows from the fact that every element $z \in \leftop(\sigma)$ is an identity value
  (definition~\ref{def:idvalue}) and for any identity value $z$ we have $\leftop(z) = \rightop(z) =
  z$.  For a function value $f$ we must show that $\leftop(f)$ is functional --- no two pairs of $f$
  have the same input value.  But since $\domop(f)$ is bijective, for each value $z \in \domop(f)$
  there is a unique $x \in \domop(f)$ with $z = \leftop(x)$.  Since $f$ is functional there is a
  unique pair $(x \mapsto y)$ in $f$ with $z = \leftop(x)$.  This implies that there is a unique
  pair in $\leftop(f)$ with input value $z$.
\end{proof}

\begin{lemma}[Property (7.2)]
For two values $x$ and $y$ with $x \circ y$ defined we have that $x \circ y$ is a value.
\end{lemma}

\begin{proof}
The proof is by induction on value size.  The result is immediate for Boolean values and points and
follows straightforwardly from the induction hypothesis for pairs.

Consider two set values $\sigma$ and $\tau$ such that $\sigma \circ \tau$ is defined.  By the
induction hypothesis for $x \circ y \in \sigma \circ \tau$ we have that $x \circ y$ is a value.  It
remains to show that $\sigma \circ \tau$ is bijective.  We first note that for $z \in \sigma \circ
\tau$ there exists unique $x \in \sigma$ and $y \in \tau$ with $z = x \circ y$ --- there is a unique
$x \in \sigma$ with $\leftop(x) = \leftop(z)$ and a unique $y \in \tau$ with $\rightop(y) =
\rightop(z)$.  Hence for $z,z' \in \sigma \circ \tau$ with $z \not = z'$ we have $z = x\circ y$ and
$z' = x' \circ y'$ with $x \not = x'$ and $y \not = y'$ which implies that $\leftop(z) \not =
\leftop(z')$ and $\rightop(z) \not = \rightop(z')$.  Hence $\sigma \circ \tau$ is bijective.

Now consider two function values with $f \circ g$ defined. The induction hypothesis implies that for
every input-output pair $x \circ y \mapsto f(x) \circ f(y)$ we have that $x \circ y$ and $f(x) \circ
f(y)$ are values.  By the domain lemma we also have that $\domop(f \circ g)$ equals $\domop(f) \circ
\domop(g)$ which by the induction hypothesis is a set value.  Finally we must show that $f \circ g$
is functional.  But for $z \in \domop(f\circ g) = \domop(f) \circ \domop(g)$ we have $(f \circ g)(z)
= f(x) \circ g(y)$ where $x$ and $y$ are the unique values in $\domop(f)$ and $\domop(g)$ with $z =
x\circ y$.
\end{proof}

\begin{lemma}[Property (7.6)] For values $x$ and $y$ we have $x^\inv \circ x \circ y = y$ and $x \circ y \circ y^\inv = x$.
\end{lemma}

\begin{proof}
  The proof is by induction on value size.  We consider sets; the proof for functions is similar.
  Consider sets $\sigma$ and $\tau$ with $\sigma \circ \tau$ defined.  We have that $\sigma^\inv
  \circ \sigma \circ \tau$ is the set of values of the form $x_1^\inv \circ x_2 \circ y$ with
  $x_1,x_2 \in \sigma$ and $y \in \tau$.  But since no two values in $\sigma$ have the same right
  value we must have that $x_2 = x_1$.  By the induction hypothesis we then have that $x_1^\inv
  \circ x_1 \circ y = y$.  By the partner lemma \ref{lem:partner} for every $y \in \tau$ there
  exists $x \in \sigma$ with $x \circ y$ defined.  These facts together imply that the set of
  instances of $\sigma^\inv \circ \sigma \circ \tau$ are exactly the members of $\tau$ which proves
  the result.
\end{proof}

\begin{lemma}[Property (7.7)] For a value $x$ we have $\rightop(x) = x^\inv \circ x$ and $\leftop(x) = x \circ x^\inv$
\end{lemma}

\begin{proof}
  The proof is by induction on value size.  We consider sets.  Consider a set $\sigma$.  We have
  that $\leftop(\sigma)$ equals the set of values of the form $\leftop(x)$ for $x \in \sigma$.  By
  the induction hypothesis this is the same as the set of values of the form $x \circ x^\inv$.
  Since no two members of $\sigma$ have the same left value, this is the same as the set of values
  of the form $x_1 \circ x_2^\inv$ for $x_1,x_2 \in \sigma$ but this is the same as the set of
  values of $\sigma \circ \sigma^\inv$. Other cases are similar.
\end{proof}

\section{The abstraction properties}
\label{sec:AbsProps2}

All of the abstraction properties hold for weak values with the exception of (8.1) which states that
for any (strong) value $x$ with $x@{\mathcal T}$ defined we have that $x@{\mathcal T}$ is a (strong) value.
We prove (8.1) at the end of this section after showing the other properties for weak values.

The following lemma can be proved by a straightforward structural induction on ${\mathcal T}$.

\begin{lemma} For a weak value $x$ with $x@{\mathcal T}$ defined we have
  $\intype{(x@{\mathcal T})}{\mathcal T}$.
   \label{lem:closed2}
\end{lemma}

We note that lemma~\ref{lem:closed2} implies that the weak values are closed under abstraction ---
if $x$ is a weak value, and $x@{\mathcal T}$ is defined, then $x@{\mathcal T}$ is a weak value.

The following lemma can also be proved by a straightforward structural induction on ${\mathcal T}$.

\begin{lemma}[Property (8.2)] For any weak value $x$ we have that $x@{\mathcal T} = x$ if and only if $\intype{x}{\mathcal T}$.
\end{lemma}

\begin{lemma} For a weak value $x$ with $x@{\mathcal T}$ defined we have that
    $\leftop(x)@{\mathcal T}$, $\rightop(x)@{\mathcal T}$
    and $x^\inv@{\mathcal T}$ are all defined and for weak values $x$ and $y$ with $x \circ y$ defined
    and with $x@{\mathcal T}$ defined we also have $(x \circ y)@{\mathcal T}$ and $y@{\mathcal T}$ are defined.
    \label{lem:closed1a} 
\end{lemma}

\begin{proof} The proof is by structural induction on ${\mathcal T}$.  For the operations of left, right and inverse the result follows directly from
  the induction hypothesis and the definition of the operations given in figure~\ref{fig:values}.
  For composition the result is immediate for Booleans and points and follows directly from the
  induction hypothesis for pairs.  Now consider two weak sets $\sigma$ and $\tau$ such that $\sigma
  \circ \tau$ is defined and with $\sigma@\setof({\mathcal T})$ defined.  An element of $\sigma \circ
  \tau$ has the form $x \circ y$ for $x \in \sigma$ and $y \in \tau$ and where we have $x@{\mathcal T}$
  defined.  By the induction hypothesis this gives that $(x \circ y)@{\mathcal T}$ is defined and that
  $y@{\mathcal T}$ is defined. This gives that $(\sigma \circ \tau)@\setof({\mathcal T})$ is defined.  To
  show that $\tau@\setof({\mathcal T})$ is defined we note that the partner lemma \ref{lem:partner}
  gives us that for all $y \in \tau$ there exists $x \in \sigma$ with $x \circ y$ defined. The proof
  for a composition of weak functions (which are not required to be functional) is similar.
\end{proof}

Recall that for weak values $x$ and $y$ figure~\ref{fig:Abstraction} define $x \sim y$ to mean that
there exists a value $z$ with $x \circ z \circ y$ defined.  The following two properties are
corollaries of lemmas~\ref{lem:closed1} and~\ref{lem:closed1a} respectively.

\begin{lemma}[Property (8.4)] For a weak value $x$ with $\intype{x}{\mathcal T}$ and $x \sim y$ we have $\intype{y}{\mathcal T}$.
\end{lemma}

\begin{lemma}[Property (8.5)] For a weak value $x$ with $x@{\mathcal T}$ defined and $x \sim y$ we have $y@{\mathcal T}$ defined.
\end{lemma}

\begin{lemma}[Property (8.3)] $\sim$ is an equivalence relation on weak values and for any weak value $x$ we have $x \sim x^\inv \sim \leftop(x) \sim \rightop(x)$ and for weak values $x$ and $y$ with $x \circ y$
  defined we have $x \sim (x \circ y) \sim y$.
\end{lemma}

\begin{proof}
  First we show that $\sim$ is an equivalence relation. Since $x \circ x^\inv \circ x$ is defined we
  have $x \sim x$.  Given $x \sim y$ we have $x \circ z \circ y$ is defined for some $z$.  We then
  have that $y \circ (y^\inv \circ z^\inv \circ x^\inv) \circ x$ is defined giving $y \sim x$.
  Finally assume $x \sim y$ and $y \sim z$.  We then have that $x \circ w \circ y$ is defined and $y
  \circ s \circ z$ is defined.  We then have that $x \circ (w \circ y \circ s) \circ z$ is defined
  giving $x \sim z$.

  Next we note that if $x \circ y$ is defined then $x \circ (x^\inv \circ x) \circ y$ is defined
  giving $x \sim y$.  Since $x\circ x^\inv$ is defined we have $x \sim x^\inv$ and similarly $x \sim
  \leftop(x)$, $x \sim \rightop(x)$, $(x \circ y) \sim y^\inv \sim y$ and $x^\inv \sim (x \circ y) \sim x$.
\end{proof}

We now prove some lemmas supporting (8.6).

\begin{lemma} For any identity value $x$ such that $x@{\mathcal T}$ is defined we have that $\pointify(x@{\mathcal T}) = \pointify(x)$.
\end{lemma}

\begin{proof} The proof is a straightforward structural induction on ${\mathcal T}$. The result is immediate for ${\mathcal T} = \bool$ or
  ${\mathcal T} = \pointt$. We explicitly consider the case of ${\mathcal T} = \setof({\mathcal T}')$.
  \begin{eqnarray*}
    \pointify(\sigma@\setof({\mathcal T}')) & = & \pointt\left(\begin{array}{l} \subpoint(\sigma@\setof({\mathcal T}')), \\ \subpoint(\sigma@\setof({\mathcal T}') \end{array}\right) \\
    \subpoint(\sigma@\setof({\mathcal T}')) & = & \{\pointify(x@{\mathcal T}'),\;x\in \sigma\} \\
    & = & \{\pointify(x),\;x\in \sigma\} \\
    & = & \subpoint(\sigma) \\
    \pointify(\sigma@\setof({\mathcal T}')) & = & \pointt(\subpoint(\sigma),\;\subpoint(\sigma)) \\
    & = & \pointify(\sigma)
    \end{eqnarray*}
\end{proof}

\begin{corollary} For a weak value $x$ with $x@{\mathcal T}$ defined we have $(x@{\mathcal T})@\pointt = x@\pointt$.  \label{cor:atpoint}
\end{corollary}

\begin{lemma}[Property (8.6)]  For a weak value $x$ with $(x@{\mathcal T}_1)@{\mathcal T}_2$ defined we have that $(x@{\mathcal T}_1)@{\mathcal T}_2 = x@{\mathcal T}_2$.
\end{lemma}

\begin{proof}
  The proof is by structural induction on ${\mathcal T}_2$.  The case of ${\mathcal T}_2 = \pointt$ is
  handled by corollary~\ref{cor:atpoint}.  The other cases are straightforward.
\end{proof}

\begin{lemma}[Property (8.7)] For $x@{\mathcal T}$ defined we have $x \preceq x@{\mathcal T}$.
\end{lemma}

\begin{proof}
  Suppose that $x@{\mathcal T}@{\mathcal S}$ is defined. By (8.7) proved above we have $x@{cal T}@{\mathcal S} = x@{\mathcal S}$. The result then follows from the definition of $x \prec x@{\mathcal T}$.
\end{proof}

\begin{corollary}[Property (8.8)] $\preceq$ is a partial order on weak values.
\end{corollary}

\begin{lemma}[Property (8.9)] For any weak value $x$ with $(x@{\mathcal T}_1)@{\mathcal T}_2$ and $(x@{\mathcal T}_2)@{\mathcal T}_1$ both defined we have
  $x@{\mathcal T}_1 = x@{\mathcal T}_2$.
\end{lemma}

\begin{proof}
  The proof is by induction on the size of $x$.  The result is immediate if $x$ is a point or
  Boolean value and follows straightforwardly form the induction hypothesis if $x$ is a pair.  Now
  consider a weak set $\sigma$ and suppose that $(\sigma@\setof{\mathcal T}_1))@\setof({\mathcal T}_2)$
  and $(\sigma@\setof{\mathcal T}_12))@\setof({\mathcal T}_1)$ are both defined.  For each $x \in \sigma$ we then
  have that both $(x@{\mathcal T}_1)@{\mathcal T}_2$ and $(x@{\mathcal T}_2)@{\mathcal T}_1$ are defined.  By the
  induction hypothesis we then have that $x@{\mathcal T})_1 = x@{\mathcal T}_2$ But this implies that
  $\sigma@\setof({\mathcal T}_1) = \sigma@\setof({\mathcal T}_2)$. The case of weak function values is
  similar.
\end{proof}

The following lemma is immediate form the duality of left and right.

\begin{lemma}[Property (8.10)] For any weak value $x$ with $x@{\mathcal T}$ defined we have $x^\inv@{\mathcal T} = (x@{\mathcal T})^\inv$.
\end{lemma}

We will now prove a series of lemmas supporting property (8.11).

\begin{lemma}[$@\pointt$ commutes with left and right] For any weak value $x$ we have that $\leftop(x)@\pointt = \leftop(x@\pointt)$ and $\rightop(x)@\pointt = \rightop(x@\pointt)$  \label{lem:atpoint1}
\end{lemma}

\begin{proof}  The result is immediate in the case that $x$ is a point.  If $x$ is not a point then we have
  \begin{eqnarray*}
    \leftop(x)@\pointt & = & \pointt\left(\begin{array}{l}\subpoint(\leftop(\leftop(x))), \\ \subpoint(\rightop(\leftop(x)))\end{array}\right) \\
    & = & \pointt(\subpoint(\leftop(x)),\subpoint(\leftop(x))) \\
    & = & \leftop(\pointt(\subpoint(\leftop(x)),\subpoint(\rightop(x)))) \\
    & = & \leftop(x@\pointt)
  \end{eqnarray*}
\end{proof}

\begin{definition}[Abstraction Inverse]  We define the partial operation $x \downarrow {\mathcal T}$ for an identity value $x$ and template ${\mathcal T}$ by the following rules
  where the operation is undefined if no rule applies or the right hand side of the applicable rule
  is undefined.
  \begin{eqnarray*}
    \pointt(\Phi,\Phi)\downarrow\bool & = & \Phi \\
    \pointt((x,y),(x,y))\downarrow({\mathcal T}_1 \times {\mathcal T}_2) & = & (x\downarrow {\mathcal T}_1,\;y\downarrow{\mathcal T}_2) \\
    \pointt(\sigma,\sigma)\downarrow\setof({\mathcal T}) & = & \{x\downarrow {\mathcal T},\;x \in \sigma\} \\
    \pointt(f,f)\downarrow({\mathcal T}_1\rightarrow {\mathcal T}_2) & = & \{x\downarrow {\mathcal T}_1 \mapsto y\downarrow {\mathcal T}_2,\;(x \mapsto y) \in f\} \\
    \\
    \Phi \downarrow \bool & = & \Phi \\ \pointt(i,i)\downarrow\pointt & = & \pointt(i,i) \\
    (x,y) \downarrow ({\mathcal T}_1 \times {\mathcal T}_2) & = & (x\downarrow {\mathcal T}_1,\;y\downarrow {\mathcal T}_2) \\
    \sigma \downarrow \setof({\mathcal T}) & = & \{x \downarrow {\mathcal T},\;x \in \sigma\} \\
    f \downarrow {\mathcal T}_1 \rightarrow {\mathcal T}_2 & = & \{x\downarrow {\mathcal T}_1 \mapsto y \downarrow {\mathcal T}_2,\;(x \mapsto y) \in f\}
  \end{eqnarray*}
\end{definition}

\begin{lemma} Abstraction of identity values is invertible.  More specifically, for any identity value $x$ with $\intype{x}{\mathcal T}$ and with $x@{\mathcal T}'$ defined we have $(x@{\mathcal T}')\downarrow {\mathcal T} = x$.
\end{lemma}

\begin{proof}
  The proof is by induction on the template ${\mathcal T}$.  Note in particular that for
  $\intype{x}{\mathcal T}$ we must show that $(x@\pointt)\downarrow {\mathcal T} = x$.  We omit the details.
\end{proof}

\ignore{
\begin{lemma}[Abstraction Distributes Out] If $(x@ \pointt)\circ(y@\pointt)$ is defined, and $\intype{x}{\mathcal T}$ and $\intype{y}{\mathcal T}$, then $x \circ y$ is defined
  and hence $(x@\pointt)\circ (y@\pointt) = (x \circ y)@\pointt$.
  \label{lem:distout} 
\end{lemma}

\begin{proof}
  By lemma~\ref{lem:distin} it suffices to show that $x \circ y$ is defined.  We are given that
  $(x@\pointt) \circ (y@\pointt)$ is defined which implies $\rightop(x@\pointt) =\leftop(y@\pointt)$
  which by lemma~\ref{lem:atpoint1} implies $\rightop(x)@\pointt = \leftop(y)@\pointt$.  But we then
  have
  $$\rightop(x) = (\rightop(x)@\pointt)\downarrow {\mathcal T} = (\leftop(y)@\pointt)\downarrow {\mathcal T}
  = \leftop(y)$$ which proves the result.
\end{proof}
}

\begin{lemma}[Property (8.12)] For any weak values $x$ and $y$ with $\intype{x}{\mathcal T}$ and $\intype{y}{\mathcal T}$ and with $(x@{\mathcal T}') \circ (y@{\mathcal T}')$ defined
  we have that $x \circ y$ is defined. \label{lem:distout}
\end{lemma}

\begin{proof}
  \begin{eqnarray*}
    \rightop(x) & = & (\rightop(x)@{\mathcal T}') \downarrow {\mathcal T} \\ & = & \rightop(x@{\mathcal T}')
    \downarrow {\mathcal T} \\ & = & \leftop(y@{\mathcal T}') \downarrow {\mathcal T} \\ & = & (\leftop(y)@{\mathcal
      T}') \downarrow {\mathcal T} \\ & = & \leftop(y)
  \end{eqnarray*}
\end{proof}

\begin{lemma}[Property (8.11)] For any weak values $x$ and $y$ with $(x \circ y)@{\mathcal T}$ defined we have
  $(x \circ y)@{\mathcal T} = (x@{\mathcal T}) \circ (y@{\mathcal T})$.
\end{lemma}

\begin{proof}  The proof is by induction on the template ${\mathcal T}$. The result is immediate for Booleans.
  For points we have the following calculation.
  \begin{eqnarray*}
    (x \circ y)@\pointt & = & \pointt(\subpoint(\leftop(x \circ y)), \subpoint(\rightop(x \circ y)))
    \\ & = & \pointt(\subpoint(\leftop(x)), \subpoint(\rightop(y))) \\ & = &
    \pointt(\subpoint(\leftop(x)), \subpoint(\rightop(x))) \\ & & \circ
    \pointt(\subpoint(\leftop(y)), \subpoint(\rightop(y))) \\ & = & (x@\pointt) \circ (y@\pointt)
  \end{eqnarray*}
  For pairs the result follows straightforwardly from the induction hypothesis.
  
  Now consider weak sets $\sigma$ and $\tau$ with $(\sigma \circ \tau)@\setof({\mathcal T})$ defined.
  We have
  \begin{eqnarray*}
    (\sigma \circ \tau)@\setof({\mathcal T}) & = & \{(x \circ y)@{\mathcal T},\;x \in \sigma,\;y \in \tau\}
    \\ & = & \{(x@{\mathcal T}) \circ (y@{\mathcal T}),\;\;x \in \sigma,\;y \in \tau\} \\ & = &
    (\sigma@\setof({\mathcal T})) \circ (\tau@\setof({\mathcal T}))
  \end{eqnarray*}
  The validity of the second line is subtle.  The induction hypothesis implies that every value of
  the form $(x \circ y)@{\mathcal T}$ can be written as $(x@{\mathcal T}) \circ (y@{\mathcal T})$.  But for the
  equality of the sets we must also have that every value of the form $(x@{\mathcal T}) \circ (y@{\mathcal
    T})$ can be written as $(x \circ y)@{\mathcal T}$.  Since $\sigma$ is a weak value we have
  $\intype{\sigma}{\setof({\mathcal T}')}$ for some ${\mathcal T}'$ which implies
  $\intype{\tau}{\setof({\mathcal T}')}$ which implies $\intype{x,y}{{\mathcal T}'}$.  By
  lemma~\ref{lem:distout} we then have that if $(x@{\mathcal T}) \circ (y@{\mathcal T})$ is defined then $x
  \circ y$ is defined and by the induction hypothesis $(x@{\mathcal T}) \circ (y@{\mathcal T}) = (x\circ
  y)@{\mathcal T}$.

  The case of functions is similar.
\end{proof}

Finally we prove the only abstraction property specific to (strong) values.

\begin{lemma}[Property (8.1)] Values are closed under abstraction. More specifically, for a value $x$ with $x@{\mathcal T}$ defined we have
  that $x@{\mathcal T}$ is a value.
\end{lemma}

\begin{proof}
  The proof is by structural induction on the template ${\mathcal T}$. The result is immediate for
  Booleans and points and follows straightforwardly from the induction hypothesis for pairs.

  Consider a set value $\sigma$ and a template ${\mathcal T}$ such that $\sigma@\setof({\mathcal T})$ is
  defined.  We have already proved abstractions of weak values are weak values and by the induction
  hypothesis we have that $x@{\mathcal T}$ is a value for each $x \in \sigma$.  It remains only to show
  that $\sigma@\setof({\mathcal T})$ is bijective.  It suffices to show that for $x,y \in \sigma$ we
  have that $\leftop(x@{\mathcal T}) = \leftop(y@{\mathcal T})$ implies that $x = y$ and similarly for
  $\rightop$. But $\leftop(x@{\mathcal T}) = \leftop(y@{\mathcal T})$ implies that $(x@{\mathcal T}) \circ
  (y^\inv@{\mathcal T})$ is defined and lemma~\ref{lem:distout} then implies that $x \circ y^\inv$ is
  defined which implies $\leftop(x) = \leftop(y)$ which implies $x=y$. The case of $\rightop$ is
  similar.

  Now consider a function value $f$ with $f@({\mathcal T}_1 \rightarrow {\mathcal T}_2)$ defined.  Again we
  have already shown that an abstraction of weak value is a weak value and by the induction
  hypothesis we have that for $(x \mapsto y) \in f$ we have that $x@{\mathcal T}_1$ and $y@{\mathcal T}_2$
  are values.  The induction hypothesis also gives us that $\domop(f@({\mathcal T}_1 \rightarrow {\mathcal
    T}_2)) = \domop(f)@\setof({\mathcal T}_1)$ is a set value and hence is bijective.  Consider
  $(x@{\mathcal T}_1 \mapsto y@{\mathcal T}_2)$ and $(x'@{\mathcal T}_1 \mapsto y'@{\mathcal T}_2)$ in $f@({\mathcal
    T}_1 \rightarrow {\mathcal T}_2)$.  To show that $f@({\mathcal T}_1 \rightarrow {\mathcal T}_2)$ is
  functional it suffices to show that $x@{\mathcal T}_1 = x'@{\mathcal T}_1$ implies $x = x'$.  Since
  $\domop(f)$ is bijective it suffices to show that $\leftop(x) = \leftop(x')$.  Since $\domop(f)$
  is a weak set value we have $\intype{\domop(f)}{\setof({\mathcal T}')}$ for some template ${\mathcal T}'$
  which implies $\intype{x,x'}{{\mathcal T}'}$.  But we have $x@{\mathcal T}_1 = x'@{\mathcal T}_1$ implies that
  $((x')^\inv@{\mathcal T}_1) \circ (x@{\mathcal T}_1)$ is defined and by lemma~\ref{lem:distout} we have
  that $(x')^\inv \circ x$ is defined which implies $\leftop(x') = \leftop(x)$.
\end{proof}

\section{The groupoid properties for classes}
\label{sec:ClassProofs}

\begin{lemma} Every set value is a class.  Furthermore, for a set value $\sigma$ with $\intype{\sigma}{\setof({\mathcal T})}$ we have that ${\mathcal T}$ is an interface template for $\sigma$
  and for $x,y \in \sigma$ we have $x =_\sigma y$ if and only if $x=y$. \label{lem:setclass}
\end{lemma}

\begin{proof}
  We note that $\intype{\sigma}{\setof({\mathcal T})}$ implies that for $x\in \sigma$ we have
  $\intype{x}{\mathcal T}$ which implies that $x@{\mathcal T} = x$.  So for a set value $\sigma$ and $x \in
  \sigma$ we have $x@\sigma = x$.  The morphoid closure condition for $\sigma$ follows from the
  bijectivity of $\sigma$ which implies that for $x,y,z \in \sigma$ with $x \circ z^\inv \circ y$
  defined we must have $x = z = y$.  This also implies that $x =_\sigma y$ if and only if $x = y$.
\end{proof}

\begin{lemma} The definitions of the groupooid operations and the abstraction ordering on set values agree with the more general definitions for classes.
\end{lemma}

\begin{proof}
  Inverse and composition are defined the same way sets and classes.  To show that the two
  definitions of $\leftop$ agree it suffices to show that the for a set value $\sigma$ we have that the
  set of values of the form $x_1 \circ x_2^\inv$ for $x_1,x_2 \in \sigma$ is the same as the set of
  values of the form $\leftop(x)$ for $x \in \sigma$.  But by property (7.7) we have that
  $\leftop(x) = x \circ x^\inv$ and by the bijectivity of $\sigma$ for $x_1 \circ x_2^\inv$ defined
  we must have $x_1 = x_2$.  The proof for $\rightop$ is similar.

  The equivalence of the abstraction order definition follows from the fact that for set values
  $\sigma$ and $\tau$ we have that $\{x@\tau,\;x \in \tau\} = \tau$ and $\{x@\tau,\;x \in \sigma\} =
  \sigma@\setof({\mathcal T})$ where we have $\intype{\tau}{\setof({\mathcal T})}$.
\end{proof}

\begin{lemma}[(7.1) for classes] The Morphopid Classes are closed under $(\cdot)^\inv$, $\leftop$ and $\rightop$
  and any interface template for $\sigma$ is also an interface template for $\sigma^\inv$,
  $\leftop(\sigma)$ and $\rightop(\tau)$. \label{lem:InterfaceTransfer1}
\end{lemma}

\begin{proof}
The case of inverse follows from the duality of left and right.  We will show that $\leftop(\sigma)$
is a class.  We must show that $\leftop(\sigma)$ satisfies the morphoid closure condition (9.1) and
has an interface template as required by (9.2).  We let $x$ range over members of $\sigma$.  The
elements of $\leftop(\sigma)$ are (strong) the values of the form $x_1 \circ x_2^\inv$.

We first consider morphoid closure. Suppose that $(x_1 \circ x_2^\inv) \circ (x_3 \circ
x_4^\inv)^\inv \circ (x_5 \circ x_6^\inv)$ is defined.  By the groupoid properties of values we have
$$(x_1 \circ x_2^\inv) \circ (x_3 \circ x_4^\inv)^\inv \circ (x_5 \circ x_6^\inv) = (x_1 \circ
x_2^\inv \circ x_4) \circ (x_6 \circ x_5^\inv \circ x_3)^\inv$$ which proves morphoid closure.

Next we consider (9.2) --- the existence of an interface template.  Let ${\mathcal T}$ be an interface
template for $\sigma$ and let $x_1 \circ x_2^\inv$ be an element of $\leftop(\sigma)$.  We have
$x_1@{\mathcal T}$ is defined and $x_1 \sim (x_1 \circ x_2^\inv)$ so we have that $(x_1 \circ
x_2^\inv)@{\mathcal T}$ is defined and equal to $(x@{\mathcal T}) \circ (x_2@{\mathcal T})^\inv \in
\leftop(\sigma)$.
\end{proof}

\begin{lemma}[(7.2) for classes]
  For morphoid classes $\sigma$ and $\tau$ with $\sigma \circ \tau$ defined we have that $\sigma
  \circ \tau$ is a class and any interface template for $\sigma$ or $\tau$ is an interface
  template for $\sigma$, $\tau$ and $\sigma \circ \tau$. \label{lem:InterfaceTransfer2}
\end{lemma}

\begin{proof}
Again we must show (9.1) and (9.2).  We let $x$ range over elements of $\sigma$ and $y$ range over
elements of $\tau$.

The elements of $\sigma \circ \tau$ are the values of the form $x \circ y$.  We must show that for
$(x_1\circ y_1)\circ (x_2 \circ y_2)^{-1} \circ (x_3 \circ y_3)$ defined we have that this
composition is in $\sigma \circ \tau$.  Since $\rightop(\sigma) = \leftop(\tau)$, every value of the
form $y_1 \circ y_2^\inv$ can be written as $x_1^\inv \circ x_2$.  We then have the following.
\begin{eqnarray*}
  & & (x_1\circ y_1)\circ (x_2 \circ y_2)^{-1} \circ (x_3 \circ y_3) \\
  & = & x_1\circ (y_1 \circ y_2^{-1})  \circ x_2^{-1} \circ x_3 \circ y_3 \\
  & = & x_1\circ (x_4^{-1} \circ x_5) \circ x_2^{-1} \circ x_3  \circ y_3 \\
  & = & ((x_1\circ x_4^{-1} \circ x_5) \circ x_2^{-1} \circ x_3) \circ y_3 \\
  & = & x_7  \circ y_3
\end{eqnarray*}
We must also show (9.2).  Let ${\mathcal T}_1$ be an interface template for $\sigma$ and let ${\mathcal
  T}_2$ be an interface template for $\tau$.  For any $x \in \sigma$ we have $\rightop(x@{\mathcal T}_1)
= (x@{\mathcal T}_1)^\inv \circ (x@{\mathcal T}_1) \in \rightop(\sigma) = \leftop(\tau)$.  By the preceding
lemma we have that ${\mathcal T}_2$ is an interface template for $\leftop(\tau)$ which implies that
$((x@{\mathcal T}_1)^\inv \circ (x@{\mathcal T}_1))@{\mathcal T}_2$ is defined which implies that $(x@{\mathcal
  T}_1)@{\mathcal T}_2$ is defined.  But we also have $\rightop(x) \in \leftop(\tau)$ which implies that
$\rightop(x)@{\mathcal T}_2 \in \leftop(\tau) = \rightop(\sigma)$ which gives that $(\rightop(x)@{\mathcal
  T}_2)@{\mathcal T}_1$ is defined which implies that $(x@{\mathcal T}_2)@{\mathcal T}_1$ is defined.  We then
have that $x@{\mathcal T}_1 = x@{\mathcal T}_2$.  Similarly, for $y \in \tau$ we have $y@{\mathcal T}_2 =
y@{\mathcal T}_1$.  This implies that both ${\mathcal T}_1$ and ${\mathcal T}_2$ are interface templates for
both $\sigma$ and $\tau$ and we have $(x \circ y)@{\mathcal T}_1 = (x@{\mathcal T}_1) \circ (y@{\mathcal T}_1)
\in \sigma \circ \tau$ and similarly for ${\mathcal T}_2$.  So both ${\mathcal T}_1$ and ${\mathcal T}_2$ are
also interface templates for $\sigma \circ \tau$.

\end{proof}

\begin{lemma}[(7.3) for classes] $\leftop(x^\inv) = \rightop(x)$ and $\rightop(x^\inv) = \leftop(x)$.
\end{lemma}

\begin{proof}
This is a consequence of the duality of left and right.
\end{proof}

\begin{lemma}[(7.4) for classes]  $\leftop(\sigma \circ \tau) = \leftop(\sigma)$ and $\rightop(\sigma \circ \tau) = \rightop(\tau)$.
\end{lemma}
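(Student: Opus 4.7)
The plan is to prove both set inclusions in $\leftop(\sigma \circ \tau) = \leftop(\sigma)$ separately, then obtain the companion equation $\rightop(\sigma \circ \tau) = \rightop(\tau)$ by the left-right duality noted at the start of section~\ref{sec:Morphoids}. Unpacking figure~\ref{fig:operations}, $\leftop(\sigma \circ \tau)$ consists of the values $(x_1 \circ y_1) \circ (x_2 \circ y_2)^\inv$ for $x_1,x_2 \in \sigma$ and $y_1,y_2 \in \tau$ (whenever the composition is defined), while $\leftop(\sigma)$ consists of the values $x_3 \circ x_4^\inv$ for $x_3,x_4 \in \sigma$. Throughout I will freely use the groupoid axioms (G1)--(G9) and the already-established Lemmas G1--G3 for types, all of which are in hand at this point of the simultaneous induction.

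For the forward inclusion, I would regroup a representative $(x_1 \circ y_1) \circ (x_2 \circ y_2)^\inv$ as $x_1 \circ (y_1 \circ y_2^\inv) \circ x_2^\inv$. Since $\sigma \circ \tau$ is defined, $y_1 \circ y_2^\inv \in \leftop(\tau) = \rightop(\sigma)$, so we may write $y_1 \circ y_2^\inv = x_5^\inv \circ x_6$ for some $x_5, x_6 \in \sigma$. Condition (T1) from the definition of weak type then places $x_1 \circ x_5^\inv \circ x_6$ back into $\sigma$, exhibiting the element as $(x_1 \circ x_5^\inv \circ x_6) \circ x_2^\inv \in \leftop(\sigma)$.

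For the reverse inclusion, given $x_1 \circ x_2^\inv \in \leftop(\sigma)$ I would produce a single witness $y \in \tau$ with $\leftop(y) = \rightop(x_1)$; setting $y_1 = y_2 = y$ and using $y \circ y^\inv = \leftop(y)$ together with the groupoid identity $x_1 \circ \rightop(x_1) = x_1$ then rewrites $x_1 \circ x_2^\inv$ as $(x_1 \circ y) \circ (x_2 \circ y)^\inv$, which is visibly an element of $\leftop(\sigma \circ \tau)$. To obtain such a $y$, note that $\rightop(x_1) \in \rightop(\sigma) = \leftop(\tau)$, so $\rightop(x_1) = y_a \circ y_b^\inv$ for some $y_a, y_b \in \tau$. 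A short groupoid computation (using $\rightop(y_a) = \rightop(y_b)$, which holds because $y_a \circ y_b^\inv$ is defined) gives $\leftop(y_a \circ y_b^\inv) = \leftop(y_a)$, while $\rightop(x_1)$ coincides with its own $\leftop$ (being of the form $x_1^\inv \circ x_1$); together these force $\leftop(y_a) = \rightop(x_1)$, so $y := y_a$ works.

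The main obstacle is precisely the reverse inclusion: a priori, the set-level equality $\rightop(\sigma) = \leftop(\tau)$ does not produce an element of $\tau$ with a prescribed $\leftop$-value, and the argument above hinges on $\rightop(x_1)$ being ``diagonal'' in the sense of equaling its own $\leftop$ and $\rightop$. The forward inclusion, by contrast, is essentially just an invocation of the closure condition (T1).
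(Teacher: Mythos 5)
Your proposal is correct and follows essentially the same route as the paper: the forward inclusion is the identical (T1)-plus-regrouping argument using $\leftop(\tau)=\rightop(\sigma)$, and your reverse inclusion, which inserts a single witness $y\in\tau$ with $\leftop(y)=\rightop(x_1)$, is just a repackaging of the paper's step of writing $x_2^\inv\circ x_2$ as $y_1\circ y_2^\inv$ and regrouping to $(x_1\circ y_1)\circ(x_2\circ y_2)^\inv$. Both hinge on the same observation you flag as the crux, namely that the identity $\rightop(x_1)$ lies in $\rightop(\sigma)=\leftop(\tau)$ and is diagonal, so it can be realized by elements of $\tau$.
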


\begin{proof}
We will show $\leftop(\sigma \circ \tau) = \leftop(\sigma)$.  We will use $x$ to range over elements
of $\sigma$ and $y$ range over elements of $\tau$.  We first show that every member of
$\leftop(\sigma \circ \tau)$ is an member of $\leftop(\sigma)$.  A member of $\leftop(\sigma \circ
\tau)$ has the form $(x_1 \circ y_1) \circ (x_2 \circ y_2)^{-1}$.  Since $\rightop(\sigma) =
\leftop(\tau)$ we have that every value of the form $y_1^\inv \circ y_2$ can be written as $x_1
\circ x_2^\inv$.  We then have:
\begin{eqnarray*}
(x_1 \circ y_1) \circ (x_2 \circ y_2)^{-1} & = & x_1 \circ (y_1 \circ y_2^{-1}) \circ x_2^{-1} \\ &
  = & x_1 \circ (x_3^{-1} \circ x_4) \circ x_2^{-1} \\ & = & (x_1 \circ x_3^{-1} \circ x_4) \circ
  x_2^{-1} \in \leftop(\sigma)
\end{eqnarray*}
For the converse we consider a value $x_1 \circ x_2^{-1}$ in $\leftop(\sigma)$. For this we have the
following.
\begin{eqnarray*}
x_1 \circ x_2^{-1} & = & x_1 \circ x_2^{-1} \circ x_2 \circ x_2^{-1} \\ & =& x_1 \circ (x_2^{-1}
\circ x_2) \circ x_2^{-1} \\ & =& x_1 \circ (y_1 \circ y_2^{-1}) \circ x_2^{-1} \\ & =& (x_1 \circ
y_1) \circ (y_2^{-1} \circ x_2^{-1}) \\ & =& (x_1 \circ y_1) \circ (x_2 \circ y_2)^{-1} \in
\leftop(\sigma \circ \tau)
\end{eqnarray*}
\end{proof}

\begin{lemma}[(7.5) for classes]  $(\sigma \circ \tau) \circ \gamma$ = $\sigma \circ (\tau \circ \gamma)$.
\end{lemma}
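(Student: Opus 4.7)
The plan is to reduce the statement, which concerns composition of type-level morphoids, to the already established groupoid associativity (G5) applied pointwise to the element morphoids of $\sigma$, $\tau$, $\gamma$, which lie at strictly lower morphoid rank and so fall under the induction hypothesis.

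First I would verify that both sides of the equation are defined under the same conditions, so that the statement even makes sense. By the definition of type composition, $(\sigma \circ \tau) \circ \gamma$ is defined precisely when $\sigma \circ \tau$ is defined and $\rightop(\sigma \circ \tau) = \leftop(\gamma)$. By the just-proved (G4 for Types), $\rightop(\sigma \circ \tau) = \rightop(\tau)$, so this reduces to: $\sigma \circ \tau$ is defined and $\tau \circ \gamma$ is defined. Symmetrically, $\sigma \circ (\tau \circ \gamma)$ is defined under the same pair of conditions (using (G4 for Types) to equate $\leftop(\tau \circ \gamma)$ with $\leftop(\tau)$). If neither composition is defined, the lemma holds vacuously; otherwise, both sides denote weak types, and by (G2 for Types) they are weak types. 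It remains to show they have the same members.

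Second, I would show elementwise equality of the two sets. Let $x$ range over members of $\sigma$, $y$ over members of $\tau$, and $z$ over members of $\gamma$. An arbitrary element of $(\sigma \circ \tau) \circ \gamma$ has the form $w \circ z$ where $w \in \sigma \circ \tau$ and $w \circ z$ is defined; by the definition of type composition, $w = x \circ y$ for some $x,y$ with $x \circ y$ defined. Now the morphoids $x$, $y$, $z$ have rank strictly less than that of the type instance we are analyzing, so the induction hypothesis supplies (G1)--(G9) for them. By (G4) at this lower rank, $\rightop(x \circ y) = \rightop(y)$, so $y \circ z$ is defined, and by (G5) at the lower rank,
$$(x \circ y) \circ z \;=\; x \circ (y \circ z).$$
Since $y \circ z \in \tau \circ \gamma$ and $x \circ (y \circ z)$ is defined, the right-hand side lies in $\sigma \circ (\tau \circ \gamma)$. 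The reverse inclusion is entirely symmetric: an element of $\sigma \circ (\tau \circ \gamma)$ has the form $x \circ (y \circ z)$, which by lower-rank (G4) and (G5) equals $(x \circ y) \circ z \in (\sigma \circ \tau) \circ \gamma$.

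The proof has no real obstacle; the only point requiring care is the bookkeeping of which compositions are defined, so that one may legitimately invoke lower-rank associativity. All of this bookkeeping is handled by the already established (G4 for Types) together with the induction hypothesis for morphoids of lower rank than $\sigma$, $\tau$, $\gamma$.
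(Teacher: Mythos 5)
Your proof is correct and follows essentially the same route as the paper's: use (G2) and (G4) for types to see that the two sides are defined under the same conditions, then observe that the elements of both sides are exactly the values $(x\circ y)\circ z = x\circ(y\circ z)$ for $x\in\sigma$, $y\in\tau$, $z\in\gamma$, invoking the lower-rank groupoid induction hypotheses. Your version merely spells out the definedness bookkeeping that the paper leaves implicit.
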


\begin{proof} Property (7.4) implies that $(\sigma \circ \tau)\circ \gamma$ is defined if and only if $\sigma \circ (\tau \circ \gamma)$ is defined.
The values in $(\sigma \circ \tau) \circ \gamma$ are the values of the form $(x \circ y) \circ z$
for $\incat{x}{\sigma}$, $\incat{y}{\tau}$ and $\incat{z}{\gamma}$. But these are the same as the
members of $\sigma \circ (\tau \circ \gamma)$.
\end{proof}

\begin{lemma}[(7.6) for classes] $\sigma^{-1} \circ \sigma \circ \tau = \tau$ and $\sigma \circ \tau \circ \tau^{-1} = \sigma$.
\end{lemma}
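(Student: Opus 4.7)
The plan is to prove only the first equation $\sigma^{-1} \circ \sigma \circ \tau = \tau$; the second follows by left-right duality. The rank of an instance is the maximum of the ranks of $\sigma$ and $\tau$, so by induction hypothesis all groupoid and rank-preservation properties are available at strictly smaller rank, and Lemmas G1--G5 just established are available at the current rank.

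First I would verify that $\sigma^{-1} \circ \sigma \circ \tau$ is defined whenever $\sigma \circ \tau$ is defined. By G3 for types we have $\rightop(\sigma^{-1}) = \leftop(\sigma)$, so $\sigma^{-1} \circ \sigma$ is defined; by G4 for types, $\leftop(\sigma^{-1} \circ \sigma) = \leftop(\sigma^{-1}) = \rightop(\sigma) = \leftop(\tau)$, so the full composition is defined. Next I would unfold the definitions of inverse and composition for types to see that the elements of $\sigma^{-1} \circ \sigma \circ \tau$ are exactly the expressions $p^{-1} \circ q \circ y$ for $p,q \in \sigma$ and $y \in \tau$ whose compositions are defined.

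The containment $\sigma^{-1} \circ \sigma \circ \tau \subseteq \tau$ is the substantive step. Given such an element $p^{-1} \circ q \circ y$, the key move is that $p^{-1}\circ q \in \rightop(\sigma) = \leftop(\tau)$, so by the definition of $\leftop(\tau)$ we can write $p^{-1}\circ q = y_1 \circ y_2^{-1}$ for some $y_1,y_2 \in \tau$. Then $p^{-1}\circ q\circ y = y_1 \circ y_2^{-1}\circ y$, and the type closure property (T1) applied to $\tau$ yields membership in $\tau$. (I would confirm using the inductive groupoid algebra that the required intermediate compositions are defined, which follows because $\rightop(p^{-1}\circ q) = \leftop(y)$ forces the matching left/right endpoints of $y_1,y_2,y$ to align.)

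For the reverse containment $\tau \subseteq \sigma^{-1} \circ \sigma \circ \tau$, given $y \in \tau$ I would use the algebraic identity $y = \leftop(y)\circ y = (y\circ y^{-1})\circ y$. Since $\leftop(y) \in \leftop(\tau) = \rightop(\sigma)$, the element $\leftop(y)$ is of the form $p^{-1}\circ q$ with $p,q \in \sigma$, exhibiting $y$ as an element of $\sigma^{-1}\circ\sigma\circ\tau$. The main potential obstacle is keeping the definedness bookkeeping consistent throughout these rewrites; I would handle this by invoking G3--G5 for types at the current rank and the lower-rank groupoid laws to justify each re-association and re-grouping of morphoid compositions.
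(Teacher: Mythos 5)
Your proof is correct and follows essentially the same route as the paper: the forward containment rewrites $p^{-1}\circ q \in \rightop(\sigma)=\leftop(\tau)$ as $y_1\circ y_2^{-1}$ and invokes (T1) for $\tau$, and the reverse containment writes $y = (y\circ y^{-1})\circ y$ with $\leftop(y)\in\rightop(\sigma)$ expressed as $p^{-1}\circ q$. The only difference is cosmetic (order of the two containments and your extra definedness check via G3--G4, which the paper leaves implicit).
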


\begin{proof} We will show that if $\sigma \circ \tau$ is defined then $\sigma^{-1} \circ \sigma \circ \tau = \tau$.
We will let $x$ range over elements of $\sigma$ and $y$ range over elements of $\tau$.  We first
show that every value $y$ in $\tau$ is in $\sigma^{-1} \circ \sigma \circ \tau$.  For this we note
$$ y \; = \; (y \circ y^\inv) \circ y \; = \; (x_1^\inv \circ x_2) \circ y \;\in \; \sigma^\inv
\circ \sigma \circ \tau$$ Conversely, consider $x_1^\inv \circ x_2 \circ y \in \sigma^\inv \circ
\sigma \circ \tau$.  For this case we have $(x_1^{-1} \circ x_2) \circ y_1 = (y_2 \circ y_3^{-1})
\circ y_1 \in \tau$.
\end{proof}

\begin{lemma}[(7.7) for classes]  $\rightop(\sigma) = \sigma^\inv \circ \sigma$ and $\leftop(\sigma) = \sigma \circ \sigma^\inv$
\end{lemma}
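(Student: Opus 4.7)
The plan is to prove G7 for types by direct unfolding of the definitions in figures~\ref{fig:WeakMorphs} and~\ref{fig:operations}, using only the already-established G3 for types to make sure the relevant compositions are defined. I would handle $\rightop(\sigma) = \sigma^\inv \circ \sigma$ first and then obtain the left-hand identity either by left-right duality or by a symmetric argument.

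First, I would verify that $\sigma^\inv \circ \sigma$ is defined. By the definition of composition this requires $\rightop(\sigma^\inv) = \leftop(\sigma)$, and G3 for types (already proved) gives $\rightop(\sigma^\inv) = \leftop(\sigma)$ immediately. Next I would unfold the definitions: writing $\sigma = (\tagg{TYPE},s)$, the set of members of $\sigma^\inv$ is $\{p^\inv : p \in s\}$, and composing with $\sigma$ gives
$$
\sigma^\inv \circ \sigma \;=\; (\tagg{TYPE},\{p^\inv \circ q \,:\, p,q \in s,\; p^\inv \circ q\text{ defined}\}).
$$
But this is literally the definition of $\rightop(\sigma) = (\tagg{TYPE},\{p^\inv \circ q : p,q \in s\})$, where the elements displayed in the set builder are of course only those for which the composition is defined. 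Hence the two types have the same underlying set and the same $\tagg{TYPE}$ tag, so they are equal.

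For $\leftop(\sigma) = \sigma \circ \sigma^\inv$, I would argue symmetrically: $\rightop(\sigma) = \leftop(\sigma^\inv)$ by G3, so $\sigma \circ \sigma^\inv$ is defined, and unfolding gives $\{p \circ q^\inv : p, q \in s\}$, which matches the definition of $\leftop(\sigma)$ verbatim. Alternatively, one can apply the identity $\rightop(\sigma^\inv) = \sigma \circ \sigma^\inv$ from the first half (with $\sigma$ replaced by $\sigma^\inv$) together with G3 to reduce the second identity to the first.

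I do not expect any real obstacle here: unlike G4, G5, or G6, this lemma requires no algebraic manipulation inside individual members of $\sigma$ because the definitions of $\leftop$, $\rightop$, inverse, and composition on types are already given pointwise in the exact form that matches G7. The only subtle point is confirming that the compositions are defined, and that is handled by the already-proved G3.
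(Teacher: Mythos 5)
Your proof is correct and follows essentially the same route as the paper: invoke (G3 for Types) to confirm the composition is defined, then observe that the definitions of $\leftop(\sigma)$, $\rightop(\sigma)$ and type composition in figure~\ref{fig:operations} coincide verbatim. The paper's proof is just a terser version of yours (it treats only the $\leftop$ case explicitly and calls the rest ``immediate from the definitions''), so the extra unfolding you supply is harmless elaboration rather than a different argument.
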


\begin{proof}
We will show that $\leftop(\sigma) = \sigma \circ \sigma^\inv$.  Property (7.3) implies that $\sigma
\circ \sigma^\inv$ is defined.  The result is then immediate from the definitions of
$\leftop(\sigma)$ and $\sigma \circ \sigma^\inv$.
\end{proof}

{\bf Properties (7.8) and (7.9)} follow from the duality of left and right.

\begin{lemma}[Class Partner Lemma]  For classes $\sigma$ and $\tau$ such that $\sigma \circ \tau$ is defined we have that for any $x \in \sigma$
  there exists $y \in \tau$ such that $x \circ y$ is defined and for any $y \in \tau$ there exists
  $x \in \sigma$ such that $x \circ \sigma$ is defined. \label{lem:classpartner}
\end{lemma}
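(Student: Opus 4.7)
The plan is to use the hypothesis $\sigma \circ \tau$ defined, which unfolds (by the definition of composition on types and property (G4) for types already proved) to $\rightop(\sigma) = \leftop(\tau)$. Fixing $x \in \sigma$, I will produce the required $y \in \tau$ by extracting it from the representation of $\rightop(x)$ as an element of $\leftop(\tau)$; the symmetric statement for $\tau$ then follows immediately by left–right duality.

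Concretely, I would proceed as follows. First, observe $\rightop(x) = x^\inv \circ x \in \rightop(\sigma)$ directly from the definition of $\rightop$ on types (using $p = q = x$), so $\rightop(x) \in \leftop(\tau)$. By the definition of $\leftop$ on types, there exist $p, q \in \tau$ with $p \circ q^\inv$ defined and
$$\rightop(x) \;=\; p \circ q^\inv.$$
I then claim that one may simply take $y = p$. To justify this I need $\leftop(p) = \rightop(x)$. The key computation is to apply $\leftop(\cdot)$ to both sides and use the groupoid identities (G5)--(G9), together with the fact that $p \circ q^\inv$ defined means $\rightop(p) = \rightop(q)$:
$$\leftop(p \circ q^\inv) \;=\; (p \circ q^\inv)\circ(q \circ p^\inv) \;=\; p \circ \rightop(q) \circ p^\inv \;=\; p \circ \rightop(p) \circ p^\inv \;=\; p \circ p^\inv \;=\; \leftop(p).$$
On the other hand $\rightop(x)$ is already a left/right identity and satisfies $\leftop(\rightop(x)) = \rightop(x)$ (an easy consequence of (G7)--(G9)). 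Combining these two equalities yields $\rightop(x) = \leftop(p)$, hence $x \circ p$ is defined and $p \in \tau$ serves as the desired $y$.

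The second half of the lemma---that every $y \in \tau$ has a partner $x \in \sigma$---is the mirror image: replace $\rightop$ with $\leftop$, $\sigma$ with $\tau^\inv$, $\tau$ with $\sigma^\inv$ (or equivalently invoke the intuitive left–right duality stated at the start of section~\ref{sec:Morphoids}) and run the same argument. Since the rank of an instance of the lemma is $\max(R(\sigma), R(\tau))$, all invocations of the groupoid identities above are legal under the simultaneous induction: (G1)--(G9) for types have already been established at this rank, and the manipulations never produce morphoids of greater rank than $\sigma$ or $\tau$ themselves.

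I do not expect a serious obstacle here. The only spot requiring care is the algebraic identity $\leftop(p \circ q^\inv) = \leftop(p)$, which relies on $\rightop(p) = \rightop(q)$ in an essential way; a naive attempt to use the representation $y_1 \circ y_2^\inv \circ y_1$ from (T1) fails because the three operands need not all be composable. Spotting that $\rightop(x)$ is already an identity in the groupoid structure---and therefore coincides with $\leftop$ of any representative---is the one nontrivial observation; once in hand, the rest is routine application of (G5)--(G9).
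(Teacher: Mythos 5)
Your proof is correct and follows essentially the same route as the paper's: both start from $\rightop(x) = x^\inv\circ x \in \rightop(\sigma) = \leftop(\tau)$, write it as $p\circ q^\inv$ with $p,q\in\tau$, and extract a composable partner from that representation (the paper takes $q$, you take $p$; both work since $\rightop(x)$ is an identity, so $\leftop(p)=\leftop(q)=\rightop(x)$). Your explicit verification that $\leftop(p\circ q^\inv)=\leftop(p)$ via (G5)--(G9) is fine, though it is just an instance of (G4), which is available at lower rank under the induction hypothesis.
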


\begin{proof}
  Consider classes $\sigma$ and $\tau$ with $\sigma \circ \tau$ defined and consider $x \in \sigma$.
  We have $x^\inv \circ x \in \rightop(\sigma)$.  Since $\rightop(\sigma) = \leftop(\tau)$ we have
  $x^\inv \circ x = y_1 \circ y_2^\inv$ for some $y_1,y_2 \in \tau$.  Since $y_1 \circ y_2^\inv
  \circ y_2$ is defined we have that $x^\inv \circ x \circ y_2$ is defined which implies that $x
  \circ y_2$ is defined.  The reverse partner relationship is similar.
\end{proof}

\section{The Evaluations Properties}
\label{sec:EvalProps}

The evaluation properties in figure~\ref{fig:EvalProps} are proved by simultaneous induction on the
size of the expressions involved.  In this simultaneous induction proof we assume all properties for
smaller expressions while proving any given property on any given expression.

{\bf Proof of (11.1) and (11.2) for defined contexts.} For convenience we repeat the conditions
here.

\medskip
For $\convalue{\Gamma}$ defined we have
\begin{itemize}
\item[(11.1)] For $\rho \in \convalue{\Gamma}$ we have that $\rho$ is a structure (all variables are
  mapped to values).
  
  \medskip
\item[(11.2)] For $\rho \in \convalue{\Gamma}$ we have $\rho^\inv \in \convalue{\Gamma}$ and for
  $\rho_1,\rho_2 \in \convalue{\Gamma}$ with $\rho_1 \circ \rho_2$ defined we have $\rho_1 \circ
  \rho_2 \in \convalue{\Gamma}$.
\end{itemize}

Whether $\convalue{\Gamma}$ is defined, and its meaning when it is defined, is specified by clauses
(5) and (6) if figure~\ref{fig:semantics}.  We have that $\convalue{\Gamma}$ is defined if one of
the following two conditions hold.
\begin{itemize}
\item[(a)] $\Gamma = \Delta;\;\intype{x}{\tau}$ where $\convalue{\Delta}$ is defined and $\Delta
  \models \intype{\tau}{\class}$.
  
\item[(b)] $\Gamma = \Delta;\Phi$ where $\convalue{\Delta}$ is defined and $\Delta \models
  \intype{\Phi}{\bool}$.
\end{itemize}
By the induction hypotheses we have that (11.1) and (11.2) hold for $\Delta$.  Since all members of
a class are values, (11.1) for $\Gamma$ follows immediately from (11.1) for $\Delta$.  For (11.2) we
consider the case of composition and consider each of cases (a) and (b) above.  For case (a) we must
consider $\rho_1,\rho_2 \in \convalue{\Delta}$ with $\rho_1 \circ \rho_2$ defined.  Let $\tau_1^*$
and $\tau_2^*$ abbreviate $\subvalue{\Delta}{\tau}\rho_1$ and $\subvalue{\Delta}{\tau}\rho_2$
respectively.  By the definition of $\Delta \models \intype{\tau}{\class}$ we have that
$\subvalue{\Delta}{\tau}$ is defined. By the induction hypothesis for (11.4) we have that
$\subvalue{\Delta}{\tau}(\rho_1 \circ \rho_2) = \tau_1^* \circ \tau_2^*$.  Now consider $v_1 \in
\tau_1^*$ and $v_2 \in \tau_2^*$ with $v_1 \circ v_2$ defined.  This is the general case in which
$\rho_1[x \leftarrow v_1] \circ \rho_2[x \leftarrow v_2]$ is defined.  Note that
$$\rho_1[x \leftarrow v_1] \circ \rho_2[x \leftarrow v_2] = (\rho_1 \circ \rho_2)[x \leftarrow v_1
  \circ v_2].$$ We must show that
$$(\rho_1 \circ \rho_2)[x \leftarrow (v_1 \circ v_2)] \in \convalue{\Delta;\;\intype{x}{\tau}}.$$
But this now follows from (11.2) for $\Delta$ and (11.4) for $\tau$ which implies that $\subvalue{\Delta}{\tau}(\rho_1 \circ \rho_2) = \tau_1^* \circ \tau_t^*$
and hence $v_1 \circ v_2 \in \subvalue{\Delta}{\tau}(\rho_1 \circ \rho_2)$.

Now we consider composition for case (b) above.  Again consider $\rho_1,\rho_2 \in \convalue{\Delta}$.
To show the composition case of (11.2) for $\Delta;\;\Phi$ we must show that if that $\subvalue{\Delta}{\Phi}\rho_1 = \true$
and that $\subvalue{\Delta}{\Phi}\rho_2 = \true$
then $\subvalue{\Delta}{\Phi}(\rho_1 \circ \rho_2) = \true$.  But $\Delta \models \intype{\Phi}{\bool}$ implies that $\subvalue{\Delta}{\Phi}$ is defined and the result follows from
(11.4) for $\Phi$.

{\bf Proof of (11.3) through (11.6) for defined expressions.}  For convenience we repeat the
properties here.

\medskip
For $\semvalue{e}$ defined with $e \not = \class$ we have
\begin{itemize}
\item[(11.3)] For $\rho \in \convalue{\Gamma}$ we have that $\semvalue{e}\rho$ is a denotable value (eigher a value, a proper class
  or a pair of denotable values).
\item[(11.4)] For $\rho_1,\rho_2 \in \convalue{\Gamma}$ with $\rho_1 \circ \rho_2$ defined we have
  $\semvalue{e}(\rho_1 \circ \rho_2) = (\semvalue{e}\rho_1) \circ (\semvalue{w}\rho_2)$.
\item[(11.5)] For $\rho_1,\rho_2 \in \convalue{\Gamma}$ with $\rho_1 \preceq \rho_2$ we have
  $\semvalue{e}\rho_1 \preceq \semvalue{e}\rho_2$.
\item[(11.6)] For $\rho \in \convalue{\Gamma}$ with $\intype{\rho}{\eta}$ we have
  $\intype{\semvalue{e}\rho}{\tempvalue{e}{\eta}}$.
\end{itemize}
The first part of (11.4) follows from the duality of left and right and we only consider the
composition part.

We must prove these properties for each of the following kinds of defined expressions from
figure~\ref{fig:constructs}.
$$\begin{array}{cccccc} \Sigma_{\intype{x\;}{\;\sigma}}\;\tau[x] & S_{\intype{x\;}{\;\sigma}}\;\Phi[x] & \Pi_{\intype{x\;}{\;\sigma}}\;\tau[x] & \lambda \intype{x}{\sigma}\;e[x] & \forall \intype{x}{\sigma}\; \Phi[x] \\
  \\
  f(e) & e_1 =_\sigma e_2 &  \neg \Phi & \Phi_1 \vee \Phi_2 & (e_1,e_2) \\
  \\
  \pi_i(e) & x & \bool & \sett
\end{array}$$

$\bullet$ $x$, $\bool$, $\sett$.  For variables we have that (11.3) follow from (11.1) for $\Gamma$.  Properties (11.4)
through (11.6) are immediate for variables.  (11.3) through (11.6) are immediate for the constant
$\bool$.  For the constant $\sett$ we first show (11.3) by showing that the collection of all
set values is a proper class.  The morphoid closure condition for sets follows from
properties (7.1) and (7.2) which implies that the sets are closed under both inverse and
composition.  We can also show that the template $\setof(\pointt)$ is an interface template for
$\sett$.  For this we must show that for any set $\sigma$ we have that $\sigma@\setof(\pointt)$ is
defined and is also a set.  But this follows from property (8.1) which states that values are closed
under abstraction.  For (11.4) we must show that $\sett = \sett \circ \sett$.  But it is
straightforward to show that a composition of bijective sets is a set giving
$\sett \circ \sett \subseteq \sett$.  Conversely, for any set $\sigma$ we have $\sigma \circ \sigma^\inv$ is a set giving
$\sett \subseteq \sett \circ \sett$.  Property (11.5) follows from property (8.8) that $\preceq$ is a partial order and hence
$\sett \preceq \sett$. (11.6) follows
from the previously noted fact that $\setof(\pointt)$ is an interface template for $\sett$.
Finally we note that the collection of all singleton sets is too large to be in $U$ and hence $\sett \not \in U$.

$\bullet$ $f(e)$. For application expressions clause (10) of figure~\ref{fig:semantics} requires that $\semvalue{f}$ is defined
and for $\rho \in \convalue{\Gamma}$ we have that $\semvalue{f}\rho$ is a function value (a set of pairs).
This implies that $\semvalue{f(e)}\rho$ is a value and we have (11.3).
Property (11.6) follows from the induction hypothesis for $f$.  For (11.4) consider
$\rho_1,\rho_2 \in \convalue{\Gamma}$ with $\rho_1 \circ \rho_2$ defined.  Let $f^*_1$ abbreviate
$\semvalue{f}\rho_1$ and similarly for $u_1^*$ and let $f_2^*$ and $u_2^*$ be defined similarly in
terms of $\rho_2$.  By the (11.4) for $f$ we have $f_1^* \circ f_2^*$ is defined and by (11.4) for
$e$ we have that $u^*_1 \circ u^*_2$ is defined.  Property (11.4) for $f(e)$ is then equivalent to
$$(f_1^* \circ f_2^*)(u_1^* \circ u_2^*) = f_1^*(u_1^*) \circ f_2^*(u_2^*).$$ But this follows from
the fact that all input-output pairs of $f_1^* \circ f_2^*$ are of the form
$u_1 \circ u_2 \mapsto f_1^*(u_1) \circ f_2^*(u_2)$.

Finally we consider (11.5).  Consider $\rho_1,\rho_2 \in \convalue{\Gamma}$ with
$\rho_1 \preceq \rho_2$.  Let $f^*_1$ and $u_1^*$ be defined in terms of $\rho_1$, and $f_2^*$ and $u_2^*$ be defined
similarly in terms of $\rho_2$.  We must show $f_1^*(u_1^*) \preceq f_2^*(u_2^*)$. We have
$\intype{f_2^*}{{\mathcal T} \rightarrow {\mathcal S}}$ for some templates ${\mathcal T}$ and ${\mathcal S}$.  By (8.7)
it suffices to show that $f_1^*(e_1^*)@{\mathcal S} = f_2^*(e_2^*)$. Since
$\semvalue{f(u)}$ is defined we have that $u_2^*$ must be in the domain of $f_2^*$ which implies
$\intype{u_2}{\mathcal T}$. By (11.5) for $f$ we have $f_1^* \preceq f_2^*$ which implies that
$f_1^*@({\mathcal T} \rightarrow {\mathcal S}) = f_2^*$.  This implies that the pairs of $f_2^*$ are all of
the form $u_1@{\mathcal T} \mapsto f_1^*(u_1)@{\mathcal S}$.  By (11.5) for $e$ we have $u_1^*@{\mathcal T} =
u_2^*$.  We now have $f_2^*(u_2) = f_1^*(u_1)@{\mathcal S}$ which implies the result.

{\bf $\bullet$ $(e_1,e_2)$ and $\pi_i(e)$.}  Properties (11.3) through (11.6) immediately follow
from the corresponding induction hypothesis for pairing and projection expressions.

{\bf Boolean Expressions.}  For Boolean expressions conditions (11.3) and (11.6) are immediate and
we need only consider (11.4) and (11.5).  Furthermore, for Boolean expressions conditions (11.4) and
(11.5) can be simplified to the following.
\begin{itemize}
\item[(11.4b)] For $\rho \in \convalue{\Gamma}$ we have
  $$\semvalue{\Phi}\rho =  \semvalue{\Phi}\leftop(\rho) = \semvalue{\Phi}\rightop(\rho).$$
\item[(11.5b)] For $\rho_1,\rho_2 \in \convalue{\Gamma}$ with $\rho_1 \preceq \rho_2$ we have
  $$\semvalue{\Phi}\rho_1 = \semvalue{\Phi}\rho_2$$
\end{itemize}

{\bf $\bullet$ $\neg \Phi$ and $\Phi \vee \Psi$.} For the Boolean expressions $\neg \Phi$ and $\Phi
\vee \Psi$ properties (11.4b) and (11.5b) follow from the induction hypotheses applied to the
arguments of the operation and the fact that the truth value of the operation is determined by the
truth value of its arguments.

{\bf $\bullet$ $\forall\; \intype{x}{\sigma}\;\Phi[x]$}. For $\forall\; \intype{x}{\sigma}\;\Phi[x]$
we will verify (11.4) (as opposed to (11.4b)).  Consider $\rho_1,\rho_2 \in
\convalue{\Gamma}$ with $\rho_1 \circ \rho_2$ defined.  Let $\tau_1^*$ abbreviate
$\semvalue{\tau}\rho_1$ and let $\Phi_1^*[u_1]$ be $\subxvalue{\Phi[x]}\rho_1[x \leftarrow
  u_1]$.  Let $\tau_2^*$ and $\Phi^*_2[u_2]$ be defined similarly in terms of $\rho_2$.  We must
show that $\Phi_1^*[u_1]$ holds for all $u_1 \in \tau_1^*$ if and only if $\Phi_2^*[u_2]$ holds for
all $u_2 \in \tau_2^*$.  Suppose $\Phi_1^*[u_1]$ holds for all $u_1 \in \tau_1^*$ and consider $u_2
\in \tau_2^*$.  By (11.4) for $\tau$ we have that $\tau_1^* \circ \tau_2^*$ is defined.  By the
class partner lemma~\ref{lem:classpartner} there exists $u_1 \in \tau_1^*$ with $u_1 \circ u_2$
defined.  By (11.4) for $\Phi[x]$ we then have $\Phi_2^*[u_2] = \Phi_1^*[u_1] = \true$ which proves
the result.  The converse is similar.

Next we consider (11.5b).  Consider $\rho_1,\rho_2 \in \convalue{\Gamma}$ with
$\rho_1 \preceq \rho_2$.
Let $\tau_1^*$ and $\Phi^*_1[u_1]$ be defined in terms of $\rho_1$ as before and let
$\tau^*_2$ and $\Phi^*_2[v]$ be defined similarly in terms of $\rho_2$.  As before we must show that
$\Phi^*_1[u_1]$ holds for all $u_1 \in \tau^*_1$ if and only if $\Phi^*_2[u_2]$ holds for all
$u_2 \in \tau^*_2$.  Suppose $\Phi_1^*[u_1]$ holds for all $u_1 \in \tau_1^*$ and consider $u_2 \in \tau_2^*$.
By (11.5) for $\tau$ we have $\tau^*_1 \preceq \tau^*_2$ which implies that for
$\intype{\tau_2^*}{\classof({\mathcal T})}$ we have $\tau_1^*@\classof({\mathcal T}) =
\tau_2^*@\classof({\mathcal T})$ .  This implies that there exists $u_1 \in \tau_1^*$ such that
$u_1@{\mathcal T} = u_2@{\mathcal T}$.  This implies that $u_1@{\mathcal T} \in \tau_2^*$ and by (11.5) for
$\Phi[x]$ we have $\Phi_1^*[u_1] = \Phi_2^*[u_1@{\mathcal T}]$.  We also have that both $u_2$ and
$u_2@{\mathcal T}$ are in $\tau_2^*$ and again by (11.5) for $\Phi[x]$ we have $\Phi^*_2[u_2] =
\Phi^*_2[u_2@{\mathcal T}]$.  Together this gives $\Phi_2^*[u_2] = \true$ as desired.  For the converse
suppose that $\Phi^*_2[u_2]$ is true for all $u_2 \in \tau^*_2$ and consider $u_1 \in \tau_1^*$.
Again noting that $\tau_1^*@\classof({\mathcal T}) = \tau_2^*@\classof({\mathcal T})$ we have that $u_1@{\mathcal
  T} \in \tau_2^*$ and by (11.5) for $\Phi[x]$ we have $\Phi_1^*[u_1] = \Phi_2^*[u_1@{\mathcal T}] = \true$.

$\bullet \;w =_\tau u$.  We will first show (11.4b).  Consider $\rho \in \convalue{\Gamma}$ and let $w^*$
abbreviate $\semvalue{w}\rho$ and similarly for $\tau^*$ and $u^*$.  By (11.4b) for $w$ we have that
$\semvalue{w}\leftop(\rho) = \leftop(w^*)$ and similarly for $\tau^*$ and $u^*$ We will show that
$\leftop(w^*) =_{\leftop(\tau^*)}\;\leftop(u^*)$ if and only if $w^* =_{\tau^*}\;u^*$.  First
suppose $w^* =_{\tau^*} u^*$.  In this case there exists $z \in \tau^*$ with $(w^*@\tau^*) \circ
z^\inv \circ (u^*@\tau^*)$ defined.  This gives that
$$(w^*@\tau^*) \circ (w^*@\tau^*)^\inv \circ (z \circ z^\inv)^\inv \circ (u^*@\tau^*) \circ
(u^*@\tau^*)^\inv$$ is defined. Since $w^* \sim (w^*)^\inv$ and similarly for $u^*$ we have that
$$((w^* \circ (w^*)^\inv)@\tau^*) \circ (z \circ z^\inv)^\inv \circ ((u^* \circ (u^*)\inv)@\tau^*)$$
is defined.  By lemma~\ref{lem:InterfaceTransfer1} we have that any interface template for $\tau$ is
also an interface template for $\leftop(\tau)$. So we now have that
$$\leftop(w^*)@\leftop(\tau^*) \circ (z \circ z^\inv)^\inv \circ \leftop(u^*)@\leftop(\tau^*)$$ is
defined which established $\leftop(w^*) =_{\leftop(\tau^*)} \leftop(u^*)$.

For the converse suppose that $\leftop(w^*) =_{\leftop(\tau^*)} \leftop(u^*)$.  In this case there
exists $z_1,z_2 \in \tau^*$ such that
$$\leftop(w^*)@\leftop(\tau^*) \circ (z_1 \circ z_2^\inv)^\inv \circ \leftop(u^*)@\leftop(\tau^*)$$
is defined. By lemma~\ref{lem:InterfaceTransfer1} we can select the interface template for
$\leftop(\tau^*)$ to be an interface template for $\tau^*$ and we then have that
$$\leftop(w^*)@\tau^* \circ z_2 \circ z_1^\inv \circ \leftop(u^*)@\tau^*$$ is defined.  We then have
that
$$w^*@\tau^* \circ (z_1 \circ z_2^\inv \circ w^*@\tau^*)^\inv \circ u^*@\tau^*$$ is defined which
establishes $w^* =_{\tau^*} u^*$.

We must also show (11.5b).  Consider $\rho_1,\rho_2 \in \convalue{\Gamma}$ with $\rho_1 \preceq
\rho_2$.  Let $w_1^*$ abbreviate $\semvalue{w}\rho_1$ and similarly for $u_1^*$ and $\tau_1^*$.  Let
$w_2^*$, $u_2^*$ and $\tau_2^*$ be similarly defined in terms of $\rho_2$.  We must show that $w_1^*
=_{\tau_1^*} u_1^*$ if and only if $w_2^* =_{\tau_2^*} u_2^*$.  Let ${\mathcal T}_1$ be an interface
template for $\tau_1^*$ and let ${\mathcal T}_2$ be an interface template for $\tau_2^*$.  By (11.5b)
for $\tau$, $u$ and $w$ and the definition of $\preceq$ we have
$$\tau_1^*@\classof({\mathcal T}_2) = \tau_2^*@\classof({\mathcal T}_2)$$
$$u_1^*@{\mathcal T}_2 = u_2^*@{\mathcal T}_2$$
$$w_1^*@{\mathcal T}_2 = w_2^*@{\mathcal T}_2.$$ Suppose $w_1^* =_{\tau_1^*} u_1^*$.  In this case there
exists $z_1 \in \tau_1^*$ such that $(w_1^*@{\mathcal T}_1) \circ z_1^\inv \circ (u_1^*@{\mathcal T}_1)$ is
defined.  Abstracting this to ${\mathcal T}_2$ gives that $(w_1^*@{\mathcal T}_2) \circ (z_1@{\mathcal
  T}_2)^\inv \circ (u_1^*@{\mathcal T}_2)$ is defined which now gives that $(w_2^*@{\mathcal T}_2) \circ
(z_1@{\mathcal T}_2)^\inv \circ (u_2^*@{\mathcal T}_2)$ is defined which give $w_2^* =_{\tau_2^*} u_2^*$ as
desired.

Conversely suppose that $w_2^* =_{\tau_2^*} u_2^*$.  In this case there exists $z_2 \in \tau_2^*$
with $(w_2^*@{\mathcal T}_2) \circ z_2^\inv \circ (u_2^*@{\mathcal T}_2)$ defined.  By propert (8.4) we have
that $\intype{z_2}{{\mathcal T}_2}$ which implies $z_2 = z_2@{\mathcal T}_2$.  By (11.5b) for $\tau$ we then
have $z_2 = z_1@{\mathcal T}_2$ for some $z_1 \in \tau_1^*$ and we then have that
$(w_1^*@{\mathcal T}_2) \circ (z_1@{\mathcal T}_2)^\inv \circ (u_1^*@{\mathcal T}_2)$ is defined.  By property (8.12) we
then have that $(w_1^*@{\mathcal T}_1) \circ (z_1@{\mathcal T}_1)^\inv \circ (u_1^*@{\mathcal T}_1)$ is defined
which gives $w_1^* =_{\tau_1^*}u_1^*$ as desired.

$\bullet\;S\deppair$.  We will first show (11.6). Consider $\rho \in \convalue{\Gamma}$ with
$\intype{\rho}{\eta}$.  We must show $\intype{(\semvalue{S\deppair}\rho)}{\tempvalue{\sigma}\eta}$.
Let $\sigma^*$ abbreviate $\semvalue{\sigma}\rho$ and for $u \in \sigma^*$ let $\Phi^*[u]$
be $\subxvalue{\Phi[x]}\rho[x \leftarrow u]$.  Let ${\mathcal T}$ be the template such that
$\tempvalue{\sigma}\eta$ either has the form $\setof({\mathcal T})$ or the form $\classof({\mathcal T})$.
By (11.6) for $\sigma$ we have that ${\mathcal T}$ is an interface template for $\sigma^*$.  We must
show that ${\mathcal T}$ is also an interface template for $\semvalue{S\deppair}\rho$.  Consider $u \in \semvalue{S\deppair}\rho$.
We must show $u@{\mathcal T} \in \semvalue{S\deppair}\rho$.
By (11.5) for $u$ we have $u \preceq u@{\mathcal T}$ and then by (11.5b) for $\Phi[x]$
we have $\Phi_1^*[u] = \Phi^*_2[u@{\mathcal T}] = \true$ as desired.

Next we show (11.4). Consider $\rho_1,\rho_2 \in \convalue{\Gamma}$ with $\rho_1 \circ \rho_2$
defined.  Let $\tau_1^*$ and $\Phi_1^*[u_1]$ be defined in terms of $\rho_1$ as usual and let
$\tau_2^*$ and $\Phi_2^*[u_2]$ be similarly defined in terms of $\rho_2$.  We must show
$$\semvalue{S\deppair}(\rho_1 \circ \rho_2) = (\semvalue{S\deppair}\rho_1) \circ
(\semvalue{S\deppair}\rho_2)$$ By (11.4) for $\Phi[x]$ we have that for $u_1 \in \tau_1^*$ and $u_2
\in \tau_2^*$ with $u_1 \circ u_2$ defined we have
$$\subxvalue{\Phi[x]}(\rho_1 \circ \rho_2)[x \leftarrow (u_1 \circ u_2)] = \Phi_1^*[u_1] =
\Phi_2^*[u_2]$$ which implies the result.

Next we show (11.5b). Consider $\rho_1,\rho_2 \in \convalue{\Gamma}$ with $\rho_1 \preceq \rho_2$
and let $\tau_1^*$, $\Phi_1^*[u_1]$, $\tau_2^*$ and $\Phi_2^*[u_2]$ be defined as usual in terms of
$\rho_1$ and $\rho_2$.  Let ${\mathcal T}$ be an interface template for $\tau_2^*$.  By (11.5b) for $\tau$
we have $\tau_1^*@\classof({\mathcal T}) = \tau_2^*@\classof({\mathcal T})$.  By (11.5b) for $\Phi[x]$,
for any $u_1 \in \tau_1$ we have $\Phi_1^*[u_1]$ if
and only if $\Phi_2^*[u_1@{\mathcal T}]$.  This implies
\begin{eqnarray*}
  & & (\semvalue{S\deppair}\rho_1)@\classof({\mathcal T}) \\
  & = & (\semvalue{S\deppair}\rho_2)@\classof({\mathcal  T})
\end{eqnarray*}
as desired.

$\bullet$ $\Sigma\deppair$.  We will show (11.3) and (11.6) together.  For (11.3) we must show that the pair type denotes either a set value or a proper class.
We first show that in any case it denotes a class.  We will then show that if the class is in $U$ then it is a set value.

To show that the value is a class consider $\rho \in \convalue {\Gamma}$. Let $\sigma^*$ be $\semvalue{\sigma}\rho$ and for
$u \in \sigma^*$ let $\tau^*[u]$ be $\subvalue{\Gamma\;\intype{x\;}{\;\sigma}}{\tau[x]}\rho[x \leftarrow u]$.
We have that $\semvalue{\sigma_{\intype{x\;}{\;\sigma}}\;\tau[x]}\rho$ is the
collection of pairs $\{(u,v),\;u \in \sigma^*,\;v \in \tau^*[u]\}$.  For any pair $(u,v)$ in this
set we have that property (11.3) for $\sigma$ and $\tau[x]$ imply that $u$ and $v$ are values and hence the pair
$(u,v)$ is a value.  But we must show that this collection of pairs is morphoid closed and has an
interface template.  To show morphoid closure consider $u_1,u_2,u_3 \in \sigma^*$ with $u_1 \circ u_2^\inv \circ u_3$
defined and consider $w_1 \in \tau^*[u_1]$, $w_2 \in \tau^*[u_2]$ and $w_3 \in \tau^*[u_3]$
with $w_1 \circ w_2^\inv \circ w_3$ defined.  We must show that
$$(u_1\circ u_2^\inv \circ u_3,\;w_1 \circ w_2^\inv \circ w_3)$$ is in the pair type.  By morphoid
closure of $\sigma^*$ we have $u_1 \circ u_2^\inv \circ u_3 \in \sigma^*$.  We must show that
$w_1 \circ w_2^\inv \circ w_3 \in \tau^*[u_1 \circ u_2^\inv \circ u_3]$.
By (11.4) for $\tau[x]$ we have
$\tau^*[u_1 \circ u_2^\inv \circ u_3] = \tau^*[u_1] \circ \tau^*[u_2]^\inv \circ \tau^*[u_3]$.
Since $w_1 \circ w_2^\inv \circ w_3 \in \tau^*[u_1] \circ \tau^*[u_2]^\inv \circ \tau^*[u_3]$ this
proves the result.

To prove the existence of an interface template let ${\mathcal T}$ abbreviate
${\mathcal M}(\tempvalue{\sigma}\eta)$ and let ${\mathcal S}$ abbreviate
${\mathcal M}(\subtempvalue{\tau[x]}\eta[x \leftarrow {\mathcal T}])$.  We will show that ${\mathcal T} \times {\mathcal S}$
is an interface template for the pair type.  By (11.6) for $\sigma$ we have that
${\mathcal T}$ is an interface template for $\sigma^*$ and by (11.6) for $\tau[x]$ we have that ${\mathcal S}$
is an interface template for $\tau^*[u@{\mathcal T}]$ (independent of the choice of $u$).  For
$(u,v)$ in the pair type we must show that $(u@{\mathcal T},v@{\mathcal S})$ is in the pair type.  By (11.5)
for $\tau[x]$ we have $\tau^*[u] \preceq \tau^*[u@{\mathcal T}]$.  By the definition of $\preceq$ on
classes we have
$$\{w@{\mathcal S},\;w \in \tau^*[u]\} = \{s@{\mathcal S},\;s \in \tau^*[u@{\mathcal T}]\} \subseteq \tau^*[u@{\mathcal T}].$$
This implies $v@{\mathcal S} \in \tau^*[u@{\mathcal T}]$ which proves the result.

We note that (11.6) for the pair type now follows from the fact that for ${\mathcal T}$ and ${\mathcal S}$
as defined above and for $\rho \in \convalue{\Gamma}$ and $\intype{\rho}{\eta}$ we have
$$\;\tilde{\mathcal V}_{\Gamma;\intype{x\;}{\;\sigma}}\double{\Sigma_{\intype{x\;}{\;\sigma}}\;\tau[x]}\eta = \classof({\mathcal T} \times {\mathcal S})$$

To complete the proof of (11.3) for the pair type we must show that if the pair type is not a proper class (if it is an element of $U$) then it is a set value ---
is bijective and has a template.
Again consider $\rho \in \convalue{\Gamma}$ and let $\sigma^*$ and $\tau^*[u]$ be defined as before.  If $\sigma^*$ a proper class or if $\tau^*[u]$ is a proper class for
any $u \in \sigma^*$ then the pair type is a proper class. So if the pair type is not a proper class then by (11.3) for $\sigma$ we have that $\sigma^*$ is a set value and by (11.3) for $\tau[x]$
we have that $\tau^*[u]$ is a set value for all $u \in \sigma^*$.
Since $\sigma^*$ is a set value we have $\intype{\sigma^*}{\setof({\mathcal T})}$ for
some template ${\mathcal T}$. By (11.1) for $\rho$ have $\intype{\rho}{\eta}$ for some structure template $\eta$.
Let ${\mathcal S}$ be the template $\mem(\tilde{{\mathcal V}}_{\Gamma;\;\intype{x\;}{\;\sigma}}{\tau[x]}\eta[x \leftarrow {\mathcal T}])$.  By (11.6) for
$\tau[x]$ we have $\intype{\tau^*[u]}{\setof({\mathcal S})}$ for any $u \in \sigma^*$.  This implies
that the pair type has template ${\mathcal T} \times {\mathcal S}$.  Finally we must show that the pair type
is bijective.  We will show that if two pairs $(u,v)$ and $(u',v')$ have the same left value then
they are the same.  Assume $\leftop(u) = \leftop(u')$ and $\leftop(v) = \leftop(v')$.
The bijectivety of $\sigma^*$ implies that $u = u'$.  We then have that both $v$ and $v'$ are in
$\tau^*[u]$ and the bijectivity of $\tau^*[u]$ implies that $v = v'$.

We now consider (11.4) for the pair type. Consider $\rho_1,\rho_2 \in \convalue{\Gamma}$.  We must
show
\begin{eqnarray}
  & & \semvalue{\Sigma_{\intype{x\;}{\;\sigma}}\;\tau[x]}(\rho \circ \rho_2) \nonumber \\
  & = & \left(\semvalue{\Sigma_{\intype{x\;}{\;\sigma}}\;\tau[x]}\rho_1\right) \circ  \left(\semvalue{\Sigma_{\intype{x\;}{\;\sigma}}\;\tau[x]}\rho_2\right)
  \label{eq:pairtype}
\end{eqnarray}
Let $\sigma_1^*$ abbreviate $\semvalue{\sigma}\rho_1$ and for $u \in \sigma_1^*$ let $\tau_1^*[u]$
abbreviate $\subvalue{\Gamma;\intype{x\;}{\;\sigma}}{\tau(x)}\rho_1[x \leftarrow u]$.  Define
$\sigma_2^*$ and $\tau_2^*[u]$ similarly in terms of $\rho_2$.

We will first show that the composition on the right hand side of (\ref{eq:pairtype}) is defined ---
that the set of pairs of the form $(\rightop(u_1),\rightop(v_1))$ for $u_1 \in \sigma_1^*$
and $v_1 \in \tau_1^*[u_1]$ is the same as the set of pairs of the form
$(\leftop(u_2),\leftop(v_2))$ for $u_2 \in \sigma_2^*$ and $v_2 \in \tau_2^*[u_2]$.  We will show
that every pair of the first form is also of the second form.  The converse is similar.  Consider
$u_1 \in \sigma^*_1$ and $v_1 \in \tau^*_1[u_1]$.  It now suffices to show that there exists $u_2 \in\sigma^*_2$
and $v_2$ in $\tau^*_2[u_2]$ with $u_1 \circ u_2$ and $v_1 \circ v_2$ defined.  By (11.4) applies to $\sigma$
we have that $\sigma_1^* \circ \sigma_2^*$ is defined. By the class partner lemma~\ref{lem:classpartner} we then have
that there exists $u_2 \in \sigma_2^*$ with $u_1 \circ u_2$ defined.  By (11.4) for $\tau[x]$ we then have that
$\tau_1^*[u_1] \circ \tau_2^*[u_2]$ is defined and again by the class partner lemma there exists $v_2 \in \tau_2^*[u_2]$ such
that $v_1 \circ v_2$ is defined as desired.

To show that (\ref{eq:pairtype}) holds we show containment in both directions.  We first show that
every member of the left hand side is a member of right hand side. Consider $u_i \in \sigma^*_i$ and
$v_i \in \tau_i^*(u_i)$ with $(u_1,v_1) \circ (u_2,v_2)$ defined.  We must show
$$(u_1 \circ u_2,\;v_1\circ v_2) \in \semvalue{\sigma_{\intype{x\;}{\;\sigma}},\;\tau[x]}(\rho_1 \circ \rho_2).$$
By (11.4) for $\sigma$ we have $(u_1 \circ u_2) \in \semvalue{\sigma}(\rho_1 \circ \rho_2)$.  By
(11.4) for $\tau[x]$ we have
$$(v_1 \circ v_2) \in \subvalue{\Gamma;\;\intype{x\;}{\;\sigma}}{\tau[x]}(\rho_1 \circ \rho_2)[x
  \leftarrow (u_1 \circ u_2)]$$ which proves the result.

For the converse consider $(u,v) \in \semvalue{\Sigma_{\intype{x\;}{\;\sigma}}\;\tau[x]}(\rho_1 \circ \rho_2).$
By (11.4) for $\sigma$ we have $u \in \semvalue{\sigma}(\rho_1 \circ \rho_2) = \sigma_1^* \circ \sigma_2^*$
and hence there exist $u_1 \in \sigma_1^*$ and $u_2 \in \sigma_2^*$ such that $u = u_1 \circ u_2$.
By (11.4) for $\tau[x]$ we have $v \in \tau_1^*[u_1] \circ \tau_2^*[u_2]$ which gives
$v = v_1 \circ v_2 \in \tau^*_1[u_1] \circ \tau^*_2[u_2]$ as desired.

Finally we show (11.5) for the pair type.  Consider $\rho_1,\rho_2 \in \convalue{\Gamma}$ with
$\rho_1 \preceq \rho_2$.  We must show
$$\semvalue{\Sigma_{\intype{x\;}{\;\sigma}}\;\tau[x]}\rho_1 \preceq \semvalue{\Sigma_{\intype{x\;}{\;\sigma}}\;\tau[x]}\rho_2.$$
Define $\sigma^*_1$, $\tau_1^*[u]$, $\sigma_2^*$ and $\tau_2^*[u]$ as before.
We have $\intype{\rho_2}{\eta}$ for some structure templates $\eta$.
Let ${\mathcal T}$ denote $\tempvalue{\sigma}\eta$ and let ${\mathcal S}$ denote
$\subtempvalue{\tau[x]}\eta[x \leftarrow {\mathcal T}]$.
By the definition of $\preceq$ on classes (figure~\ref{fig:classes}) we must show
\begin{eqnarray*}
  & & \{(u_1,v_1)@({\mathcal T} \times {\mathcal S}),\;u_1 \in \sigma^*_1, \;v_1 \in \tau^*_1[u_1]\} \\
  & = &  \{(u_2,v_2)@({\mathcal T} \times {\mathcal S}),\;u_2 \in \sigma^*_2, \;v_2 \in \tau^*_2[u_2]\}.
\end{eqnarray*}
We note that in the case that the pair type is a set value this equality also suffices by virtue of (8.7) and the fact that in that
case we have $(u_2,v_2)@({\mathcal T} \times {\mathcal S}) = (u_2,v_2)$.

We will show containment in both directions.  First consider $u_1 \in \sigma_1^*$ and $v_1 \in
\tau^*_1[u_1]$.  We must show that $(u_1,v_1)@({\mathcal T}\times {\mathcal S})$ is defined and is in
$\semvalue{\Sigma_{\intype{x\;}{\;\sigma}}\;\tau[x]}\rho_2$.  By (11.5) for $\sigma$ we have
$\sigma_1^* \preceq \sigma_2^*$ and therefore that $u_1@{\mathcal T} \in \sigma_2^*$.  By (11.5) for
$\tau[x]$ we have $\tau_1[u_1] \preceq \tau_2[u_1@{\mathcal T}_1]$.  By (11.6) for $\tau[x]$ we also
have that that ${\mathcal S}$ is an interface template for $\tau_2[u_1@{\mathcal T}]$.  These two facts
together give that $v_1@{\mathcal S} \in \tau_2^*[u_1@{\mathcal T}_1]$ which gives
$(u_1,v_1)@({\mathcal T} \times {\mathcal S}) \in \semvalue{\Sigma_{\intype{x\;}{\;\sigma}}\;\tau[x]}\rho_2$ as desired.

Finally consider $u_2 \in \sigma^*_2$ and $v_2 \in \tau^*_2[u_2]$.  We must show that there exists
$u_1 \in \sigma_1^*$ and $v_1 \in \tau^*_1[u_1]$ with $u_1@{\mathcal T} = u_2@{\mathcal T}$ and
$v_1@{\mathcal S} = v_2@{\mathcal S}$.  But, as previously noted, we have $\sigma^*_1 \preceq
\sigma_2^*$ which implies that there exists $u_1 \in \sigma_1^*$ such that $u_1@{\mathcal T} =
u_2@{\mathcal T}$.  We also have
$$\tau_1^*[u_1] \preceq \tau_2^*[u_1@{\mathcal T}] = \tau^*_2[u_2@{\mathcal T}]$$
$$v_2 \in \tau_2^*[u_2] \preceq \tau_2^*[u_2@{\mathcal T}]$$
$$v_2@{\mathcal S} \in \tau_2^*[u_2@{\mathcal T}]$$
$$\tau_1[u_1] \preceq \tau_2^*[u_2@{\mathcal T}]$$
The last two conditions imply that there exists
$v_1 \in \tau^*_1[u_1]$ with $v_1@{\mathcal S} = v_2@{\mathcal S}$ as desired.

$\bullet$ $\Pi\deppair$. In the system presented here we only allow set-level dependent function
types.  For $\semvalue{\Pi\deppair}$ to be defined we must have $\Gamma \models
\intype{\sigma}{\sett}$ and $\Gamma;\;\intype{x}{\sigma} \models \intype{\tau[x]}{\sett}$.  To show
(11.3) we show that $\Pi_{\intype{x\;}{\;\sigma}}\;\tau[x]$ is a set value.  Consider $\rho \in
\convalue{\Gamma}$ and let $\eta$ be a structure template such that $\intype{\rho}{\eta}$.  Define
$\sigma^*$ and $\tau^*[u]$ in terms of $\rho$ as usual.  Let ${\mathcal T}$ be
$\mem(\tempvalue{\sigma}\eta)$ and ${\mathcal S}$ be $\mem(\subtempvalue{\tau[x]}\eta[x \leftarrow {\mathcal
    T}])$.  By (11.6) for $\sigma$ we have $\intype{u}{\mathcal T}$ for all $u \in \sigma^*$. By (11.6)
for $\tau[x]$ we then have $\intype{v}{\mathcal S}$ for all $v \in \tau^*[u]$.  This implies that for
every function $f \in \semvalue{\Pi\deppair}\rho$ we have $\intype{f}{{\mathcal T} \rightarrow {\mathcal
    S}}$.  This implies $\intype{\semvalue{\Pi\deppair}\rho}{\setof({\mathcal T} \rightarrow {\mathcal
    S})}$.  We have now established that that $\semvalue{\Pi\deppair}\rho$ is a weak value.
By definition every member of $\semvalue{\Pi_{\intype{x\;}{\;\sigma}}\;\tau[x]}\rho$ is a function with domain
$\sigma$ and hence every member of
$\semvalue{\Pi_{\intype{x\;}{\;\sigma}}\;\tau[x]}\rho$ is a value.

We must also show that $\semvalue{\Pi\deppair}\rho$ is bijective.
Consider function $f$ and $g$ in this type with $\leftop(f) = \leftop(g)$.
We will show that in this case $f=g$.  Consider a pair $(u \mapsto v) \in f$. Note that we have
$v \in \tau^*[u]$.  Since $\leftop(g) = \leftop(f)$ we must have $(\leftop(u) \mapsto \leftop(v)) \in \leftop(g)$.
Because of the bijectivity of $\sigma^*$ this implies that $g$ must contain a pair of
the form $u \mapsto v'$ with $v' \in \tau^*[u]$ and $\leftop(v') = \leftop(v)$.  But by the
bijectivity of $\tau^*[u]$ we then have $v' = v$ and hence $g$ also contains the pair $u \mapsto v$.
But this implies $f=g$.

Property (11.6) for the function type follows from
$\intype{\semvalue{\Pi\deppair}\rho}{\setof({\mathcal T} \rightarrow {\mathcal S})}$ proved above.

We now show (11.4) for the function type. Consider $\rho_1,\rho_2 \in \convalue{\Gamma}$ with
$\rho_1 \circ \rho_2$ defined.  Define $\sigma^*_1$ and $\tau^*_1[u]$ in terms of $\rho_1$ and
$\sigma^*_2$ and $\tau^*_2[u]$ in terms of $\rho_2$ as usual.  We must show
\begin{eqnarray*}
  & & \semvalue{\Pi_{\intype{x\;}{\;\sigma}}\;\tau[x]}(\rho_1 \circ \rho_2) \\
  & = & (\semvalue{\Pi_{\intype{x\;}{\;\sigma}}\;\tau[x]}\rho_1) \circ (\semvalue{\Pi_{\intype{x\;}{\;\sigma}}\;\tau[x]}\rho_2).
\end{eqnarray*}
We will show containment in both
directions. Consider a function $f \in \semvalue{\Pi_{\intype{x\;}{\;\sigma}}\;\tau[x]}(\rho_1 \circ \rho_2)$.
By (11.4) for $\sigma$ we have $\domop(f) = \sigma_1^* \circ \sigma_2^*$.
For $(u \mapsto v) \in f$ we then have $u = u_1 \circ u_2$ with $u_1 \in \sigma_1^*$ and $u_2 \in \sigma_2^*$.  By (11.4) for $\tau[x]$ we
have $v \in \tau^*_1[u_1] \circ \tau^*_2[u_2]$.  This implies that $v = v_1 \circ v_2$ for some $v_1
\in \tau^*_1[u_1]$ and $v_2 \in \tau^*_2[u_2]$.  We then have that every pair $(u \mapsto v) \in f$
can be written as $(u_1 \circ u_2) \mapsto (v_1 \circ v_2)$.  The bijectivity of $\sigma_1^*$ and
$\sigma_2^*$ and of $\tau_1^*[u_1]$ and $\tau_2^*[u_2]$ imply that this decomposition is unique.
The set of pairs $u_1 \mapsto v_1$ arising from this decomposition defines a function $f_1$ and
similarly for $f_2$ and we then have $f = f_1 \circ f_2$ with $f_1 \in \semvalue{\Pi\deppair}\rho_1$
and $f_2 \in \semvalue{\Pi\deppair}\rho_1$ as desired.  Containment in the reverse direction is
similarly straightforward.

We now show (11.5) for the function type.  Consider $\rho_1,\rho_2 \in \convalue{\Gamma}$ with $\rho_1 \preceq \rho_2$.
We must show.
$$\semvalue{\Pi\deppair}\rho_1 \preceq \semvalue{\Pi\deppair}\rho_2.$$ Define $\sigma_1^*$ and
$\tau_1^*[u_1]$ in terms of $\rho_1$ and $\sigma_2^*$ and $\tau^*_2[u_2]$ in terms of $\rho_2$ as
usual.  Consider $\eta$ such that $\intype{\rho_2}{\eta}$. We have that $\rho_1 \preceq \rho_2$ implies
$\rho_1@\eta = \rho_2$.  Let ${\mathcal T}$ denote $\mem(\tempvalue{\sigma}\eta)$ and ${\mathcal S}$ denote
$\mem(\subtempvalue{\tau[x]}\eta[x \leftarrow {\mathcal T}])$.  By (8.7) it now suffices to show that
$$(\semvalue{\Pi\deppair}\rho_1)@\setof({\mathcal T} \rightarrow {\mathcal S}) = \semvalue{\Pi\deppair}\rho_2.$$
We will show containment in both direction.

We first note that by (11.5) for $\sigma$ we have $\sigma_1^* \preceq \sigma_2^*$.  By (11.5) for
$\tau[x]$ we have that for $u_1 \in \sigma_1^*$ and $u_2 \in \sigma_2^*$ with $u_1 \preceq u_2$ we
have $\tau_1^*[u_1] \preceq \tau_2^*[u_2]$.  First consider $f_1 \in \semvalue{\Pi\deppair}\rho_1$.
For $(u_1 \mapsto v_1) \in f_1$ we have $u_1@{\mathcal T} \in \sigma_2^*$.  We also have
$v_1 \in \tau^*[u_1] \preceq \tau_2^*[u_1@{\mathcal T}]$.  This implies
$v_1@{\mathcal S} \in \tau_2^*[u_1@{\mathcal T}]$.  This implies that $f@({\mathcal T} \rightarrow {\mathcal S})$ is defined.  Since the abstraction
of a value is a value we have that $f@({\mathcal T} \rightarrow {\mathcal S})$ is functional and we have
$f@({\mathcal T} \rightarrow {\mathcal S}) \in \semvalue{\Pi\deppair}\rho_2$.

For the reverse direction consider $f_2 \in \semvalue{\Pi\deppair}\rho_2$.  We must show that $f_2 =
f_1@({\mathcal T} \rightarrow {\mathcal S})$ for some $f_1 \in \semvalue{\Pi\deppair}\rho_1$.  To construct
the function $f_1$ we must select some $v_1 \in \tau^*_1[u_1]$ for each $u_1 \in \sigma^*_1$.  For
each such $u_1$ we have $u_1@{\mathcal T} \in \sigma^*_2$ and hence $f_2(u_1@{\mathcal T})$ is defined and
is in $\tau^*_2[u_1@{\mathcal T}]$.  But we have $\tau^*_1[u_1] \preceq \tau^*[u_1@{\mathcal T}]$ which by
lemma~\ref{lem:setdown} implies that there exists a unique $v_1 \in \tau^*_1[u_1]$ with $v_1@{\mathcal
  S} = f_2(u_1@{\mathcal T})$.  So we can take $f_1$ to map $u_1$ to this $v_1$ and we then get that
$f_1@({\mathcal T} \rightarrow {\mathcal S}) = f_2$.

$\bullet$ $\lambda\;\intype{x}{\sigma}\;e[x]$. We first show (11.3) for the lambda expression.
Consider $\rho \in \convalue{\Gamma}$ and let $\sigma^*$ and $e^*[u]$ be defined as usual for
$\rho$.  Let $\eta$ be a structure template such that $\intype{\rho}{\eta}$.  By the semantics of
lambda expressions (clause (9) of figure~\ref{fig:semantics}) we have that $\sigma^*$ is a set.
This implies that we have
$\intype{\sigma^*}{\setof({\mathcal T})}$ for some template ${\mathcal T}$.  Let ${\mathcal S}$ be the template
$\mem(\subtempvalue{e[x]}\eta[x \leftarrow {\mathcal T}])$.  By (11.6) for $e[x]$ we have that for
every $(u \mapsto v) \in \semvalue{\lambda\;\intype{x}{\sigma}\;e[x]}\rho$ we have $\intype{u}{\mathcal
  T}$ and $\intype{v}{\mathcal S}$.  This implies
$\intype{(\semvalue{\lambda\;\intype{x}{\sigma}\;e[x]}\rho)}{({\mathcal T} \rightarrow {\mathcal S})}$
which implies that the function is a value.

Property (11.6) follows from $\intype{(\semvalue{\lambda\;\intype{x}{\sigma}\;e[x]}\rho)}{({\mathcal T}
  \rightarrow {\mathcal S})}$ shown above.

We now prove (11.4).  Consider $\rho_1,\rho_2 \in \convalue{\Gamma}$ with $\rho_1 \circ \rho_2$
defined.  We must show
\begin{eqnarray*}
  & & \semvalue{\lambda\;\intype{x}{\sigma}\;e[x]}(\rho_1 \circ \rho_2) \\
  & = & (\semvalue{\lambda\;\intype{x}{\sigma}\;e[x]}\rho_1) \circ (\semvalue{\lambda\;\intype{x}{\sigma}\;e[x]}\rho_1).
\end{eqnarray*}
Let $\sigma_1^*$ and $e^*_1[u_1]$ be defined in terms of $\rho_1$ and $\sigma^*_2$ and $e^*_2[u_2]$ be defined in terms of $\rho_2$ as usual.  By
(11.4) for $\sigma$ we have $\semvalue{\sigma}(\rho_1 \circ \rho_2) = \sigma_1^* \circ \sigma_2^*$.
By the bijectivity of $\sigma_1^*$ and $\sigma_2^*$ every element $u$ of $\sigma_1^* \circ
\sigma_2^*$ has a unique factoring $u = u_1 \circ u_2$ with $u_1 \in \sigma_1^*$ and $u_2 \in
\sigma_2^*$.  By (11.4) for $e[x]$ we then have that
$\semvalue{\lambda\;\intype{x}{\sigma}\;e[x]}(\rho_1 \circ \rho_2)$ is the set of pairs of the form
$u_1 \circ u_2 \mapsto e^*_1[u_1] \circ e^*[u_2]$.  But this is the same as
$(\semvalue{\lambda\;\intype{x}{\sigma}\;e[x]}\rho_1) \circ (\semvalue{\lambda\;\intype{x}{\sigma}\;e[x]}\rho_1)$.

To show (11.5) consider $\rho_1,\rho_2 \in \convalue{\Gamma}$ with $\rho_1 \preceq \rho_2$.  Let
$\sigma_1^*$ and $e_1^*[u_1]$ be defined in terms of $\rho_1$ and let $\sigma_2^*$ and $e_2^*[u_2]$
be defined in terms of $\rho_2$ as usual.  Let $f_1^*$ abbreviate
$\semvalue{\lambda\;\intype{x}{\sigma}\;e[x]}\rho_1$ and $f_2^*$ be defined similarly in terms of
$\rho_2$.  As shown above we have $\intype{f_2^*}{({\mathcal T} \rightarrow {\mathcal S})}$ for some
templates ${\mathcal T}$ and ${\mathcal S}$. By (8.7) it suffices to show $f_1^*@({\mathcal T} \rightarrow {\mathcal S}) =
f_2^*$.  By (11.5) for $\sigma$ we have $\sigma^*_1@\setof({\mathcal T}) = \sigma_2^*$.  This implies
that for $u_1 \in \sigma_1^*$ we have that $u_1@{\mathcal T} \in \sigma_2^*$ which implies that
$\intype{e_2^*[u_1@{\mathcal T}]}{{\mathcal S}}$.  But we have $u_1 \preceq u_1@{\mathcal T}$ and by (11.5)
for $e[x]$ we have $e_1^*[u_1] \preceq e_2^*[u_1@{\mathcal T}]$.  We then have $e_1^*[u_1]@{\mathcal S}
= e_2^*[u_1@{\mathcal T}]$.  We now have that $f_1^*@({\mathcal T} \rightarrow {\mathcal S})$ equals the
set of pairs of the form $u_1@{\mathcal T} \rightarrow e^*_2[u_1@{\mathcal T}]$ which is the same as the
set of pairs in $f_2^*$.

\section{Soundness for Figure~\ref{fig:IsoRules}}

We first show that $\carrier(\sigma,\gamma,f,(\lambda\;\intype{\alpha}{\sett}\;\tau[\alpha]))$
satisfies evaluation invariants (11.3) through (11.6).  For (11.6) we must expand the definition of
template evaluation with
\begin{eqnarray*}
  & & \tempvalue{\carrier(\sigma,\gamma,f,(\lambda\;\intype{\alpha}{\sett}\;\tau[\alpha]))}\eta \\
  & = & \mem(\tempvalue{\tau[\sigma]}\eta) \rightarrow \mem(\tempvalue{\tau[\gamma]}\eta).
\end{eqnarray*}
To show (11.4)
we consider $\rho_1,\rho_2 \in \convalue{\Gamma}$ with $\rho_1 \circ \rho_2$ defined.  Define
$\sigma_1^*$, $\gamma_1^*$, $f_1^*$ and $\tau_1^*[s]$ as usual in terms of $\rho_1$ and
define $\sigma_2^*$, $\gamma_2^*$, $f_2^*$ and $\tau_2^*[s]$ similarly in terms of $\rho_2$.  We then have
the following.
\begin{eqnarray*}
  & & \semvalue{\carrier(\sigma,\gamma,f,(\lambda\;\intype{\alpha}{\sett}\;\tau[\alpha]))}(\rho_1 \circ \rho_2) \\
  & = & C(\tau_1^*[\sigma_1^*] \circ  \tau_2^*[\sigma_2^*],\;W^\inv,\;\tau_1^*[\gamma_1^*] \circ \tau_2^*[\gamma_2^*]) \\ \\
  W & = &  \subvalue{\Gamma;\;\intype{\alpha\;}{\;\sett}}{\tau[\alpha]}(\rho_1 \circ \rho_2)[\alpha \leftarrow Y(f_1^* \circ f_2^*)]
\end{eqnarray*}

We now observe the following where $u_1$ ranges over elements of $\sigma_1^*$ and $u_2$ ranges over
elements of $\sigma_2^*$.
\begin{eqnarray*}
  & & Y(f_1^* \circ f_2^*) \\
  & = & \left\{\begin{array}{l} \pointt\left(\begin{array}{l} \mathrm{Lindex}((f_1^*\circ f_2^*)(u_1 \circ  u_2)@\pointt), \\
                                                              \mathrm{Rindex}((u_1 \circ u_2)@\pointt)\end{array}\right), \\
                                u_1\circ u_2\;\mbox{defined}\end{array}\right\} \\
  & = & \left\{\begin{array}{l} \pointt(\mathrm{Lindex}(f_1(u_1)@\pointt),\;\mathrm{Rindex}(u_2@\pointt)\;), \\
                                        u_1\circ   u_2\;\mbox{defined} \end{array}\right\} \\
  & = & Y(f_1)\circ \sigma_2^*@\setof(\pointt)
\end{eqnarray*}
We then have
\begin{eqnarray*}
  & & \semvalue{\carrier(\sigma,\gamma,f,(\lambda\;\intype{\alpha}{\sett}\;\tau[\alpha]))}(\rho_1  \circ \rho_2) \\ \\
  & = & C\left(\begin{array}{l} \tau_1^*[\sigma_1^*] \circ  \tau_2^*[\sigma_2^*], \\
                                \tau_2^*[\sigma_2^*@\setof(\pointt)]^\inv \circ \tau_1^*[Y(f_1^*)]^\inv, \\
                                \tau_1^*[\gamma_1^*] \circ \tau_2^*[\gamma_2^*]\end{array}\right)
\end{eqnarray*}
Let ${\mathcal T}$ be a template such that $\intype{\tau_1^*[Y(f_1^*)]}{\mathcal T}$.  We now have that
$$C(\tau_1^*[\sigma_1^*] \circ \tau_2^*[\sigma_2^*],\;W^\inv,\;\tau_1^*[\gamma_1^*] \circ
\tau_2^*[\gamma_2^*])$$ is the set of pairs $u_1 \circ u_2 \mapsto v_1 \circ v_2$ such that
$$(u_1 \circ u_2)@{\mathcal T} \circ u_2@{\mathcal T}^\inv \circ w^\inv \circ (v_1 \circ v_2)@{\mathcal T}$$ is
defined with $w \in \tau_1^*[Y(f_1)^*]$.  This is the same as the set of pairs $u_1 \circ u_2
\mapsto v_1 \circ v_2$ such that $u_1 \mapsto v_1$ is a pair of
$$C(\tau_1^*[\sigma_1^*],\tau_1^*[Y(f_1^*)]^\inv, \tau_1[\gamma_1^*]).$$
But it is also possible to show that $Y(f_1^* \circ
f_2^*) = \gamma_1^*@\setof(\pointt) \circ Y(f_2)$.  By a similar argument we then have that
$$C(\tau_1^*[\sigma_1^*] \circ \tau_2^*[\sigma_2^*],\;W^\inv,\;\tau_1^*[\gamma_1^*] \circ
\tau_2^*[\gamma_2^*])$$ is the set of pairs $u_1 \circ u_2 \mapsto v_1 \circ v_2$ such that $u_2
\mapsto v_2$ is a pair of $C(\tau_2^*[\sigma_2^*], \tau_2^*[Y(f_2^*)]^\inv, \tau_2[\gamma_2^*])$.
So we now have
\begin{eqnarray*}
  & & C(\tau_1^*[\sigma_1^*] \circ \tau_2^*[\sigma_2^*],\;W^\inv,\;\tau_1^*[\gamma_1^*] \circ \tau_2^*[\gamma_2^*]) \\
  & = & C(\tau_1^*[\sigma_1^*], \tau_1^*[Y(f_1^*)]^\inv, \tau_1[\gamma_1^*]) \\
  & & \circ C(\tau_2^*[\sigma_2^*], \tau_2^*[Y(f_2^*)]^\inv, \tau_2[\gamma_2^*])
\end{eqnarray*}
as desired.

Next we consider (11.5). Consider $\rho_1,\rho_2 \in \convalue{\Gamma}$ with $\rho_1 \preceq
\rho_2$.  Define $\sigma_1^*$, $\gamma_1^*$, $f_1^*$ and $\tau_1^*[s]$ as usual in terms of $\rho_1$
and $\sigma_2^*$, $\gamma_2^*$, $f_2^*$ and $\tau_2^*[s]$ similarly in terms of $\rho_2$.  We first
note that $C(X,Y,Z) = C(X,Y@\setof(\pointt),Z)$.  To see this consider ${\mathcal T}$ with
$\intype{Y}{\setof({\mathcal T})}$.  We have that $X@\setof({\mathcal T}) \circ Y \circ Z@\setof({\mathcal T})$
is defined and for $u \in X@\setof({\mathcal T})$ we have that there exists a unique $w \in Y$ and $v \in Z@\setof({\mathcal T})$
with $u@{\mathcal T} \circ y \circ v@{\mathcal T}$ defined.  But by (8.11) and (8.12) this is defined if
and only if $x@\pointt \circ y@\pointt \circ z@\pointt$ is defined and hence
$C(X,Y@\setof(\pointt),Z)$ is the same function from $X$ to $Z$ as $C(X,Y,Z)$.  Given this
observation it suffices to show that for $X_1 \preceq X_2$ and $Z_1 \preceq Z_2$ and $Y$ a set of
points we have $C(X_1,Y,Z_1) \preceq C(X_2,Y,Z_2)$.

We establish $C(X_1,Y,Z_1) \preceq C(X_2,Y,Z_2)$ for a point set $Y$ using (8.7) by showing that
for $\intype{C(X_2,Y,Z_2)}{({\mathcal T} \rightarrow {\mathcal S})}$ we have
$$C(X_1,Y,Z_1)@({\mathcal T} \rightarrow {\mathcal S}) = C(X_2,Y,Z_2).$$
By lemma~\ref{lem:setdown} it now suffices to show
that for $(u \mapsto v) \in C(X_1,Y,Z_1)$ we have $(u@{\mathcal T} \mapsto v@{\mathcal S}) \in C(X_2,Y,Y_2)$.
But (8.11) and (8.12) imply that for $y \in Y$ we have that
$u@\pointt \circ y \circ v@\pointt$ is defined if and only if $u@{\mathcal T}@\pointt \circ y \circ v@{\mathcal S}@\pointt$ is defined
and the result follows.

Next we consider the rule

{\small
\centerline{\unnamed {\ant{\Gamma \rdash \intype{\sigma,\gamma}{\mathbf{Set}},
    \intype{f}{\mathbf{Bijection}[\sigma,\gamma]}} \ant{\Gamma;\;\intype{\alpha}{\sett} \rdash
    \intype{\tau[\alpha]}{\sett}}} {\ant{\Gamma \rdash \forall
    \;\intype{x}{\tau[\sigma]}\;\;(\sigma,x)
    \;=_{\Sigma_{\intype{\alpha\;}{\;\sett}}\;\tau[\alpha]}\;
    (\gamma,\mathbf{Carrier}(\sigma,\gamma,f,(\lambda\;\intype{\alpha}{\sett}\;\tau[\alpha]))(x))}}
}}

\medskip
Consider $\rho \in \convalue{\Gamma}$.  Define $\sigma^*$, $\gamma^*$, $f^*$ and $\tau^*[s]$ as
usual in terms of $\rho$.  We will also write $g^*$ for
$$\semvalue{\mathbf{Carrier}(\sigma,\gamma,f,(\lambda\;\intype{\alpha}{\sett}\;\tau[\alpha]))}\rho.$$
We then have
$$g^* = C(\tau^*[\sigma^*],\;\tau^*[Y(f^*)]^\inv, \tau^*[\gamma^*]).$$ Now consider $u \in
\tau^*[\sigma^*]$.  We must show
$$(\sigma^*,u)
=_{\semvalue{\Sigma_{\intype{\alpha\;}{\;\sett}}\;\tau[\alpha]}\rho}\;\;(\gamma^*,g^*(u)).$$ Let
$\setof(\pointt) \times {\mathcal T}$ be the interface for
$\semvalue{\Sigma_{\intype{\alpha\;}{\;\sett}}\;\tau[\alpha]}\rho$.  By the definition of $g^*$ we
have that there exists $y \in \tau^*[Y(f^*)]$ with $u@{\mathcal T} \circ y^\inv \circ g^*(u)@{\mathcal T}$
defined.  We also have that $\sigma@\setof(\pointt) \circ Y(f^*)^\inv \circ
\gamma^*@\setof(\pointt)$ is defined.  This implies that
$$\begin{array}{l} (\sigma^*,u)@(\setof(\pointt) \times {\mathcal T}) \circ (Y(f^*),y)^\inv \\
    \circ \;(\gamma^*,g^*(u))@(\setof(\pointt) \times {\mathcal T})
\end{array}$$
is defined.  We also have $(Y(f^*),y) \in
\semvalue{\Sigma_{\intype{\alpha\;}{\;\sett}}\;\tau[\alpha]}\rho$ which proves the result.

Next we consider the rule

~ \hfill
\unnamed {\ant{\Gamma \rdash \intype{\sigma,\gamma}{\mathbf{Set}},
    \intype{f}{\mathbf{Bijection}[\sigma,\gamma]}}} {\ant{\Gamma \rdash
    \mathbf{Carrier}(\sigma,\gamma,f,(\lambda \intype{\alpha}{\sett}\;\alpha)) \doteq f}} \hfill ~

\medskip
Consider $\rho$ in $\convalue{\Gamma}$ and let $\sigma^*$, $\gamma^*$, $f^*$ and $\tau^*[s]$ be
defined as usual for $\rho$.  We have
$$\mathbf{Carrier}(\sigma,\gamma,f,(\lambda \intype{\alpha}{\sett}\;\alpha)) =
C(\sigma^*,\;Y(f^*)^\inv,\;\gamma^*).$$ This is defined to be the function $g$ from $\sigma^*$ to
$\gamma^*$ such that for all $u \in \sigma^*$ there exists $y \in Y(f^*)$ such that $u@\pointt \circ
y^\inv \circ g(u)@\pointt$ is defined.  But the definition of $Y(f^*)$ yields that this property
holds for $f^*$ which proves the result.

Next we consider

~ \hfill
\unnamed {\ant{\Gamma \rdash \intype{\sigma,\gamma}{\mathbf{Set}},
    \intype{f}{\mathbf{Bijection}[\sigma,\gamma]}} \ant{\Gamma \rdash \intype{w}{\sett}}}
       {\ant{\Gamma \rdash \mathbf{Carrier}(\sigma,\gamma,f,(\lambda \intype{\alpha}{\sett}\;w))
           \doteq (\lambda \intype{x}{w}\;x)}} \hfill ~

\medskip
Consider $\rho$ in $\convalue{\Gamma}$ and let $\sigma^*$, $\gamma^*$, $f^*$ and $w^*$ be defined as
usual for $\rho$.  We have
$$\mathbf{Carrier}(\sigma,\gamma,f,(\lambda \intype{\alpha}{\sett}\;w)) =
C(w^*,\;(w^*)^\inv,\;w^*).$$ This is defined to be the function $g$ from $w^*$ to $w^*$ such that
for all $u \in w^*$ there exists $y \in w^*$ such that $u \circ y^\inv \circ g(u)$ is defined.  But
this implies that $g$ is the identity function on $w^*$.

\medskip
Next we consider

~ \hfill
\unnamed {\ant{\Gamma
    \rdash\mathbf{Carrier}(\sigma,\gamma,f,(\lambda\;\intype{\alpha}{\sett}\;\tau[\alpha])) \doteq
    g} \ant{\Gamma
    \rdash\mathbf{Carrier}(\sigma,\gamma,f,(\lambda\;\intype{\alpha}{\sett}\;\kappa[\alpha])) \doteq
    h}} {\ant{\Gamma \rdash
    \begin{array}{l}
      \mathbf{Carrier}(\sigma,\gamma,f,(\lambda\intype{\alpha}{\sett}\;\tau[\alpha] \times
      \kappa[\alpha])) \\ \doteq \lambda\intype{x}{(\tau[\sigma] \times
        \kappa[\sigma])}\;(g(\pi_1(x)),h(\pi_2(x)))
    \end{array}
    }} \hfill ~

\medskip
Consider $\rho$ in $\convalue{\Gamma}$ and let $\sigma^*$, $\gamma^*$, $f^*$, $\tau^*[s]$ and
$\kappa^*[s]$ be defined as usual for $\rho$.  We will also use the following abbreviations.
\begin{eqnarray*}
g^* & = & \semvalue{\mathbf{Carrier}(\sigma,\gamma,f,(\lambda
  \intype{\alpha}{\sett}\;\tau[\alpha]))}\rho \\ h^* & = &
\semvalue{\mathbf{Carrier}(\sigma,\gamma,f,(\lambda \intype{\alpha}{\sett}\;\kappa[\alpha]))}\rho
\end{eqnarray*}
We have
\begin{eqnarray*}
  & & \mathbf{Carrier}(\sigma,\gamma,f,(\lambda \intype{\alpha}{\sett}\;\tau[\alpha] \times \kappa[\alpha])) \\
  & = & C\left(\begin{array}{l} \tau^*[\sigma^*] \times \kappa^*[\sigma^*], \\
    (\tau^*[Y(f^*)] \times \kappa^*[Y(f^*)])^\inv, \\
    \tau^*[\gamma^*] \times \kappa^*[\gamma^*]\end{array}\right).
\end{eqnarray*}
Let ${\mathcal T}_1$ be a template satisfying $\intype{\tau^*[Y(f^*)]}{\setof({\mathcal T}_1)}$ and let
${\mathcal T}_2$ be a template satisfying $\intype{\kappa^*[Y(f^*)]}{\setof({\mathcal T}_2)}$.  We must show
that for $(u_1,u_2) \in \tau^*[\sigma^*] \times \kappa^*[\sigma^*]$ there exists $(y_1,y_2) \in
\tau^*[Y(f^*)] \times \kappa^*[Y(f^*)]$ such that
$$(u_1,u_2)@({\mathcal T}_1 \times {\mathcal T}_2)\;\; \circ \;\; (y_1,y_2)^\inv \;\; \circ \;\;
(g^*(u_1),h^*(u_2))@({\mathcal T}_1 \times {\mathcal T}_2)$$ is defined.  But this follows from the
definitions of $g^*$ and $h^*$.

Next we consider

~ \hfill
\unnamed {\ant{\Gamma
    \rdash\mathbf{Carrier}(\sigma,\gamma,f,(\lambda\;\intype{\alpha}{\sett}\;\tau[\alpha])) \doteq
    g} \ant{\Gamma
    \rdash\mathbf{Carrier}(\sigma,\gamma,f,(\lambda\;\intype{\alpha}{\sett}\;\kappa[\alpha])) \doteq
    h} \ant{\Gamma \rdash\intype{k}{\tau[\sigma] \rightarrow \kappa[\sigma]}} \ant{\Gamma \rdash
    \intype{a}{\tau[\sigma]}}} {\ant{\Gamma \rdash
    \mathbf{Carrier}(\sigma,\gamma,f,(\lambda\intype{\alpha}{\sett}\;\tau[\alpha] \rightarrow
    \kappa[\alpha]))(k)(g(a)) \doteq h(k(a))}} \hfill ~

\medskip
Consider $\rho$ in $\convalue{\Gamma}$ and let $\sigma^*$, $\gamma^*$, $f^*$, $\tau^*[s]$,
$\kappa^*[s]$, $g^*$ and $h^*$ be defined for $\rho$ as in the proof of the previous rule.  Let
$G^*$ be the unique functional from $\tau^*[\sigma^*] \rightarrow \kappa^*[\sigma^*]$ to
$\tau^*[\gamma^*] \rightarrow \kappa^*[\gamma^*]$ satisfying $G^*(k)(g^*(u)) = h^*(k(u))$.  We have
\begin{eqnarray*}
  & & \mathbf{Carrier}(\sigma,\gamma,f,(\lambda \intype{\alpha}{\sett}\;\tau[\alpha] \rightarrow \kappa[\alpha])) \\
  & = & C\left(\begin{array}{l} \tau^*[\sigma^*] \rightarrow \kappa^*[\sigma^*], \\
    (\tau^*[Y(f^*)] \rightarrow \kappa^*[Y(f^*)])^\inv, \\
    \tau^*[\gamma^*] \rightarrow \kappa^*[\gamma^*]\end{array}\right).
\end{eqnarray*}
Let ${\mathcal T}_1$ be a template satisfying $\intype{\tau^*[Y(f^*)]}{\setof({\mathcal T}_1)}$ and let
${\mathcal T}_2$ be a template satisfying $\intype{\kappa^*[Y(f^*)]}{\setof({\mathcal T}_2)}$.  We must show
that for $k \in \tau^*[\sigma^*] \rightarrow \kappa^*[\sigma^*]$ there exists $\tilde{k} \in
\tau^*[Y(f^*)] \rightarrow \kappa^*[Y(f^*)]$ such that
$$k@({\mathcal T}_1 \rightarrow {\mathcal T}_2) \;\; \circ \;\; \tilde{k}^\inv \;\; \circ \;\; G^*(k)@({\mathcal
  T}_1 \rightarrow {\mathcal T}_2)$$ is defined.  For two sets $s$ and $w$ we find it clearer here
to write $(a \mapsto b) \in (s \times \tau)$ as an alternative notation for $(a,b) \in (s \times \tau)$.
The above requirement is equivalent to the statement that for every input-output pair $u \mapsto v$ of $k$ there exists an input-output pair $(x \mapsto y) \in (\tau^*[Y(f^*] \times \kappa^*[Y(f^*)])$
such that
$$u \mapsto v@({\mathcal T}_1 \times {\mathcal T}_2) \;\; \circ \;\; (x \mapsto y)^\inv \;\; \circ \;\; (g^*(u) \mapsto h^*(v))@({\mathcal T}_1 \times {\mathcal T}_2)$$
is defined. But, as in the case of pairs, this follows from the definitions of $g^*$ and $h^*$.

Finally we consider

\centerline{
\unnamed
{\ant{\Gamma \vdash \intype{a,b}{S_{\intype{x\;}{\;\sigma}} \Phi[x]}}
  \ant{\Gamma \vdash a =_\sigma b}}
{\ant{\Gamma \vdash a =_{\left(S_{\intype{x\;}{\;\sigma}}\;\Phi[x]\right)}\;\;b}}
}

\medskip
Consider $\rho \in \convalue{\Gamma}$.  Let $a^*$, $b^*$, $\sigma^*$ and $\Phi^*[u]$ be defined as usual in terms of $\rho$.
We must show that there exists $z \in \sigma^*$ such that $\Phi^*[z]$ is true and such that $a^* \circ z^\inv \circ b^*$ is defined.
But the second antecedent implies that there exists $z \in \sigma^*$ such that $a^* \circ z^\inv \circ b^*$ is defined.
But we then have that $a^* \circ z^\inv \circ z$ is also defined and hence $a^* =_{\sigma^*} z$ and by the substitution of isomorphics we have $\Phi^*[z]$.


\begin{thebibliography}{1}

\bibitem{Brouwer}
L.E.J. Brouwer.
\newblock On the significance of the principle of excluded middle in
  mathematics, especially in function theory.
\newblock In J.~van Heijenoort, editor, {\em A Source Book in Mathematical
  Logic, 1879-1931}. Harvard University Press, 1977.

\bibitem{GRPD}
Martin Hofmann and Thomas Streicher.
\newblock The groupoid model refutes uniqueness of identity proofs.
\newblock In {\em Logic in Computer Science, 1994. LICS'94. Proceedings.,
  Symposium on}, pages 208--212. IEEE, 1994.

\bibitem{HOTT}
HoTT-Authors.
\newblock Homotopy type theory, univalent foundations of mathematics.
\newblock
  http://hottheory.files.wordpress.com/2013/03/hott-online-611-ga1a258c.pdf,
  2013.

\bibitem{SimpSets}
Chris Kapulkin, Peter~LeFanu Lumsdaine, and Vladimir Voevodsky.
\newblock The simpicial model of univalent foundations.
\newblock {\em CoRR}, abs/1211.2851, 2012.

\bibitem{MLTT1971}
Per Martin-L{\"o}f.
\newblock A theory of types, 1971.

\bibitem{Rota}
G.C. Rota.
\newblock {\em Indiscrete Thoughts}.
\newblock Birkhäuser Boston, Inc., 1997.

\end{thebibliography}
\end{document}